\definecolor{red}{rgb}{1,0,0}
\definecolor{green}{rgb}{0,1,0}
\definecolor{SeaGreen}{RGB}{46,139,87}
\definecolor{Maroon}{RGB}{128,0,0}
\newcommand{\C}{{\mathbb{C}}}
\newcommand{\R}{{\mathbb{R}}}
\newcommand{\A}{\mathcal A}
\newcommand{\B}{\mathcal B}
\newcommand{\CC}{\mathcal C}
\newcommand*{\at}{@}\makeatother
\def\Dg {{\mathcal D}}
\def\Hg {{\mathcal H}}
\def\Jg {{\mathcal J}}
\newcommand{\OO}{\mathcal O}
\def\Sg {{\mathcal S}}
\def\Ug {{\mathcal U}}
\def\curl{\text{\rm curl}}
\def\curl{\text{\rm curl\,}}
\def\Div{\text{\rm div\,}}
\def\loc{\text{\rm loc}}
\renewcommand{\Im}{{\rm Im\,}}
\newcommand {\pa}{\partial}
\def\O{\Omega}
\def\q{\quad}
\def\qq{\qquad}
\def\k{\kappa}
\def\0{\mathbf  0}
\newcommand{\eq}{\begin{equation}}
\newcommand{\eeq}{\end{equation}}
\def\Xint#1{\mathchoice
{\XXint\displaystyle\textstyle{#1}}%
{\XXint\textstyle\scriptstyle{#1}}%
{\XXint\scriptstyle\scriptscriptstyle{#1}}%
{\XXint\scriptscriptstyle\scriptscriptstyle{#1}}%
\!\int}
\def\XXint#1#2#3{{\setbox0=\hbox{$#1{#2#3}{\int}$ }
\vcenter{\hbox{$#2#3$ }}\kern-.6\wd0}}
\def\dashint{\Xint-}
\numberwithin{equation}{section}
\theoremstyle{plain}
\newtheorem{theorem}{Theorem}[section]
\newtheorem{lemma}[theorem]{Lemma}
\newtheorem{proposition}[theorem]{Proposition}
\newtheorem{remark}[theorem]{Remark}
\newtheorem{corollary}[theorem]{Corollary}
\newtheorem{conjecture}[theorem]{Conjecture}
\begin{document}
\date{}
\title{Mixed normal-superconducting states in the presence of strong
  electric currents}
\author{Yaniv Almog \thanks{Department
    of Mathematics, Louisiana State University,
    Baton Rouge, LA 70803, USA. \newline almog \at math.lsu.edu}, Bernard Helffer
   \thanks{Laboratoire de Math\'ematiques, Universit\'e Paris-Sud 11 and
     CNRS, B\^at 425, 91 405 Orsay Cedex, France
 and Laboratoire de Math\'ematiques Jean Leray, Universit\'e de
 Nantes.  \newline Bernard.Helffer\at math.u-psud.fr},
 and Xing-Bin Pan \thanks{Department of Mathematics,
 East China Normal University, and NYU-ECNU Institute of Mathematical
 Sciences at NYU Shanghai, Shanghai 200062, P.R. China. xbpan\at math.ecnu.edu.cn}}

\maketitle
\large
\bibliographystyle{siam}

\begin{abstract}
  We study the Ginzburg-Landau equations in the presence of large
  electric currents, that are smaller than the critical current where the
  normal state losses its stability. For steady-state solutions in the
  large $\kappa$ limit, we prove that the superconductivity order parameter is
  exponentially small in a significant part of the domain, and small
  in the rest of it. Similar results are obtained for the time-dependent problem, in
  continuation of the paper by the two first authors \cite{alhe14}. We
  conclude by obtaining some weaker results, albeit similar, for
  steady-state solutions in the large domain limit.
\end{abstract}

\tableofcontents

\newpage
\section{Introduction}\label{Section1}
Consider the time-dependent Ginzburg-Landau system of equations
\begin{subequations}
\label{eq:1}
\begin{alignat}{2}
   \frac{\partial\psi}{\partial t}-  \nabla_{\kappa A}^2\psi +i\kappa\phi \psi=\kappa^2(1-|\psi|^2)\psi & \quad \text{ in }(0,+\infty)\times  \Omega \,,\\
 \frac{1}{c}\Big(\frac{\partial A}{\partial t}+\nabla\phi\Big)+ \curl^2A
  =\frac{1}{\kappa}\Im(\bar\psi\nabla_{\kappa A}\psi) &
  \quad \text{ in } (0,+\infty)\times   \Omega\,, \\
  \psi=0 &\quad \text{ on } (0,+\infty)\times  \partial\Omega_c\,, \\
  \nabla_{\kappa A}\psi\cdot\nu=0 & \quad \text{ on } (0,+\infty)\times  \partial\Omega_i\,, \\
  \frac{\partial\phi}{\partial\nu} = - c\kappa J(x) &\quad \text{ on } (0,+\infty)\times \partial\Omega_c\,, \\
  \frac{\partial\phi}{\partial\nu}=0  &\quad \text{ on } (0,+\infty)\times  \partial\Omega_i \,, \\[1.2ex]
  \dashint_{\partial\Omega}\curl A(t,x)  \, ds = \kappa h_{ex} \,,&\quad {\text{ on } (0,+\infty)}\\
  \psi(0,x)=\psi_0(x)  & \quad \text{ in } \Omega\,, \\
A(0,x)=A_0(x) & \quad \text{ in } \Omega \,.
\end{alignat}
\end{subequations}
In the above $\psi$ denotes the order parameter, $A$  the
magnetic potential, $\phi$  the electric potential, $\kappa$
the Ginzburg-Landau parameter, which is a material property, and
$$c=\kappa^2/\sigma$$ where $\sigma$ is the normal conductivity of the
sample. Finally, $h_{ex}$, or the average magnetic field on $\partial\Omega$
divided by $\kappa$, is constant in time. Length has been scaled with
respect to the penetration depth (see \cite{alhe14}). \\
 Unless otherwise stated we
shall assume in the sequel that $$\kappa\geq1\,.$$ We further assume that
$(\psi_0,A_0)\in H^1(\Omega,\C)\times H^1(\Omega,\R^2)$ and that
\begin{equation}\label{condinit}
  \|\psi_0\|_\infty\leq 1 \,.
\end{equation}
We use the notation $\nabla_A=\nabla-iA$, $\Delta_A=\nabla_A^2$ and $ds$ for the induced
measure on $\pa \Omega$.  We have also used above the standard notation
\begin{displaymath}
  \dashint_{\partial \Omega}  = \frac{1}{\ell(\partial \Omega)}\int \,.
\end{displaymath}
The domain $\Omega\subset\subset\R^2$ has the same characteristics as in \cite{alhe14},
in particular its boundary $\pa \Omega$ contains a smooth interface,
denoted by $\partial\Omega_c\,$, with a conducting metal which is at the normal
state. Thus, we require that $\psi$ vanishes on $\partial\Omega_c$ in (\ref{eq:1}c).

We make the following assumptions on the current $J$
\begin{equation}
\label{eq:2}
(J1)\quad J\in C^2(\overline{\partial\Omega_c},\R)\,,
\end{equation}
\begin{equation}
  \label{eq:3}
(J2) \quad \int_{\partial\Omega_c}J \,ds=0 \,,
\end{equation}
and
\begin{equation}\label{eq:75aa}
(J3)\quad  \mbox{the sign of } J \mbox{ is constant on each
connected component of } \partial\Omega_c\,.
\end{equation}
We allow for $J\neq0$ at the corners despite the fact that no current is
allowed to enter the sample through the insulator.  The rest of
$\partial\Omega$, denoted by $\partial\Omega_i$ is adjacent to an insulator. By
convention, we extend $J$ as equal to $0$ on $\partial \Omega_i$.

To simplify some of our regularity arguments we introduce the
following geometrical assumption (for further discussion we refer the
reader to Appendix A in \cite{alhe14}) on $\partial\Omega$:
\begin{equation}\label{propertyR}
(R1)\,\left\{
\begin{array}{l}
(a) \; \pa \Omega_i \mbox{ and } \pa \Omega_c  \mbox{ are of class } C^3\,;\\
(b) \mbox{ Near each edge, } \pa
\Omega_i \mbox{  and  } \pa \Omega_c
 \mbox{ are} \\ \quad  \mbox{ flat and meet with  an angle of } \frac \pi 2\,.
\end{array}\right.
\end{equation}
  We also require that
\begin{equation}\label{hyptopolo}
(R2) \quad\quad \mbox{Both }  \partial\Omega_c \mbox{  and } \partial\Omega_i \mbox{  have two components}.
\end{equation}
\begin{figure}
  \centering
\setlength{\unitlength}{0.0005in}
\begingroup\makeatletter\ifx\SetFigFont\undefined%
\gdef\SetFigFont#1#2#3#4#5{%
  \reset@font\fontsize{#1}{#2pt}%
  \fontfamily{#3}\fontseries{#4}\fontshape{#5}%
  \selectfont}%
\fi\endgroup%
{\renewcommand{\dashlinestretch}{30}\begin{picture}(2766,6795)(0,-10)
\path(1275,300)(2700,300)
\path(375,6450)(1875,6450)
\path(1575,300)(1575,1125)
\blacken\path(1605.000,1005.000)(1575.000,1125.000)(1545.000,1005.000)(1605.000,1005.000)
\path(1800,300)(1800,1200)
\blacken\path(1830.000,1080.000)(1800.000,1200.000)(1770.000,1080.000)(1830.000,1080.000)
\path(2025,300)(2025,900)
\blacken\path(2055.000,780.000)(2025.000,900.000)(1995.000,780.000)(2055.000,780.000)
\path(2325,300)(2325,675)
\blacken\path(2355.000,555.000)(2325.000,675.000)(2295.000,555.000)(2355.000,555.000)
\path(1350,300)(1425,1050)
\blacken\path(1442.911,927.610)(1425.000,1050.000)(1383.208,933.581)(1442.911,927.610)
\path(2550,300)(2475,750)
\blacken\path(2524.320,636.565)(2475.000,750.000)(2465.136,626.701)(2524.320,636.565)
\path(525,5775)(525,6450)
\blacken\path(555.000,6330.000)(525.000,6450.000)(495.000,6330.000)(555.000,6330.000)
\path(900,5775)(900,6450)
\blacken\path(930.000,6330.000)(900.000,6450.000)(870.000,6330.000)(930.000,6330.000)
\path(1275,5775)(1275,6450)
\blacken\path(1305.000,6330.000)(1275.000,6450.000)(1245.000,6330.000)(1305.000,6330.000)
\path(1575,5775)(1575,6450)
\blacken\path(1605.000,6330.000)(1575.000,6450.000)(1545.000,6330.000)(1605.000,6330.000)
\path(1275,300)(1275,301)(1275,304)
	(1275,312)(1275,326)(1275,346)
	(1274,371)(1274,400)(1273,431)
	(1273,461)(1272,491)(1271,518)
	(1271,544)(1270,567)(1269,589)
	(1267,609)(1266,627)(1264,645)
	(1263,662)(1260,680)(1258,697)
	(1256,714)(1253,732)(1249,751)
	(1246,770)(1242,789)(1237,810)
	(1232,830)(1227,851)(1221,872)
	(1216,894)(1209,915)(1203,937)
	(1196,959)(1190,980)(1182,1002)
	(1175,1025)(1169,1042)(1163,1060)
	(1157,1079)(1150,1099)(1143,1119)
	(1135,1140)(1127,1162)(1119,1185)
	(1110,1208)(1101,1232)(1092,1257)
	(1082,1282)(1073,1307)(1063,1333)
	(1053,1358)(1043,1383)(1032,1409)
	(1022,1433)(1012,1458)(1002,1482)
	(992,1506)(982,1529)(972,1552)
	(963,1575)(952,1598)(942,1621)
	(932,1644)(921,1668)(910,1692)
	(899,1717)(887,1742)(876,1768)
	(864,1793)(852,1819)(840,1845)
	(828,1872)(817,1897)(805,1923)
	(794,1948)(783,1973)(773,1998)
	(763,2021)(753,2044)(744,2066)
	(736,2088)(728,2109)(720,2130)
	(713,2150)(705,2172)(698,2194)
	(691,2215)(684,2237)(677,2260)
	(671,2283)(665,2306)(659,2329)
	(653,2353)(648,2377)(643,2402)
	(638,2426)(633,2451)(629,2475)
	(624,2499)(620,2523)(617,2546)
	(613,2570)(610,2593)(606,2616)
	(603,2639)(600,2662)(597,2684)
	(594,2706)(591,2729)(588,2753)
	(585,2777)(582,2802)(579,2828)
	(576,2855)(573,2882)(569,2910)
	(566,2939)(562,2968)(559,2997)
	(556,3026)(552,3055)(549,3084)
	(546,3112)(543,3140)(540,3168)
	(537,3195)(534,3222)(531,3248)
	(528,3274)(525,3300)(522,3324)
	(520,3348)(517,3372)(514,3397)
	(511,3422)(509,3447)(506,3473)
	(503,3500)(500,3527)(497,3554)
	(494,3582)(491,3610)(487,3638)
	(484,3665)(481,3693)(478,3721)
	(475,3748)(472,3775)(469,3802)
	(466,3828)(464,3853)(461,3878)
	(458,3903)(455,3927)(453,3951)
	(450,3975)(447,3999)(445,4023)
	(442,4047)(439,4072)(437,4097)
	(434,4123)(431,4149)(428,4176)
	(425,4204)(423,4232)(420,4260)
	(417,4289)(414,4318)(412,4347)
	(409,4376)(406,4405)(404,4435)
	(402,4464)(400,4492)(397,4521)
	(395,4549)(394,4577)(392,4605)
	(390,4632)(389,4660)(388,4687)
	(386,4713)(385,4740)(384,4767)
	(383,4794)(382,4822)(381,4851)
	(381,4880)(380,4910)(379,4941)
	(379,4972)(378,5004)(378,5037)
	(377,5069)(377,5102)(376,5134)
	(376,5167)(376,5200)(376,5232)
	(375,5264)(375,5295)(375,5326)
	(375,5356)(375,5385)(375,5414)
	(375,5443)(375,5470)(375,5498)
	(375,5525)(375,5556)(375,5588)
	(375,5619)(375,5651)(375,5683)
	(375,5715)(375,5747)(375,5780)
	(375,5812)(375,5844)(375,5876)
	(375,5908)(375,5938)(375,5968)
	(375,5997)(375,6025)(375,6051)
	(375,6076)(375,6100)(375,6123)
	(375,6144)(375,6164)(375,6182)
	(375,6200)(375,6228)(375,6254)
	(375,6278)(375,6300)(375,6323)
	(375,6345)(375,6367)(375,6389)
	(375,6409)(375,6426)(375,6439)
	(375,6446)(375,6450)
\path(2704,291)(2704,293)(2704,299)
	(2705,309)(2706,324)(2707,346)
	(2709,374)(2711,409)(2713,449)
	(2715,495)(2718,545)(2721,599)
	(2724,654)(2727,711)(2730,767)
	(2732,823)(2735,877)(2738,930)
	(2740,980)(2742,1028)(2744,1073)
	(2746,1115)(2748,1155)(2749,1193)
	(2750,1228)(2751,1261)(2752,1293)
	(2753,1322)(2753,1350)(2754,1377)
	(2754,1403)(2754,1428)(2754,1464)
	(2753,1498)(2753,1531)(2752,1563)
	(2750,1595)(2749,1626)(2747,1656)
	(2745,1685)(2742,1714)(2739,1742)
	(2736,1768)(2733,1794)(2729,1819)
	(2726,1842)(2722,1864)(2718,1885)
	(2714,1905)(2709,1924)(2705,1942)
	(2701,1959)(2696,1975)(2692,1991)
	(2686,2010)(2680,2029)(2673,2048)
	(2666,2067)(2659,2086)(2651,2106)
	(2643,2126)(2634,2147)(2626,2168)
	(2616,2189)(2607,2209)(2598,2230)
	(2588,2251)(2579,2272)(2570,2292)
	(2560,2312)(2551,2333)(2542,2353)
	(2534,2371)(2526,2389)(2517,2408)
	(2508,2427)(2500,2447)(2490,2468)
	(2481,2490)(2471,2513)(2461,2536)
	(2450,2560)(2440,2584)(2429,2608)
	(2419,2633)(2408,2657)(2398,2681)
	(2387,2705)(2377,2729)(2367,2752)
	(2357,2775)(2348,2797)(2338,2819)
	(2329,2841)(2320,2861)(2312,2881)
	(2303,2902)(2294,2923)(2285,2944)
	(2276,2966)(2266,2989)(2257,3011)
	(2247,3035)(2238,3059)(2228,3083)
	(2218,3107)(2209,3131)(2199,3156)
	(2190,3180)(2181,3205)(2172,3229)
	(2163,3253)(2155,3276)(2147,3299)
	(2139,3323)(2131,3345)(2124,3368)
	(2117,3391)(2109,3414)(2102,3438)
	(2095,3461)(2088,3486)(2081,3511)
	(2074,3537)(2068,3563)(2061,3590)
	(2054,3617)(2047,3645)(2041,3672)
	(2034,3700)(2028,3728)(2022,3755)
	(2016,3782)(2011,3808)(2006,3834)
	(2001,3858)(1997,3883)(1993,3906)
	(1989,3928)(1985,3950)(1982,3971)
	(1979,3991)(1976,4011)(1974,4030)
	(1971,4050)(1969,4070)(1967,4090)
	(1965,4110)(1963,4131)(1962,4152)
	(1960,4174)(1958,4196)(1957,4219)
	(1956,4242)(1954,4266)(1953,4291)
	(1952,4316)(1951,4341)(1950,4367)
	(1949,4394)(1948,4421)(1947,4448)
	(1945,4477)(1944,4505)(1943,4535)
	(1942,4566)(1940,4592)(1939,4618)
	(1938,4646)(1936,4675)(1935,4705)
	(1933,4736)(1932,4768)(1930,4801)
	(1928,4835)(1926,4870)(1924,4906)
	(1922,4942)(1920,4979)(1918,5016)
	(1916,5054)(1915,5091)(1913,5128)
	(1911,5165)(1909,5201)(1907,5237)
	(1905,5272)(1903,5306)(1901,5339)
	(1900,5371)(1898,5402)(1897,5432)
	(1895,5461)(1894,5489)(1893,5515)
	(1892,5541)(1890,5575)(1889,5607)
	(1887,5638)(1886,5668)(1885,5698)
	(1884,5726)(1883,5754)(1883,5781)
	(1882,5808)(1881,5833)(1881,5858)
	(1880,5881)(1880,5904)(1880,5925)
	(1879,5946)(1879,5965)(1879,5984)
	(1879,6001)(1879,6018)(1879,6034)
	(1879,6050)(1879,6066)(1879,6085)
	(1879,6105)(1879,6125)(1879,6146)
	(1879,6168)(1879,6193)(1879,6219)
	(1879,6247)(1879,6278)(1879,6309)
	(1879,6340)(1879,6369)(1879,6394)
	(1879,6415)(1879,6429)(1879,6437)
	(1879,6440)(1879,6441)
\put(1875,-200){\makebox(0,0)[lb]{{\SetFigFont{12}{14.4}{\rmdefault}{\mddefault}{\updefault}$\partial\Omega_c$}}}
\put(825,6600){\makebox(0,0)[lb]{{\SetFigFont{12}{14.4}{\rmdefault}{\mddefault}{\updefault}$\partial\Omega_c$}}}
\put(2250,3525){\makebox(0,0)[lb]{{\SetFigFont{12}{14.4}{\rmdefault}{\mddefault}{\updefault}$\partial\Omega_i$}}}
\put(-150,3675){\makebox(0,0)[lb]{{\SetFigFont{12}{14.4}{\rmdefault}{\mddefault}{\updefault}$\partial\Omega_i$}}}
\put(1950,1050){\makebox(0,0)[lb]{{\SetFigFont{12}{14.4}{\rmdefault}{\mddefault}{\updefault}$J_{in}$}}}
\put(1050,5425){\makebox(0,0)[lb]{{\SetFigFont{12}{14.4}{\rmdefault}{\mddefault}{\updefault}$J_{out}$}}}
\end{picture}
}
  \caption{Typical superconducting sample. The arrows denote the direction of the current flow ($J_{in}$  for the inlet, and $J_{out}$ for the outlet).}
  \label{fig:1}
\end{figure}
\nopagebreak
Figure 1 presents a typical sample with properties (R1) and
(R2), where the current flows into the sample from one connected
component of $\partial\Omega_c$, and exits from another part, disconnected from
the first one. Most wires would fall into the above class of domains.

The system \eqref{eq:1} is invariant to the gauge transformation
\begin{equation}
\label{eq:4}
   A'= A + \nabla\omega\,, \; \phi'= \phi-
  \frac{\partial\omega}{\partial t}\,, \; \psi' = \psi e^{i\omega} \,.
\end{equation}
We thus choose, as in \cite{alhe14}, the Coulomb gauge, i.e., we assume
\begin{equation}
  \label{eq:5}
  \begin{cases}
    \Div A =0  & \text{ in } (0,+\infty)\times \Omega\,, \\
    A\cdot\nu =0 & \text{ on }(0,+\infty)\times  \partial\Omega\,,
  \end{cases}
\end{equation}
where the divergence is computed with respect to the spatial coordinates only.

From (\ref{eq:25}b), (\ref{eq:25}d), and (\ref{eq:25}f), we know that $\curl
A$ is constant on each connected component of $\partial \Omega_i$. Let then
$\{\partial\Omega_{i,j}\}_{j=1}^{2}$ denote the set of connected components of
$\partial\Omega_i$. We can write, for $j=1,2\,$,
\begin{equation}
\label{eq:6}
\curl A|_{\partial\Omega_{i,j}} =
h_j \,  \kappa\,,
\end{equation}
where  $h_1$ and $h_2$ are constants.

Note that $h_1$ and $h_2$ can be determined from
$J$ and $h_{ex}$ via the formula \cite{alhe14}
\begin{equation}
\label{eq:7}
   h_j= h_{ex}  -  \dashint_{\pa \Omega} \, |\Gamma(\tilde x, x_j)| \,
   J(\tilde x)  ds (\tilde x)\, \mbox{ for any } x_j  \in \pa
   \Omega_{i,j}\,, \quad j=1,2\,,
\end{equation}
where $\Gamma(x,x_0)$ is the portion of $\partial\Omega$ connecting $x_0$ and $x$ in
the positive trigonometric direction.  We assume that $h_{ex}$ and $J$  are such
that
\begin{equation}
  \label{eq:8}
h_1h_2 <0 \,,
\end{equation}
and without any loss of generality we can assume $h_2>0$.  Let
\begin{equation}
  \label{eq:9}
h=\max(|h_1|,|h_2|)\,.
\end{equation}
We assume that
\begin{equation}
\label{hyph}
h>1
\end{equation}
and distinguish below between two
  different cases
\begin{subequations}
\label{eq:10}
  \begin{equation}
 \label{eq:5a}
  1<h\leq \frac{1}{\Theta_0} \quad \mbox{or}  \quad \frac{1}{\Theta_0}<h\,, \tag{\ref{eq:10}a,b}
\end{equation}
\end{subequations}
where $\Theta_0$ is given by \eqref{eq:20} ($\Theta_0\sim 0.59$).

In \cite{alhe14} we have established the  global existence of solutions for
\eqref{eq:1}, such that
\begin{align*} \label{prop1coul}
 &\psi_c\in C([0,+\infty);W^{1+\alpha,2}(\Omega,\C))\cap
 H^1_{\loc}([0,+\infty);L^2(\Omega,\C))\,, \, \forall \alpha <1\,, \\
 &A_c\in C([0,+\infty); W^{1,p}(\Omega,\R^2))\cap
 H^1_{\loc}([0,+\infty);L^2(\Omega,\R^2))\,, \forall p\geq 1\,, \\
&\phi_c\in L^2_{\loc}([0,+\infty); H^1(\Omega))\,.
\end{align*}

We next define, as in \cite[Subsection 2.2]{alhe14}, (in a slightly
different manner as the definition here is $c$-independent)  the
normal fields. They are defined as the weak solution -- $(\phi_n,A_n)\in
H^1(\Omega)\times H^1(\Omega,\R^2)$ -- of \eqref{eq:5} and
\begin{subequations}
\label{eq:11}
  \begin{empheq}[left={\empheqlbrace}]{alignat=2}
    &\, \curl^2A_n +  \nabla\phi_n = 0 \qquad &\text{in } &\Omega\,, \\
    & - \frac{\partial\phi_n}{\partial\nu} = J   \qquad &\text{on } &\partial\Omega\,, \\
    & \dashint_{\partial\Omega}\curl A_n\,ds = h_{ex}\,, && \\
    &\int_\Omega \phi_n \,dx =0\, . &&
\end{empheq}
\end{subequations}
Note that $(0,\kappa A_n,c\kappa\phi_n)$ is a steady-state solution of \eqref{eq:1}.
By \eqref{eq:11} we have (cf. \cite{alhe14})
\begin{equation}
\label{eq:12}
  \begin{cases}
    \Delta B_n = 0 & \text{in } \Omega\,, \\
    \frac{\partial B_n}{\partial\tau} = J & \text{on } \partial\Omega\,, \\
   \dashint_{\partial\Omega}B_n(x)\,ds = h_{ex} \,. &
  \end{cases}
\end{equation}
By taking the
divergence of (\ref{eq:11}a) we also  obtain
\begin{subequations}
\label{eq:11b}
 \begin{empheq}[left={\empheqlbrace}]{alignat=2}
    & \; \Delta\phi_n=0 & \qquad \text{in } \Omega\,, \\
     &- \frac{\partial\phi_n}{\partial\nu} = J   & \qquad\text{on } \partial\Omega\,, \\
   &\;\int_\Omega  \phi_n \,dx =0\, . &
  \end{empheq}
\end{subequations}
We recall from \cite[(2.16) and (2.17)] {alhe14} that
\begin{equation}\label{reguphi}
 B_n\in W^{2,p}(\Omega)\,,\, \phi_n \in W^{2,p}(\Omega) \,,\, \forall p>1\,.
 \end{equation}

 We focus attention in this work on the exponential decay of $\psi$ in
 regions where $|B_n|>1$. For steady-state solutions of \eqref{eq:1}
 in the absence of electric current ($J=0$) we may set $\phi\equiv0$ and the
 magnetic field is then constant on the boundary. The exponential
 decay of $\psi$ away from the boundary has been termed ``surface
 superconductivity'' and has extensively been studied (cf.
 \cite{fohe09} and the references within). More recently, the case of
 a non-constant magnetic field has been studied as well
 \cite{attar2014ground,heka15}. In these works $\phi$ still identically
 vanishes but nevertheless $\nabla B_n\neq0$ in view of the presence of a
 current source term $\curl h_{ex}$ in (\ref{eq:1}b).  In particular
 in \cite{heka15} it has been established, in the large $\kappa$ limit for
 the case $1\ll h\ll\kappa$ that $\psi$ is exponentially small away from
 $B_n^{-1}(0)$.

 In the absence of electric current the time-dependent case is of
 lesser interest, since every solution of \eqref{eq:1} converges to a
 steady-state solution  \cite{lid97, feta01}. This result has been obtained in \cite{feta01} by
 using the fact that the Ginzburg-Landau energy functional is a
 Lyapunov function in this case. In contrast, when $J\neq0$ this
 property of the energy functional is lost, and convergence to a
 steady-state is no-longer guaranteed. In \cite{alhe14} the global
 stability of the normal state has been established for $h=\OO(\kappa)$
 in the large $\kappa$ limit. In the present contribution, for the same
 limit, we explore the behavior of the solution for $1<h\ll\kappa$ , and
 establish exponential decay of $\psi$ in every subdomain of $\Omega$ where
 $|B_n|>1$. We do that for both steady-state solutions (whose
 existence we need to assume) and time-dependent ones. We also study
 the large-domain limit, where we obtain weaker results for
 steady-state solutions only.

Let,  for $j=1,2$,
    \begin{equation}
\label{eq:13}
    \omega_{j}  = \{x\in \Omega \, : \, (-1)^jB_n(x)>1 \} \,.
  \end{equation}
Our first theorem concerns  steady state solutions and their
exponential decay, in certain subdomains of $\Omega$, in the large $\kappa$ limit.
\begin{theorem}
\label{mainsteadystate}
 Let for $\kappa \geq 1$, $(\psi_\kappa,A_\kappa,  \phi_\kappa)$ be a time-independent solution of \eqref{eq:1}.
  Suppose that for some $j\in\{1,2\}$ we have
  \begin{equation}
\label{eq:14}
    1<|h_j| \,.
  \end{equation}
  Then,  for any compact  set $K\subset \omega_j \cup \partial \Omega_c $, there
  exist $C>0$,  $\alpha >0$, and
  $\kappa_0\geq 1 $, such that for any  $\k\geq \k_0$ we have
\begin{equation}
\label{eq:15}
\int_{K} |\psi_\kappa(x)|^2 \,dx \leq C  e^{- \alpha \kappa} \,.
\end{equation}
If, in addition,
\begin{equation}
    \frac{1}{\Theta_0}<|h_j| \,,
  \end{equation}
  then (\ref{eq:15}) is satisfied for any compact subset $K\subset\overline{\omega_j}$.
\end{theorem}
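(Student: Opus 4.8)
\medskip
\noindent\textbf{A plan for the proof.}
The whole effect rests on one observation: on $\omega_j$ the magnetic Schr\"odinger operator occurring in the steady version of (\ref{eq:1}a), namely $-\nabla_{\kappa A_\kappa}^2$, has an \emph{effective field} $\kappa\,\curl A_\kappa$ that is close to $\kappa^2 B_n$; hence its bottom of spectrum is $\gtrsim\kappa^2|B_n|>(1+\delta)\kappa^2$, which strictly exceeds the coefficient $\kappa^2$ of the cubic term, while the term $i\kappa\phi_\kappa\psi_\kappa$ plays no role since it is purely imaginary in every energy identity. I would therefore proceed in three steps: (i) a priori estimates --- $\|\psi_\kappa\|_\infty\le1$ and a comparison of $\curl A_\kappa$ with $\kappa B_n$ on $\overline{\omega_j}$; (ii) a magnetic lower bound for $\int|\nabla_{\kappa A_\kappa}u|^2$ with $u$ localized in the good part of $\omega_j$, in interior, Dirichlet and Neumann forms; (iii) an Agmon-type weighted estimate.

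\emph{Step (i).} The maximum principle applied to the equation for $|\psi_\kappa|^2$ coming from the steady (\ref{eq:1}a) gives $\|\psi_\kappa\|_\infty\le1$ (the imaginary term drops). Pairing the steady (\ref{eq:1}a) with $\bar\psi_\kappa$ and taking real parts yields $\int_\Omega|\nabla_{\kappa A_\kappa}\psi_\kappa|^2=\kappa^2\int_\Omega(1-|\psi_\kappa|^2)|\psi_\kappa|^2\le\kappa^2|\Omega|$. Taking the two-dimensional curl of the steady (\ref{eq:1}b) and using $\Delta B_n=0$ from (\ref{eq:12}), the induced field $\tilde B_\kappa:=\curl A_\kappa$ satisfies
\begin{equation*}
\Delta(\tilde B_\kappa-\kappa B_n)=\tfrac1\kappa\,\curl\,\Im\bigl(\bar\psi_\kappa\nabla_{\kappa A_\kappa}\psi_\kappa\bigr)\quad\text{in }\Omega,\qquad \tilde B_\kappa-\kappa B_n=\text{const on each component of }\partial\Omega,
\end{equation*}
the boundary current cancelling because $\Im(\bar\psi_\kappa\nabla_{\kappa A_\kappa}\psi_\kappa)\cdot\nu\equiv0$ on $\partial\Omega$ by (\ref{eq:1}c)--(\ref{eq:1}f). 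Since the right-hand side is controlled (quadratically in $\nabla_{\kappa A_\kappa}\psi_\kappa$, plus a field term), elliptic regularity for this problem on the multiply-connected domain --- using the Coulomb gauge (\ref{eq:5}) and the boundary geometry (R1)--(R2) --- should give, after a bootstrap (a single pass only produces an $\mathrm O(\kappa)$-size error, which must be improved by means of a preliminary, merely polynomial, smallness of $\psi_\kappa$ on $\omega_j$), that on $\overline{\omega_j}$ the field $\curl A_\kappa$ keeps the sign of $B_n$ and $|\kappa\,\curl A_\kappa|\ge(1-\varepsilon_\kappa)\kappa^2|B_n|$ with $\varepsilon_\kappa\to0$.

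\emph{Steps (ii)--(iii).} Fix $K$ and $\delta>0$ with $|B_n|\ge1+4\delta$ on $K$; in the weak case $K$ lies at positive distance from $\partial\Omega_i$, as $K\subset\omega_j\cup\partial\Omega_c$. For $u\in H^1(\Omega,\C)$ supported in an open set $U$ with $U\cap\partial\Omega\subset\partial\Omega_c$, integration by parts (no boundary terms: $u$ vanishes on $\partial\Omega_c$) gives $\int_U|\nabla_{\kappa A_\kappa}u|^2\ge\bigl|\int_U(\kappa\,\curl A_\kappa)\,|u|^2\bigr|$, hence $\ge(1+2\delta)\kappa^2\int_U|u|^2$ once $U\subset\{|B_n|>1+3\delta\}$ and $\kappa$ is large, by (i). If instead $U$ reaches $\partial\Omega_i$, the de Gennes half-plane model, whose Neumann ground-state energy for a constant field $b$ equals $\Theta_0 b$, yields the same bound with $(1+2\delta)\kappa^2$ replaced by $(1-\varepsilon_\kappa)\Theta_0\kappa^2|B_n|$, which exceeds $\kappa^2$ exactly when $|B_n|>1/\Theta_0$ --- this is the dichotomy (\ref{eq:10}a,b) --- and the right-angle edges of (R1) are handled by the flat-sector model operator. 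Now choose thresholds $1<1+\delta<1+2\delta<1+3\delta<1+4\delta$, a Lipschitz weight $\rho\ge0$ with $|\nabla\rho|\le1$ supported in $\{|B_n|>1+3\delta\}$ (and, in the weak case, vanishing near $\partial\Omega_i$) and with $\rho_0:=\min_K\rho>0$, and cutoffs $\chi_1^2+\chi_2^2\equiv1$ with $\chi_1\equiv1$ on $\{|B_n|\ge1+2\delta\}\supset K$ and $\spt\chi_1\subset\{|B_n|>1+\delta\}$ (away from $\partial\Omega_i$ in the weak case), so that $\rho\equiv0$ on $\spt\chi_2$. Pairing the steady (\ref{eq:1}a) with $e^{2\alpha\kappa\rho}\bar\psi_\kappa$ and taking real parts (all boundary terms vanish by (\ref{eq:1}c)--(\ref{eq:1}f), the imaginary term drops) gives
\begin{equation*}
\int_\Omega\bigl|\nabla_{\kappa A_\kappa}(e^{\alpha\kappa\rho}\psi_\kappa)\bigr|^2=\alpha^2\kappa^2\!\int_\Omega|\nabla\rho|^2e^{2\alpha\kappa\rho}|\psi_\kappa|^2+\kappa^2\!\int_\Omega(1-|\psi_\kappa|^2)e^{2\alpha\kappa\rho}|\psi_\kappa|^2\le(1+\alpha^2)\kappa^2\!\int_\Omega e^{2\alpha\kappa\rho}|\psi_\kappa|^2.
\end{equation*}
Localizing the left side by the IMS formula, bounding the $\chi_1$-piece below by (ii) and the $\chi_2$-piece by $0$, and using $e^{2\alpha\kappa\rho}\equiv1$ on $\spt\chi_2$ (so $\int\chi_2^2e^{2\alpha\kappa\rho}|\psi_\kappa|^2\le|\Omega|$), one takes $\alpha^2=\delta$ and $\kappa_0$ so large that the $\mathrm O(\kappa^2)$ and $\mathrm O(1)$ (localization-error) terms are absorbed by the gain $2\delta\kappa^2$, obtaining $\int_\Omega\chi_1^2e^{2\alpha\kappa\rho}|\psi_\kappa|^2\le C$ uniformly in $\kappa\ge\kappa_0$. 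Since $\chi_1\equiv1$ and $\rho\ge\rho_0$ on $K$, this reads $\int_K|\psi_\kappa|^2\le Ce^{-2\alpha\rho_0\kappa}$, which is (\ref{eq:15}). For $|h_j|>1/\Theta_0$ one runs the same argument allowing $\rho$ and $\chi_1$ to reach $\partial\Omega_i$, with the $\Theta_0$-version of (ii); this upgrades $K$ to any compact subset of $\overline{\omega_j}$.

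\emph{The hard part.} The crux is Step (i): converting the magnetostatic equation for $\tilde B_\kappa$, fed only by the cheap bound $\|\nabla_{\kappa A_\kappa}\psi_\kappa\|_{L^2}=\mathrm O(\kappa)$, into the sharp comparison $\curl A_\kappa\approx\kappa B_n$ --- the naive correction is $\mathrm O(\kappa)$, which is not negligible beside $\kappa B_n$ when $h$ is merely $>1$, so one must couple the field estimate with an a priori (non-exponential) smallness of $\psi_\kappa$ on $\omega_j$ and iterate, controlling the elliptic regularity on a multiply-connected domain with piecewise-constant boundary data (and with $\phi_\kappa$ large). The secondary difficulty is the boundary and corner analysis underlying (ii): the sharp de Gennes constant $\Theta_0$ near $\partial\Omega_i$ (the origin of the split (\ref{eq:10}a,b)) and the flat right-angle sectors of (R1). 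Once (i)--(ii) are in place, the weighted estimate (iii) is routine.
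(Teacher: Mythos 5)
Your overall architecture coincides with the paper's: a comparison of $\curl A_\kappa$ with $\kappa B_n$, spectral lower bounds of Dirichlet type away from $\partial\Omega_i$ and of de~Gennes type (constant $\Theta_0$, with the flat Dirichlet--Neumann sector handled by a model operator, cf.\ Lemma \ref{lem:Dirichlet-Neumann}) near $\partial\Omega_i$, and a weighted Agmon estimate with an IMS localization; your steps (ii)--(iii) are essentially the paper's Section 2 and the two Agmon propositions of Section 3. The genuine gap is exactly at the step you yourself flag as the crux, Step (i): the $L^\infty$ comparison $\curl A_\kappa=\kappa B_n+o(\kappa)$ is never proved, and the mechanism you propose for it --- a bootstrap fed by ``a preliminary, merely polynomial, smallness of $\psi_\kappa$ on $\omega_j$'' --- is circular. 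No smallness of $\psi_\kappa$ localized to $\omega_j$ is available before a spectral lower bound on $\omega_j$ is in force, and that lower bound is precisely what the field comparison is needed for; the global bound of Proposition \ref{entire} is proved by a different (electric-potential) argument, is only $O(\kappa^{-1/6})$ in $L^2(\Omega)$, and is not used, and cannot obviously be used, to start such an iteration. As stated, your plan has no non-circular entry point, so steps (ii)--(iii) do not get off the ground.

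Moreover, your premise that ``a single pass only produces an $O(\kappa)$-size error'' is not what careful elliptic estimates give, and this is how the paper closes Step (i) with no bootstrap at all (Lemma \ref{lemma3.1}): from \eqref{eq:40} one has $\|\nabla_{\kappa A_\kappa}\psi_\kappa\|_2\leq C\kappa$, whence \eqref{eq:41}, and, using the Coulomb gauge, $B_{1,\kappa}|_{\partial\Omega}=0$ and Sobolev embedding, $\|A_\kappa\|_\infty\leq C\kappa$ \eqref{eq:43} together with $\|\phi_\kappa\|_\infty\leq Cc\kappa$ \eqref{eq:34}; then $W^{2,p}$ elliptic estimates for the $\psi_\kappa$-equation combined with $L^2$--$L^\infty$ interpolation yield $\tfrac{1}{\kappa}\|\nabla_{\kappa A_\kappa}\psi_\kappa\|_p\leq C\kappa^{3(1-2/p)}$ \eqref{eq:47}, and $W^{1,p}$ estimates for \eqref{eq:38} and \eqref{eq:31} give $\|\nabla B_{1,\kappa}\|_p\leq C\kappa^{3(p-2)/p}$ \eqref{eq:48}, hence $\|B_{1,\kappa}\|_\infty\leq C\kappa^{3(p-2)/p}=o(\kappa)$ for $2<p<3$ \eqref{B1k}. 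Note also that for the lower bound near $\partial\Omega_i$ mere $L^\infty$ closeness of the fields is not enough: the paper needs the $W^{2,p}$ control \eqref{eq:49} on $A_{1,\kappa}$ so as to apply the semiclassical estimate \eqref{eq:22} with the perturbation $\epsilon^{1/2}a$, $a=\kappa^{-1}A_{1,\kappa}$, a point your sketch glosses over with the factors $(1-\varepsilon_\kappa)$. Supplying an argument of this kind (or any other non-circular proof of the field comparison) is what is missing from your proposal.
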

In addition to the above exponential decay, which is limited to the
region where the normal magnetic field is large, we establish a weaker
decay of $\psi_\kappa$ in the entire domain.
\begin{proposition}
\label{entire}
  Under the assumptions \eqref{eq:2}-\eqref{eq:10} there exists
  $C(J,\Omega)>0$ such that, for $\kappa \geq 1$,
  \begin{equation}
\label{eq:16}
\| \psi_\kappa \|_2 \leq C(J,\Omega) \, (1+c^{-1/2})^{1/3}\kappa^{-1/6}  \,.
  \end{equation}
\end{proposition}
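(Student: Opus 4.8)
\noindent\emph{Proof proposal.}
The plan is to improve the elementary energy identity for $\psi_\kappa$ (which by itself only gives $\|\psi_\kappa\|_2\le|\Omega|^{1/2}$) into the rate $\kappa^{-1/6}$ by showing that the magnetic kinetic energy of $\psi_\kappa$ must be almost as large as $\kappa^2\|\psi_\kappa\|_2^2$; this last fact rests on a (non--self--adjoint) lower bound for the operator $-\nabla_{\kappa A_\kappa}^2+i\kappa\phi_\kappa$ near the value $\kappa^2$, after comparing the gauge fields with the normal ones.

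\medskip
\noindent\emph{Step 1 (energy identities).}
Testing the steady--state version of (\ref{eq:1}a) with $\bar\psi_\kappa$ and integrating over $\Omega$: the boundary terms vanish by (\ref{eq:1}c)--(\ref{eq:1}d) and $\Re\int_\Omega i\kappa\phi_\kappa|\psi_\kappa|^2=0$, so
\begin{equation*}
\|\nabla_{\kappa A_\kappa}\psi_\kappa\|_2^2+\kappa^2\|\psi_\kappa\|_4^4=\kappa^2\|\psi_\kappa\|_2^2 ,
\end{equation*}
while the imaginary part gives $\int_\Omega\phi_\kappa|\psi_\kappa|^2\,dx=0$. The maximum principle for steady states gives $\|\psi_\kappa\|_\infty\le1$. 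Hence $\|\nabla_{\kappa A_\kappa}\psi_\kappa\|_2\le\kappa\|\psi_\kappa\|_2$, $\|\psi_\kappa\|_4^2\le\|\psi_\kappa\|_2$, $\|\psi_\kappa\|_2^2\le|\Omega|^{1/2}\|\psi_\kappa\|_4^2$, and the supercurrent $\mathbf j:=\Im(\bar\psi_\kappa\nabla_{\kappa A_\kappa}\psi_\kappa)$ obeys $\|\mathbf j\|_1\le\kappa\|\psi_\kappa\|_2^2$ and $\|\mathbf j\|_{4/3}\le\|\psi_\kappa\|_4\,\|\nabla_{\kappa A_\kappa}\psi_\kappa\|_2\le\kappa\|\psi_\kappa\|_2^{3/2}$.

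\medskip
\noindent\emph{Step 2 (the gauge fields are close to the normal fields).}
Set $a:=A_\kappa-\kappa A_n$ and $\varphi:=\phi_\kappa-c\kappa\phi_n$. Subtracting $\kappa$ times (\ref{eq:11}a) from the steady--state version of (\ref{eq:1}b) gives $\curl^2a+\tfrac1c\nabla\varphi=\tfrac1\kappa\mathbf j$, with $\Div a=0$ and $a\cdot\nu=0$ on $\partial\Omega$; moreover $\curl a$ is continuous on $\partial\Omega$, locally constant on $\partial\Omega_i$ by (\ref{eq:6}) and locally constant on $\partial\Omega_c$ because $\mathbf j\cdot\nu=0$ there, hence constant on $\partial\Omega$ and, by the normalisation of the boundary flux, identically $0$ on $\partial\Omega$. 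Testing the equation for $a$ with $a$ (the $\tfrac1c\nabla\varphi$ contribution and the boundary term both vanish, using $\Div a=0$, $a\cdot\nu=0$, $\curl a|_{\partial\Omega}=0$) and invoking the div--curl elliptic estimate, which is clean under (R1)--(R2), yields $\|a\|_{H^1}\le C\kappa^{-1}\|\mathbf j\|_{4/3}\le C\|\psi_\kappa\|_2^{3/2}$. For $\varphi$ one first notes, on taking the divergence of (\ref{eq:1}b) and using (\ref{eq:1}a), that $\Delta\phi_\kappa=c|\psi_\kappa|^2\phi_\kappa$ in $\Omega$ with $\partial_\nu\phi_\kappa=-c\kappa J$ on $\partial\Omega_c$ and $\partial_\nu\phi_\kappa=0$ on $\partial\Omega_i$, so $\varphi$ solves a Neumann problem with right--hand side $c|\psi_\kappa|^2\phi_\kappa$; elliptic estimates for $\varphi$, combined with the identity obtained by testing (\ref{eq:1}a) with $-i\phi_\kappa\bar\psi_\kappa$, bound $\varphi$ (in the norms needed below) by a power of $\|\psi_\kappa\|_2$, and it is at this stage that a factor involving $c^{-1/2}$ enters. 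Consequently $\kappa A_\kappa$ and $\kappa\phi_\kappa$ differ from $\kappa^2A_n$ and $c\kappa^2\phi_n$ by quantities that are small powers of $\|\psi_\kappa\|_2$, up to a factor involving $c^{-1/2}$.

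\medskip
\noindent\emph{Step 3 (non--self--adjoint lower bound, and conclusion).}
Because $h_1h_2<0$, one has $\phi_n\not\equiv0$ (else $h_1=h_2=h_{ex}$ by (\ref{eq:7})); being harmonic, $\phi_n$ has $\nabla\phi_n\ne0$ off a finite set, and the level curves $\{\phi_n=0\}$ and $\{B_n=0\}$ meet transversally. Writing (\ref{eq:1}a) as
\begin{equation*}
\bigl(-\nabla_{\kappa A_\kappa}^2+i\kappa\phi_\kappa-\kappa^2\bigr)\psi_\kappa=-\kappa^2|\psi_\kappa|^2\psi_\kappa ,
\end{equation*}
one uses a partition of unity on $\overline\Omega$ together with Agmon--type localisation for the magnetic Laplacian $-\nabla_{\kappa A_\kappa}^2$ and a Davies/Zworski--type pseudospectral estimate for the model non--self--adjoint operators $-\nabla_{\kappa^2A_n}^2+ic\kappa^2\phi_n-\kappa^2$ (of complex--Airy type away from the critical points of $\phi_n$, of indefinite complex--oscillator type near them; near $\partial\Omega_i$ combined with the Neumann surface states, near $\partial\Omega_c$ with the Dirichlet condition and $\psi_\kappa=0$). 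Replacing $A_\kappa$ by $\kappa A_n$ and $\phi_\kappa$ by $c\kappa\phi_n$, with the errors controlled by Step~2, then produces
\begin{equation*}
\|\nabla_{\kappa A_\kappa}\psi_\kappa\|_2^2\ \ge\ \Bigl(1-C(J,\Omega)\,(1+c^{-1/2})^{2/3}\kappa^{-1/3}\Bigr)\,\kappa^2\|\psi_\kappa\|_2^2 .
\end{equation*}
Inserting this in the energy identity of Step~1 gives $\|\psi_\kappa\|_4^4\le C(1+c^{-1/2})^{2/3}\kappa^{-1/3}\|\psi_\kappa\|_2^2$, whence $\|\psi_\kappa\|_2^4\le|\Omega|\,\|\psi_\kappa\|_4^4\le C|\Omega|(1+c^{-1/2})^{2/3}\kappa^{-1/3}\|\psi_\kappa\|_2^2$, i.e.\ (\ref{eq:16}). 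The cubic nonlinearity and the $\psi_\kappa$--dependence of the Step~2 errors are absorbed by a bootstrap: one first proves the estimate under the (already known) a priori bound $\|\psi_\kappa\|_2\le|\Omega|^{1/2}$, then iterates.

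The main obstacle is Step~3: obtaining the displayed coercivity with only the generic non--degeneracy of $\phi_n$ at hand (rather than the stronger hypothesis $|h_j|>1/\Theta_0$ used for Theorem~\ref{mainsteadystate}), with the correct joint dependence on $\kappa$ and $c$, and controlling carefully the critical points of $\phi_n$ inside $\Omega$ and the boundary layers along $\partial\Omega_i$ and $\partial\Omega_c$; keeping the gauge--field errors from Step~2 subordinate to the main term is the secondary technical point.
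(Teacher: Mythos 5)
There is a genuine gap, and it sits exactly where you flag it yourself: the displayed coercivity in Step~3 is asserted, not proved. Note that by your own Step~1 identity $\|\nabla_{\kappa A_\kappa}\psi_\kappa\|_2^2=\kappa^2\bigl(\|\psi_\kappa\|_2^2-\|\psi_\kappa\|_4^4\bigr)$, the claimed bound $\|\nabla_{\kappa A_\kappa}\psi_\kappa\|_2^2\ge(1-C(1+c^{-1/2})^{2/3}\kappa^{-1/3})\kappa^2\|\psi_\kappa\|_2^2$ is literally equivalent to $\|\psi_\kappa\|_4^4\le C(1+c^{-1/2})^{2/3}\kappa^{-1/3}\|\psi_\kappa\|_2^2$, which is (up to the final Cauchy--Schwarz) the statement to be proved; so the argument is circular unless the pseudospectral input is actually established. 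You give no mechanism that produces the specific error exponent $\kappa^{-1/3}$ with the joint $(1+c^{-1/2})^{2/3}$ dependence: Davies/Zworski-type resolvent bounds for complex-Airy or complex-oscillator models control $\|(P-\lambda)u\|$ against $\|u\|$, not the real quadratic form $\|\nabla_{\kappa A_\kappa}\psi_\kappa\|_2^2$ modulo an $o(1)$ loss, and it is unclear how Agmon localisation plus such bounds would be combined here, especially since only $1<h$ (not $|h_j|>1/\Theta_0$) is assumed and the comparison potential must in fact be $\phi_n$ shifted by a $\psi_\kappa$-dependent constant (the solution of \eqref{eq:33} automatically satisfies $\int_\Omega|\psi_\kappa|^2\phi_\kappa\,dx=0$, so the relevant zero level set is that of $\Phi_n=\phi_n+C(\kappa,c)$, not of $\phi_n$). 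Your Step~2 is also only sketched ("bound $\varphi$ in the norms needed below by a power of $\|\psi_\kappa\|_2$"), but that is secondary.

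For comparison, the paper's proof avoids all spectral machinery. It uses the identity $\int_\Omega|\psi_\kappa|^2\phi_\kappa\,dx=0$, chooses $\Phi_n=\phi_n+\const$ so that $\int_\Omega|\psi_\kappa|^2\Phi_n\,dx=0$, writes the equation \eqref{eq:62} for $\Phi_{1,\kappa}=\phi_\kappa-c\kappa\Phi_n$ and tests it with $\Phi_n$; together with the a priori bound \eqref{eq:41} (namely $\tfrac1c\|\nabla\phi_{1,\kappa}\|_2\le C$, obtained from \eqref{eq:38}--\eqref{eq:40}) and Poincar\'e this yields $\|\psi_\kappa\Phi_n\|_2\le C\kappa^{-1/2}(1+c^{-1/2})$. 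The conclusion then follows from a purely geometric step: since $|\nabla\Phi_n|=|\nabla B_n|>0$ on $\overline\Omega$, the set $\Ug_\kappa$ where $|\Phi_n|<(1+c^{-1/2})^{2/3}\kappa^{-1/3}$ has measure $O\bigl((1+c^{-1/2})^{2/3}\kappa^{-1/3}\bigr)$, and one splits $\|\psi_\kappa\|_2$ over $\Ug_\kappa$ (using $|\psi_\kappa|\le1$) and its complement (using the weighted bound). If you want to salvage your outline, the missing Step~3 should be replaced by an argument of this type; as it stands, the central analytic estimate of the proposition is missing.
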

Theorem \ref{mainsteadystate} is extended to the time dependent case
in the following way.
\begin{theorem}
  \label{main}
  Let $(\psi_\kappa,A_\kappa, \phi_\kappa)$ denote a time-dependent solution of
  \eqref{eq:1}. Assuming $c=1$, under the conditions of Theorem
  \ref{mainsteadystate} and \eqref{condinit}, for any compact set $K\subset
  \omega_j\cup \partial \Omega_c$ there exist $C>0$ , $\alpha >0$, and $\kappa_0\geq 1$, such
  that for any $\k\geq \k_0$ we have
\begin{equation}
\label{eq:17}
\limsup_{t\to\infty}\int_{K}  |\psi_\kappa(t,x)|^2 \,dx \leq C   e^{- \alpha \kappa}  \,.
\end{equation}
\end{theorem}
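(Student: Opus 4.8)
The plan is to reduce the time-dependent statement to the steady-state estimate of Theorem \ref{mainsteadystate} by exploiting the regularizing effect of the parabolic flow together with a careful passage to the limit $t\to\infty$. The key point is that although the Ginzburg--Landau energy is no longer a Lyapunov functional when $J\neq0$, the order-parameter equation (\ref{eq:1}a) still enjoys a maximum-principle-type structure and the magnetic/electric fields relax, in an averaged sense, towards the normal fields $(A_n,\phi_n)$. So the first step is to establish uniform-in-time a priori bounds: starting from $\|\psi_0\|_\infty\le1$, derive $\|\psi_\kappa(t,\cdot)\|_\infty\le1$ for all $t>0$ (a standard argument multiplying (\ref{eq:1}a) by $(|\psi|^2-1)_+\bar\psi$ and using that $\phi$ contributes nothing to the modulus equation), and then bootstrap to uniform $H^1$, and eventually $W^{1+\alpha,2}$, bounds on $\psi_\kappa$, together with $L^2_{\loc}$-in-time bounds on $\partial_t\psi_\kappa$ and $\partial_tA_\kappa$ coming from the energy identity — these are essentially the global-existence estimates recalled before Theorem \ref{mainsteadystate}, made quantitative in $\kappa$.

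The second step is to control the magnetic field. Writing $B_\kappa=\curl A_\kappa$ and comparing with $\kappa B_n$, one obtains from (\ref{eq:1}b) a heat-type equation for $B_\kappa-\kappa B_n$ with source $\frac1\kappa\curl\Im(\bar\psi\nabla_{\kappa A}\psi)$ and homogeneous-in-the-averaged-sense boundary data; since the source is $O(1)$ in suitable norms while the diffusion is $O(c)=O(1)$ (recall $c=1$ here), one controls $\|B_\kappa(t,\cdot)-\kappa B_n\|$ uniformly. The crucial quantitative input is then: in the region where $|B_n|>1$ the linearized operator $-\Delta_{\kappa A_\kappa}$ has, on the relevant length scales, its bottom of spectrum bounded below by $\kappa^2(|B_\kappa/\kappa|-o(1))>\kappa^2(1+\delta)$ for some $\delta>0$ on a slightly shrunk compact $K'\supset K$, which beats the $\kappa^2$ coefficient of the cubic term. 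This is precisely the mechanism already used in the steady-state proof, so I would import the same Agmon-type exponential-decay estimate: multiplying (\ref{eq:1}a) by $e^{2\alpha\kappa d(x)}\bar\psi$ with $d$ a suitable distance function to the complement of $\omega_j\cup\partial\Omega_c$, integrating, and absorbing the drift and the $i\kappa\phi\psi$ term (which is purely imaginary and drops out of the real part after taking $\Re\int e^{2\alpha\kappa d}\bar\psi(\cdot)$ — this is where it matters that we test against $\bar\psi$ and not a weighted $\psi$), one gets $\frac{d}{dt}\int e^{2\alpha\kappa d}|\psi|^2 + \alpha\kappa\int e^{2\alpha\kappa d}|\psi|^2 \le C e^{C\kappa}\cdot(\text{boundary/error terms near }\partial K')$, and the errors near $\partial K'$ are themselves exponentially small in $\kappa$ because there $e^{2\alpha\kappa d}$ is bounded while $\|\psi\|_2^2$ is at most polynomially small by Proposition \ref{entire}.

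The third step is the passage to the limit. From the differential inequality $y'(t)+\alpha\kappa\, y(t)\le C e^{-\alpha'\kappa}$ with $y(t)=\int_K e^{2\alpha\kappa d}|\psi_\kappa(t,x)|^2dx$ and $y(0)<\infty$, Gronwall gives $\limsup_{t\to\infty} y(t)\le \frac{C}{\alpha\kappa}e^{-\alpha'\kappa}\le C'e^{-\alpha''\kappa}$, and since $e^{2\alpha\kappa d}\ge1$ on $K$ this yields (\ref{eq:17}) after possibly relabeling $\alpha$. The second, sharper regime $1/\Theta_0<|h_j|$ is handled by allowing $K$ to touch $\overline{\omega_j}$ and using the half-space model operator with Dirichlet (on $\partial\Omega_c$) or Neumann (on $\partial\Omega_i$) condition, whose spectral bottom is $|B_\kappa/\kappa|$ resp. $\Theta_0|B_\kappa/\kappa|$, again exactly as in Theorem \ref{mainsteadystate}. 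I expect the main obstacle to be making the spectral lower bound for $-\Delta_{\kappa A_\kappa}$ genuinely uniform in time: $A_\kappa(t)$ is only known to stay close to $\kappa A_n$ in an integrated sense, so one must either upgrade this to a uniform $C^0$ (or $W^{1,p}$, $p>2$) closeness using the parabolic smoothing in step two, or else localize finely enough (on $\kappa$-independent patches) that an $L^2$-closeness of the magnetic field suffices via a Cauchy--Schwarz/interpolation argument — getting the constants to line up so that the gain $\kappa^2\delta$ from the spectral gap strictly dominates all accumulated errors is the delicate bookkeeping, but it is structurally identical to what the steady-state proof already carries out, with only the extra $\partial_t\psi$ term to absorb, which is non-negative when paired correctly with $\frac{d}{dt}\int e^{2\alpha\kappa d}|\psi|^2$.
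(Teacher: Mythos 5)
Your overall strategy (a weighted, Agmon-type energy estimate for the parabolic equation (\ref{eq:67}a), a spectral lower bound coming from the closeness of $B_\kappa$ to $\kappa B_n$ on the set where $|B_n|>1+\delta$, and a differential inequality in $t$) is indeed the skeleton of the paper's Section~\ref{sec:4}. But there are two genuine gaps where your argument, as written, would fail. First, you control the error terms near $\partial K'$ by invoking Proposition~\ref{entire}; that proposition is a statement about \emph{steady-state} solutions of \eqref{eq:25} and is simply not available for a time-dependent solution. The paper has to prove a separate time-dependent substitute, Proposition~\ref{lem:asymp-exp}, namely $\limsup_{t\to\infty}\int_{\omega_{\delta,j}}|\psi_\kappa|^2\leq C_\delta\kappa^{-2}$, and even that weak decay is not a one-line Gronwall argument: it needs the large-time uniform bounds on $A_{1,\kappa}$ of Lemma~\ref{lem:parabolic-A1} and a recursion over unit time intervals ($a_n^2\leq C\delta^{-3}(\kappa^{-2}+\kappa^{-1}(a_n+a_{n-1}))$) because the term $\langle \kappa B_{1,\kappa}\widehat\zeta\psi_\kappa,\widehat\zeta\psi_\kappa\rangle$ cannot be absorbed directly. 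Relatedly, your claim that the error terms are "exponentially small in $\kappa$" is not correct: in both the steady and time-dependent proofs they are only $O(\delta^{-2})$, resp.\ $O(\kappa^{-2})$, and the exponential factor in \eqref{eq:17} comes from dividing the \emph{bounded} weighted integral by the minimum of the weight on $K$ (which is $\geq e^{\alpha\kappa}$ since $K$ has positive distance from $\Gamma_{\delta,j}$), not from an exponentially small right-hand side; so your differential inequality $y'+\alpha\kappa y\leq Ce^{-\alpha'\kappa}$ is not what one actually obtains.

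Second, the point you yourself flag as "delicate bookkeeping" -- upgrading the integrated closeness of $A_\kappa(t)$ to $\kappa A_n$ into control strong enough to make the spectral lower bound effective uniformly in time -- is precisely the hard core of the paper's proof, and it does not follow from the steady-state scheme. In the steady case the elliptic estimate \eqref{B1k} gives a pointwise bound $\|B_{1,\kappa}\|_\infty\leq C\kappa^{3(p-2)/p}$ at once; in the time-dependent case the paper needs the $L^p_tL^p_x$ bound \eqref{eq:89a} on $\nabla_{\kappa A_\kappa}(\widehat\zeta\psi_\kappa)$, the parabolic $W^{1,p}$ theory of Byun applied to the heat equation \eqref{eq:94} for $B_{1,\kappa}$ (after a decomposition into two Cauchy--Dirichlet problems) to get \eqref{eq:93}, i.e.\ $B_{1,\kappa}\in L^p(t,t+1;W^{1,p}(\omega_{\delta,j}))$ uniformly, and then a second recursion ($b_n\leq C(\kappa^{-1/6}(b_{n-1}+b_n)+1)$) to handle the convolution of $e^{-\delta\kappa^2(t-\tau)}$ against $\|B_{1,\kappa}(\tau,\cdot)\|_{L^\infty(\omega_{\delta/2,j})}$, which is only controlled in a time-averaged sense. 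Note also that $B_{1,\kappa}$ does \emph{not} relax to something small: the source $\frac1\kappa\Im(\bar\psi_\kappa\nabla_{\kappa A_\kappa}\psi_\kappa)$ keeps it of size $O(1)$, so $\kappa B_{1,\kappa}$ is of order $\kappa$ and competes with the spectral gap $\delta\kappa^2$ only after this time-averaging; asserting that "one controls $\|B_\kappa(t,\cdot)-\kappa B_n\|$ uniformly" and importing the steady-state Agmon estimate skips exactly the part of the argument that makes the time-dependent theorem nontrivial.
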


Finally, we consider steady-state solutions of (\ref{eq:1}) in the
large domain limit, i.e., we set $\kappa=c=1$ and stretch $\Omega$ by a factor
of $R \gg 1 $.  Let $\Omega^R$ be the image of $\Omega$ under the map $x\to Rx$.
We consider again steady-state solutions of \eqref{eq:1}.
\begin{subequations}
\label{eq:18}
\begin{alignat}{2}
& \Delta_A\psi + \psi   \left( 1 - |\psi|^{2} \right)-i\phi\psi =0 & \quad \text{ in } \Omega^R\, ,\\
 & \curl^2A +  \nabla\phi  =  \Im(\bar\psi\, \nabla_A\psi) & \quad \text{ in }  \Omega^R\,,\\
  &\psi=0 &\quad \text{ on }  \partial\Omega_c^R\,, \\
 &\nabla_A\psi\cdot\nu=0 & \quad \text{ on }  \partial\Omega_i^R\,,\\
 & \frac{\partial\phi}{\partial\nu}= \frac{F(R)}{R}J &\quad \text{ on } \partial\Omega_c^R\,,\\
& \frac{\partial\phi}{\partial\nu}= 0 &\quad \text{ on } \partial\Omega_i^R\,,\\[1.2ex]
&\dashint_{\partial\Omega^R}\curl A(x)\,ds = F(R)h_{ex}\,.
\end{alignat}
\end{subequations}
In the above $F(R)=R^{\gamma}$ for some $0<\gamma<1$. We study the above
problem in the limit $R\to\infty$. Assuming again (\ref{eq:2})-(\ref{eq:10})
we establish the following result.
\begin{proposition}
\label{largedomain}
 Let $(\psi,A,\phi)$ denote a solution of \eqref{eq:18}.
 Then,  there exists a compact  set $K\subset\Omega$, $C>0$, $R_0>0$, and $\alpha >0$,
such that for any  $R>R_0\,$ we have
\begin{equation}
\label{eq:126a}
\int_{K_R} |\psi (x)|^2 \,dx \leq C  e^{-\alpha R} \,,
\end{equation}
where $K_R$ is the image of $K$ under the map $x\to Rx$.
\end{proposition}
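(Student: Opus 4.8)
The plan is to remove the dilation by a change of scale, reducing the problem to a large-parameter Ginzburg--Landau system on the fixed domain $\Omega$, and then to run the argument of Theorem \ref{mainsteadystate}. With $y=x/R$ one sets
\[
 \hat\psi(y)=\psi(Ry)\,,\qquad \tilde A(y)=A(Ry)\,,\qquad \hat\phi(y)=R\,\phi(Ry)\,;
\]
a direct computation (in the Coulomb gauge) shows that $(\hat\psi,\tilde A,\hat\phi)$ is a steady-state solution on $\Omega$ of the system obtained from \eqref{eq:1} by taking $\kappa=R$ and $c=1$, replacing $h_{ex}$ and $J$ by $R^{\gamma}h_{ex}$ and $R^{\gamma}J$, and multiplying the right-hand side of (\ref{eq:1}b) by the extra factor $R^{2}$; this factor is the only discrepancy with \eqref{eq:1} and reflects that the sample is large compared with the penetration depth. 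By \eqref{eq:7} the effective boundary fields are then $h_j^{\rm eff}=R^{\gamma}h_j$, so, assuming $h_2>0$, for $R$ large one has $|h_2^{\rm eff}|>1$ (indeed $>1/\Theta_0$), the effective normal field being $R^{\gamma}B_n$; the region where it exceeds $1$, namely $\{B_n>R^{-\gamma}\}$, increases to $\{B_n>0\}$ as $R\to\infty$.

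Next I would fix the compact set once and for all. Since $B_n\equiv h_2>0$ on $\partial\Omega_{i,2}$, pick $\delta_0\in(0,h_2)$ and a closed ball $K\subset\{x\in\Omega:\,B_n(x)>\delta_0\}$ lying at positive distance both from $\{B_n=0\}$ and from $\partial\Omega$. Then for $R\geq R_0$ the ball $K$ is contained in the effective superconducting region, and, since $\int_{K_R}|\psi(x)|^{2}\,dx=R^{2}\int_K|\hat\psi(y)|^{2}\,dy$, the estimate \eqref{eq:126a} will follow from an exponential bound $\int_K|\hat\psi|^{2}\leq Ce^{-2\alpha R}$ (a polynomial factor $R^{2}$ being harmless against the exponential).

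The core of the proof is then parallel to that of Theorem \ref{mainsteadystate}, the system not being literally \eqref{eq:1} forcing one to re-run the relevant steps. First, the maximum principle gives $\|\hat\psi\|_{L^\infty(\Omega)}\leq 1$, and multiplying (\ref{eq:1}a) by $\overline{\hat\psi}$ and integrating yields $\int_\Omega|\nabla_{R\tilde A}\hat\psi|^{2}=\int_\Omega R^{2}(1-|\hat\psi|^{2})|\hat\psi|^{2}$; combining this with the spectral lower bound for the magnetic Laplacian on the set where the effective normal field $R^{\gamma}B_n$ exceeds $1$ (all of $\Omega$ outside a tube of width $O(R^{-\gamma})$ about $\{B_n=0\}$) one gets, as in the proof of Proposition \ref{entire}, a bound $\|\hat\psi\|_{L^2(\Omega)}\leq CR^{-\theta}$ for some $\theta>0$. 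One then controls the induced magnetic field: writing $\curl\tilde A=R^{1+\gamma}B_n+w$, the remainder $w$ solves a Poisson equation whose right-hand side is $R\,\curl$ of the superconducting current, with homogeneous boundary data, and since that current is exponentially small wherever $\hat\psi$ is, one bootstraps the bound for $w$ together with the magnetic Agmon estimate to obtain $\|w\|_{L^\infty}=o(R^{1+\gamma})$ near $K$. Consequently, on a fixed neighborhood $V$ of $K$ with $\overline V\subset\{B_n>\delta_0/2\}$ one has, for $R$ large, $\curl(R\tilde A)=R\,\curl\tilde A\geq\tfrac14 R^{2+\gamma}\delta_0\gg R^{2}$, that is, the field felt by $\hat\psi$ exceeds the threshold $\kappa^{2}=R^{2}$ on $V$. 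Feeding this into the magnetic Agmon estimate of Theorem \ref{mainsteadystate}, with an exponential weight $\Phi$ supported in $V$ and satisfying $|\nabla\Phi|^{2}\lesssim\curl(R\tilde A)-R^{2}$, gives $\int_K|\hat\psi|^{2}\leq Ce^{-2\alpha R}\|\hat\psi\|_{L^2(\Omega)}^{2}\leq Ce^{-2\alpha R}$; undoing the scaling yields \eqref{eq:126a}.

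I expect the magnetic-field estimate inside this bootstrap to be the main obstacle. Because rescaling amplifies the source of the Ampère equation by $R^{2}$ while the effective normal field is only of order $R^{1+\gamma}$ with $\gamma<1$, one cannot simply quote the field bounds of Theorem \ref{mainsteadystate}; one must interlace the control of $w$ with the decay of $\hat\psi$, tracking powers of $R$ carefully near $\{B_n=0\}$ and near $\partial\Omega$, where the normal field is small. This is exactly why the result is weaker than Theorem \ref{mainsteadystate} — exponential decay only on a fixed compact set lying well inside $\{B_n\neq0\}$, rather than on every compact subset of $\omega_j\cup\partial\Omega_c$ — and it is where the range $0<\gamma<1$ enters: $\gamma>0$ makes the effective field diverge so that the good region is non-trivial and eventually dominates $\kappa^{2}$, while $\gamma<1$ keeps the effective field in the regime $h\ll\kappa$ in which the analysis operates.
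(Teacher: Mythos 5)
Your reduction to a fixed domain and the use of the corner eigenvalue bound together with an Agmon-type weighted estimate are indeed in the spirit of the paper's Section 5, but the central step of your argument has a genuine gap. You fix in advance a ball $K\subset\{B_n>\delta_0\}$, write $\curl\tilde A=R^{1+\gamma}B_n+w$, and claim that $\|w\|_{L^\infty}=o(R^{1+\gamma})$ near $K$ can be obtained "by bootstrapping", because the supercurrent is "exponentially small wherever $\hat\psi$ is". This is circular and quantitatively does not close. The supercurrent is exponentially small only where $\hat\psi$ has \emph{already} been shown exponentially small; elsewhere (near the zero set of the field, near $\partial\Omega_c$, and wherever no a priori decay is available) the energy identity only gives $R\,\|\Im(\bar{\hat\psi}\nabla_{R\tilde A}\hat\psi)\|_2\lesssim R^{2}\|\hat\psi\|_2$, and since the field equation is nonlocal, a source of this size anywhere in $\Omega$ can shift the field near $K$ by amounts of order $R^{2}\|\hat\psi\|_2$, which for $\gamma<1$ need not be $o(R^{1+\gamma})$. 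Your auxiliary input $\|\hat\psi\|_2\leq CR^{-\theta}$ is itself doubtful: in the rescaled system the extra factor $R^{2}$ in the Amp\`ere law corresponds to an effective $c\sim R^{-2}$, a regime in which the bound of Proposition \ref{entire}, $(1+c^{-1/2})^{1/3}\kappa^{-1/6}$, is not small (cf.\ the remark following it). In short, proving exponential decay on an \emph{arbitrarily preselected} compact subset of $\{B_n\neq0\}$ is exactly the content of Conjecture \ref{conj}, which the paper leaves open; the fact that one cannot control $B_\epsilon-\epsilon^{-\gamma}B_n$ in the interior is precisely why Proposition \ref{largedomain} only asserts the existence of \emph{some} compact $K$.

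The paper's proof avoids any comparison of the true field with the normal field in the interior. After rescaling ($\epsilon=1/R$, system \eqref{eq:104}), it compares $B_\epsilon$ with the solution $w_\epsilon$ of the screened Dirichlet problem \eqref{eq:111} by a maximum principle exploiting the identity for $\Delta|\psi_\epsilon|^2$ (Lemma \ref{lem:magnetic-bound}, $\|B_\epsilon-1-w_\epsilon\|_\infty\leq\frac12$), proves the Lipschitz bound \eqref{eq:117} for $w_\epsilon$, and then works with the set $\Dg_\delta(\epsilon)=\{|B_\epsilon|<\delta\epsilon^{-\alpha}\}$ directly: the Agmon estimate \eqref{eq:121} gives exponential decay of $\psi_\epsilon$ away from $\Dg_\delta(\epsilon)$, and an $L^1$--$L^2$ argument on $\nabla B_\epsilon$ (using $\|\nabla\phi_\epsilon\|_2\lesssim\epsilon^{-\alpha}$ and the already-established decay) shows that $\Dg_\delta(\epsilon)$ stays at an $\epsilon$-independent distance from a component of $\partial\Omega_i$ \eqref{eq:123}. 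The compact set $K$ is thus located a posteriori near $\partial\Omega_i$, where the boundary data force $|B_\epsilon|$ to remain comparable to $\epsilon^{-\alpha}$ — not inside a prescribed region of $\{B_n>\delta_0\}$. If you want to salvage your plan, you would have to supply exactly the missing field estimate near a prescribed interior compact set, i.e., essentially resolve the conjecture; otherwise you should follow the paper's route of letting the low-field set determine where $K$ may be placed.
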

Note that $h_{ex}$ must be of $\OO(J)$, otherwise $(0,A_n,\phi_n)$ would
be the unique solution.

Physically \eqref{eq:126a} demonstrates that there is a
significant portion of the superconducting sample which remains,
practically, at the normal state, for current densities which may be
very small. This result stands in
contrast with what one finds in standard physics handbooks
\cite{poetal99} where the critical current density, for which the
fully superconducting state looses its stability, is tabulated a
material property. However, our results suggest that the critical
current depends also on the geometry of the superconducting sample. In
fact, according to Proposition \ref{largedomain}, this current density
must decay in the large domain limit. In two-dimensions, our result
suggests that one should search for a critical current (and not
current density), whereas in three-dimensions a density with respect
to cross-section circumference (instead of area) should be obtained.

We note that Proposition \ref{largedomain} is certainly not
optimal. In fact, we expect the following conjecture to be true.
\begin{conjecture}
  \label{conj}
Under the  conditions of Proposition \ref{largedomain}, for any
compact  set $K\subset\Omega\setminus B_n^{-1}(0)$, there exist $R_0>0$, $C>0$, and $\alpha >0$,
such that for any  $R>R_0$ \eqref{eq:126a} is satisfied.
\end{conjecture}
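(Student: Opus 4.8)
\medskip
\noindent\emph{Proof proposal.} The natural approach is to transplant the weighted (Agmon‑type) estimate underlying Theorem~\ref{mainsteadystate} and Proposition~\ref{largedomain} to the stretched problem \eqref{eq:18}, the one genuinely new ingredient being a control of the self‑induced magnetic field that is uniform in $R$ up to distances $o(R)$ from the scaled zero set of $B_n$. Write $B=\curl A$ and $B_n^R(x)=F(R)\,B_n(x/R)$. Since $B_n$ solves \eqref{eq:12}, one checks that $\Delta B_n^R=0$ in $\Omega^R$, $\partial_\tau B_n^R=\frac{F(R)}{R}J(\cdot/R)$ on $\partial\Omega^R$, and $\dashint_{\partial\Omega^R}B_n^R\,ds=F(R)h_{ex}$, so that $B_n^R$ is exactly the normal field attached to \eqref{eq:18}. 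Because $F(R)=R^\gamma\to\infty$, for every compact $K\subset\Omega\setminus B_n^{-1}(0)$ we have $m_K:=\min_{K}|B_n|>0$ and $d_K:=\dist(K,B_n^{-1}(0))>0$ (here $B_n\in W^{2,p}$ by \eqref{reguphi} is continuous up to $\partial\Omega$), hence on $K_R$ one has $|B_n^R|\ge F(R)\,m_K\to\infty$ and $\dist\big(K_R,(B_n^{-1}(0))^R\big)=R\,d_K$. In particular $|B_n^R|$ eventually exceeds both $1$ and $1/\Theta_0$ on $K_R$, so the distinction between interior points of $K$ and points of $K\cap\partial\Omega$ will be immaterial.

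\emph{A priori bounds.} As in the earlier sections, $\|\psi\|_{L^\infty(\Omega^R)}\le1$ by the maximum principle for (\ref{eq:18}a), and multiplying (\ref{eq:18}a) by $\bar\psi$, integrating, and taking the real part (the boundary term vanishes by (\ref{eq:18}c)--(\ref{eq:18}d), the $\phi$ term is purely imaginary) gives
\begin{equation}
\int_{\Omega^R}|\nabla_A\psi|^2\,dx=\int_{\Omega^R}(1-|\psi|^2)|\psi|^2\,dx\le\|\psi\|_{L^2(\Omega^R)}^2 .
\end{equation}
Taking the two‑dimensional $\curl$ of (\ref{eq:18}b) eliminates $\nabla\phi$ and yields $-\Delta(B-B_n^R)=\curl\,\Im(\bar\psi\nabla_A\psi)$ in $\Omega^R$, together with a homogeneous tangential‑derivative condition on $\partial\Omega^R$ (from (\ref{eq:18}e)--(\ref{eq:18}f) and $\psi|_{\partial\Omega_c^R}=0$) and $\dashint_{\partial\Omega^R}(B-B_n^R)\,ds=0$; testing against $B-B_n^R$ gives
\begin{equation}
\|\nabla(B-B_n^R)\|_{L^2(\Omega^R)}\le\|\Im(\bar\psi\nabla_A\psi)\|_{L^2(\Omega^R)}\le\|\nabla_A\psi\|_{L^2(\Omega^R)} ,
\end{equation}
after which, using (R1), interior and up‑to‑the‑boundary elliptic regularity upgrade this to an $L^\infty$ bound on compacta.

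\emph{Agmon estimate and conclusion.} Let $w_R$ be an Agmon‑type weight comparable to $\dist\big(\cdot,\{x\in\Omega^R:|B(x)|\le1/\Theta_0\}\big)$ and fix a small $\alpha>0$ (any fixed value is admissible, since on the relevant region $|B|-1$ is not merely bounded below but tends to $\infty$). Multiplying (\ref{eq:18}a) by $e^{2\alpha w_R}\bar\psi$, integrating, absorbing $|\nabla w_R|\le1$, and using the magnetic quadratic‑form inequality $\int|\nabla_A u|^2\ge\int(|B|-C)|u|^2$ valid near $\partial\Omega^R$ with a constant governed by $\Theta_0$ (cf.\ \cite{fohe09}), one obtains, for $R$ large,
\begin{equation}
\int_{\Omega^R}e^{2\alpha w_R}(|B|-1)|\psi|^2\,dx\le C\int_{\{|B|\le1/\Theta_0\}}|\psi|^2\,dx\le C\,\big|\{|B|\le1/\Theta_0\}\cap\Omega^R\big|\le C R^2 .
\end{equation}
Granting the field bound of the next paragraph, $\{x\in\Omega^R:|B(x)|\le1/\Theta_0\}\subset R\cdot\{|B_n|\le\eta_R\}$ with $\eta_R\to0$, so that $\dist(K_R,\{|B|\le1/\Theta_0\})\ge R\,d_K(1+o(1))$; since $|\psi|\le1$ this yields $\int_{K_R}|\psi|^2\,dx\le C R^2 e^{-2\alpha R d_K(1+o(1))}\le C'e^{-\alpha R}$, which is \eqref{eq:126a}.

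\emph{The main obstacle.} Everything above is routine except the claim $\|B-B_n^R\|_{L^\infty(\Omega^R)}=o(F(R))$ (or at least this bound on $K_R$ and on a neighbourhood of $\partial\Omega^R$). The estimate $\|\nabla(B-B_n^R)\|_{L^2}\le\|\nabla_A\psi\|_{L^2}$ alone is not enough: on the growing domain $\Omega^R$ the Poincar\'e/Sobolev constants degrade polynomially in $R$, and the electric potential $\phi$ — whose oscillation over $\Omega^R$ is itself of order $F(R)$ — enters the elliptic estimates needed to pass from $\|\nabla_A\psi\|_{L^2}$ to an $L^p$, $p>2$, bound on the supercurrent $\Im(\bar\psi\nabla_A\psi)$. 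Closing this would require a localized bootstrap on (\ref{eq:18}a)--(\ref{eq:18}b) exploiting the \emph{quadratic} smallness of the supercurrent in $\psi$, together with a far‑field (logarithmic‑potential) decay estimate for the solution of $-\Delta(B-B_n^R)=\curl\,\Im(\bar\psi\nabla_A\psi)$ away from the neighbourhood of $B_n^{-1}(0)$ where $\psi$ need not be small. This is exactly the gap between Proposition~\ref{largedomain} — where $K$ is chosen so deep inside $\{B_n\neq0\}$ that only a crude bound on $B-B_n^R$ on that one fixed set is needed — and the conjecture, where $K_R$ is allowed to approach $(B_n^{-1}(0))^R$.
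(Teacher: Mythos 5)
This statement is Conjecture~\ref{conj}: the paper explicitly offers no proof of it (``We note that Proposition~\ref{largedomain} is certainly not optimal. In fact, we expect the following conjecture to be true.''), so there is no argument in the text to compare yours against. Your proposal is an outline of a plausible attack, not a proof, and to your credit you say so yourself: the paragraph labelled ``The main obstacle'' is exactly the reason the statement is left as a conjecture. The missing step is the uniform bound $\|B-B_n^R\|_{L^\infty}=o(F(R))$ (or at least $o(F(R))$ on $K_R$ and near $\partial\Omega^R$), which is what would force the low-field set $\{|B|\le 1/\Theta_0\}$ into a shrinking neighbourhood of $R\cdot B_n^{-1}(0)$ and hence put $K_R$ at distance of order $R$ from it. The energy identity only gives $\|\nabla(B-B_n^R)\|_{L^2(\Omega^R)}\le\|\nabla_A\psi\|_{L^2(\Omega^R)}\le\|\psi\|_{L^2(\Omega^R)}$, and on a domain of diameter $R$ this does not yield a sup-norm bound that beats $F(R)=R^\gamma$; the bootstrap through $\phi$ (whose oscillation is itself $O(F(R))$) is precisely where the argument does not close. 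Note that the paper's own proof of Proposition~\ref{largedomain} (Section~\ref{sec:5}) deliberately sidesteps this: it never locates $\Dg_\delta(\epsilon)=\{|B_\epsilon|<\delta\epsilon^{-\alpha}\}$ near $B_n^{-1}(0)$, but only proves, via the auxiliary function $w_\epsilon$ of Lemma~\ref{lem:magnetic-bound}, the gradient bound of Lemma~\ref{Lem3}, and the $L^1$--$L^2$ argument giving \eqref{eq:123}, that $\Dg_\delta(\epsilon)$ stays at distance $\ge C_\delta$ from at least one component of $\partial\Omega_i$ --- whence the existence of \emph{some} compact $K$, with no control over which one.

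Two smaller points. First, the inequality you invoke near the boundary, $\int|\nabla_A u|^2\ge\int(|B|-C)|u|^2$, is not correct as written for the magnetic Neumann problem: the sharp local lower bound there is $\Theta_0|B|$ (this is the content of Section~\ref{sec:2} and \eqref{eq:22}, obtained through a partition of unity at scale $\epsilon^{1/3}$, not pointwise), so your left-hand side should carry $\Theta_0|B|-1$ rather than $|B|-1$ on the boundary layer. This is harmless for your purposes since $|B|\to\infty$ on the relevant set, but the display as stated is not a valid intermediate step. Second, even granting the field bound, your Agmon weight is the distance to $\{|B|\le 1/\Theta_0\}$, and you still need the localization errors in \eqref{eq:22} to be uniform over $\Omega^R$ at the field strength $\sim\delta F(R)$; this is the analogue of \eqref{eq:121}--\eqref{eq:122} and should go through, but it is not free. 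In summary: the statement is open, your sketch correctly identifies the obstruction, and the proposal does not establish the conjecture.
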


The rest of this contribution is organized as follows.  In the next
section we establish some preliminary results related to the
eigenvalues of the magnetic Laplacian in the presence of
Dirichlet-Neumann corners. We use these results in Section 3 where we
establish Theorem \ref{mainsteadystate} and Proposition~\ref{entire}.
In Section 4 we consider the time-dependent problem and establish, in
particular, Theorem~\ref{main}. Finally, in the last section, we
obtain some weaker results for steady-state solutions of (\ref{eq:1})
in the large domain limit.

\section{Magnetic Laplacian Ground States}
\label{sec:2}
In this section, we analyze the spectral properties of the Schr\"odinger
operator with constant magnetic field in a sector. The Neumann problem
has been addressed by V. Bonnaillie-No\"el in \cite{Bon}. In the sequel
we shall need, however, a lower bound for the ground state energy of
the above operator on a Dirichlet-Neumann sector, i.e., a Dirichlet
condition is prescribed on one side of the sector and the magnetic
Neumann condition on the other side. We begin by the following
auxiliary lemma whose main idea has been introduced to us by M.~Dauge
\cite{da14}.  Hereafter the norms in the Lebesgue spaces $L^p(\O)$,
$L^p(\O,\mathbb  R^2)$ and $L^p(\O,\mathbb  C)$ will be denoted by
$\|\cdot\|_{L^p(\O)}$ or $\|\cdot\|_p$, and the norms in the Sobolev spaces
$W^{k,p}(\O,\mathbb R)$, $W^{k,p}(\O,\mathbb  R^2)$ and $W^{k,p}(\O,\mathbb  C)$
will be denoted by $\|\cdot\|_{W^{k,p}(\O)}$ or $\|\cdot\|_{k,p}$.
 \begin{lemma}
\label{lem:Dirichlet-Neumann}
 Let $S_\alpha$ denote an infinite sector of angle $\alpha \in (0,\pi]$, i.e.,
 \begin{displaymath}
   S_\alpha = \{ (x,y)\in\R^2 \,: \, 0<\arg(x+iy)<\alpha \}\,.
 \end{displaymath}
Let further
 \begin{displaymath}
   \Hg_\alpha = \{ u\in H^1(S_\alpha) \,: \, u(r\cos\alpha,r\sin\alpha)=0\,, \; \forall r>0\}\,,
 \end{displaymath}
and
\begin{displaymath}
  \Theta^{DN}_\alpha  = \inf_{\begin{subarray}{c}
      u\in \Hg_\alpha \\
      \|u\|_2 =1
    \end{subarray}}
\int_{S_\alpha} |(\nabla-iF) u|^2 \,dx \,,
\end{displaymath}
where $F$ is a magnetic potential satisfying $\curl F = 1$ in
$S_\alpha$. Then,
 \begin{equation}
   \label{eq:19}
  \Theta^{DN}_\alpha =\Theta_0\,,
 \end{equation}
where
  \begin{equation}
\label{eq:20}
  \Theta_0  = \inf_{\begin{subarray}{c}
 u\in H^1(S_\pi)  \\
      \|u\|_2 =1
    \end{subarray}}
\int_{S_\pi} |(\nabla-iF)u|^2 \,dx \,.
\end{equation}
\end{lemma}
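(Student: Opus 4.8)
The plan is to establish the two inequalities $\Theta^{DN}_\alpha\ge\Theta_0$ and $\Theta^{DN}_\alpha\le\Theta_0$ separately, after a preliminary gauge reduction. Both $\Theta^{DN}_\alpha$ and $\Theta_0$ are gauge invariant: on the simply connected sectors $S_\alpha$ and $S_\pi$ any two potentials with curl equal to $1$ differ by a gradient $\nabla\omega$, and $u\mapsto e^{i\omega}u$ is a norm-preserving bijection of $H^1$---and of $\Hg_\alpha$, since multiplication by a unimodular function preserves the vanishing on the Dirichlet ray---that leaves $\int|(\nabla-iF)u|^2$ unchanged. Hence I may fix once and for all one global $F$ on $\R^2$ with $\curl F=1$, concretely $F(x_1,x_2)=(-x_2,0)$, and use it for both quantities.

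For the lower bound I would use extension by zero. Let $u\in\Hg_\alpha$ with $\|u\|_2=1$. Since $\alpha\le\pi$ we have $S_\alpha\subset S_\pi$, and the Dirichlet ray $R_\alpha=\{(r\cos\alpha,r\sin\alpha):r>0\}$ is a Lipschitz piece of $\partial S_\alpha$ lying in $\overline{S_\pi}$. Because the trace of $u$ vanishes on $R_\alpha$, the function $\tilde u$ obtained by extending $u$ by $0$ to $S_\pi$ belongs to $H^1(S_\pi)$, with $\nabla\tilde u=\mathbf 1_{S_\alpha}\,\nabla u$ a.e. and $\|\tilde u\|_2=1$. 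Therefore
\begin{equation*}
\int_{S_\alpha}|(\nabla-iF)u|^2\,dx=\int_{S_\pi}|(\nabla-iF)\tilde u|^2\,dx\ge\Theta_0,
\end{equation*}
the last inequality being precisely the definition \eqref{eq:20} of $\Theta_0$. Taking the infimum over $u$ yields $\Theta^{DN}_\alpha\ge\Theta_0$.

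For the reverse inequality I would exhibit a minimizing sequence in $\Hg_\alpha$ built from the de Gennes model. Let $\varphi_0$ be the $L^2(\R_+)$-normalized ground state of $-\partial_t^2+(t+\zeta_0)^2$ with Neumann condition at $0$, where $\zeta_0=-\sqrt{\Theta_0}$, so that its eigenvalue is $\Theta_0$ and $\int_0^\infty\big(|\varphi_0'|^2+(t+\zeta_0)^2|\varphi_0|^2\big)\,dt=\Theta_0$; recall that $\varphi_0$ and $\varphi_0'$ decay like $e^{-t^2/4}$ (see \cite{fohe09}). In the gauge $F=(-x_2,0)$ the function $\Phi(x_1,x_2)=e^{i\zeta_0 x_1}\varphi_0(x_2)$ satisfies $|(\nabla-iF)\Phi|^2=|\varphi_0'(x_2)|^2+(x_2+\zeta_0)^2|\varphi_0(x_2)|^2$ and is a generalized ground state of the Neumann magnetic Laplacian on $\{x_2>0\}$. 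I would then set
\begin{equation*}
u_n=\chi\big((x_1-T_n)/L_n\big)\,\chi\big(x_2/M_n\big)\,\Phi(x_1,x_2),
\end{equation*}
with $\chi\in C_c^\infty(\R)$ a fixed cutoff, $T_n\to\infty$, $L_n\to\infty$ and $L_n=o(T_n)$, and $M_n=\tfrac12 T_n\tan\alpha$ when $\alpha<\tfrac\pi2$ (respectively $M_n=T_n$ when $\alpha\ge\tfrac\pi2$). For $n$ large the support of $u_n$ lies in $\{x_1>0,\; 0<x_2<x_1\tan\alpha\}\subset S_\alpha$ and is disjoint from $R_\alpha$, so $u_n\in\Hg_\alpha$. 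A routine localization (IMS-type) estimate then gives $\|u_n\|_2^2$ growing linearly in $L_n$ and $\int_{S_\alpha}|(\nabla-iF)u_n|^2\,dx=\Theta_0\|u_n\|_2^2+O(1)$, the $O(1)$ coming from the $x_1$-cutoff while the $x_2$-cutoff contributes only an exponentially small error thanks to the decay of $\varphi_0$. Hence the Rayleigh quotient of $u_n$ tends to $\Theta_0$, so $\Theta^{DN}_\alpha\le\Theta_0$, and together with the lower bound this gives \eqref{eq:19}.

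The substance of the lemma is that the value is exactly $\Theta_0$, uniformly in $\alpha\in(0,\pi]$, in contrast with the Neumann--Neumann or Dirichlet--Dirichlet sectors; this follows from the single observation that a function in $\Hg_\alpha$ extends by zero to a competitor on the half-plane. The only genuinely delicate point is the validity of that zero-extension up to the corner at the origin, but this is standard for Lipschitz domains and the isolated corner is harmless; everything else is a textbook quasimode computation, the only bookkeeping being to choose the scales $T_n$, $L_n$, $M_n$ so that the truncated quasimode simultaneously sits inside $S_\alpha$, vanishes near $R_\alpha$, and has negligible localization error.
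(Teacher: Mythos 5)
Your argument is correct, and its first half is exactly the paper's: the lower bound $\Theta^{DN}_\alpha\geq\Theta_0$ is obtained in both cases by extending a competitor by zero across the Dirichlet ray (the paper phrases this as monotonicity $\Theta^{DN}_{\alpha_1}\geq\Theta^{DN}_{\alpha_2}$ for $\alpha_1<\alpha_2\leq\pi$ followed by $\Theta^{DN}_\pi\geq\Theta_0$, you extend directly to $S_\pi$; the content is identical, and your preliminary remark on gauge invariance is a harmless formalization of what the paper leaves implicit). Where you diverge is the upper bound: the paper simply cites the proof of Persson's Theorem in \cite[Appendix B]{fohe09}, which bounds the bottom of the essential spectrum of the magnetic Dirichlet--Neumann Laplacian on $S_\alpha$ by $\Theta_0$, whereas you construct the required trial states explicitly, truncating the translated de Gennes quasimode $e^{i\zeta_0x_1}\varphi_0(x_2)$ at scales $T_n,L_n,M_n$ chosen so that the support stays in $S_\alpha$, avoids the ray at angle $\alpha$, and makes the cutoff errors $O(1)$ against an $L^2$ norm of order $L_n$. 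This is in fact the same mechanism that underlies the Persson-type bound (functions living far from the corner along the Neumann edge see only the half-plane model), so the routes are morally the same; your version buys self-containedness at the cost of the bookkeeping you describe, while the paper's citation is shorter but outsources the quasimode computation. Your handling of the only delicate points (zero extension in $H^1$ past the corner, membership of the truncated quasimode in $\Hg_\alpha$, and the IMS-type error estimate using $-(\nabla-iF)^2\Phi=\Theta_0\Phi$ and the Gaussian decay of $\varphi_0$) is sound.
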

\begin{proof}
  Let $0<\alpha_1<\alpha_2\leq\pi$ and $u\in\Hg_{\alpha_1}$. Let further
  $\tilde{u}\equiv u$ in $S_{\alpha_1}$ and $\tilde{u}\equiv0$ in
  $S_{\alpha_2}\setminus S_{\alpha_1}$. Clearly $\tilde{u}\in\Hg_{\alpha_2}$, and hence it
  follows that $\Theta^{DN}_{\alpha_1}\geq\Theta^{DN}_{\alpha_2}$. Consequently,
  \begin{displaymath}
      \Theta^{DN}_\alpha \geq \Theta^{DN}_\pi\,, \quad \forall\alpha\leq\pi.
  \end{displaymath}
  From the definition of $\Theta_0$ it follows, however, that
  $\Theta^{DN}_\pi\geq\Theta_0$, and hence, $ \Theta^{DN}_\alpha \geq \Theta_0$ for all
  $\alpha\in(0,\pi]$. The proof of \eqref{eq:19} now easily follows 
    from the proof of Persson's Theorem \cite[Appendix B]{fohe09}
    providing the upper bound $\Theta_0$ for the essential spectrum of the
    magnetic Dirichlet-Neumann Laplacian in $S_\alpha$.
\end{proof}

Let $\Dg\subset\Omega$ have a smooth boundary, except at the corners of $\partial\Omega$. As
in \cite{alhe14} we let
\begin{equation}
  \label{eq:21}
    \mu_\epsilon(\A,\Dg) = \inf_{
    \begin{subarray}{c}
      u\in \Hg(\Dg) \\
      \|u\|_2 =1
    \end{subarray}} \int_\Dg |\epsilon\nabla-i\A u|^2 \,dx \,,
\end{equation}
wherein
\begin{displaymath}
   \Hg(\Dg) = \{ u\in H^1(\Dg)\,:\, u=0 \text{ on } \partial\Dg\setminus (\partial\Dg\cap \partial\Omega_i)\,\} \,.
\end{displaymath}
Let
\begin{displaymath}
  \Sg = \overline{\partial\Omega_c}\cap\overline{\partial\Omega_i}\cap\overline{\partial\Dg}
\end{displaymath}
denote the corners of $\Omega$ belonging to  $\partial\Dg$.  Following \cite{boda06}
we set for a given magnetic potential $\mathcal A \in C^1(\overline{\Omega},\mathbb  R^2)$,
\begin{displaymath}
  b=\inf_{x\in\Dg}|\curl \A| \quad ; \quad b^\prime = \inf_{x\in\partial\Dg\cap\partial\Omega_i}|\curl \A|\,.
\end{displaymath}
The following proposition is similar to a result in \cite{boda06}
obtained for a Neumann boundary condition. Here we treat a
Dirichlet-Neumann boundary condition and allow, in addition, some
dependence of the magnetic potential on the semi-classical parameter.
\begin{proposition}
  Let $a\in W^{1,\infty}(\Dg,\mathbb  R^2)$. There exist $C>0$ and $\epsilon_0>0$ such that for all
$0<\epsilon<\epsilon_0$ we have
\begin{equation}
  \label{eq:22}
 \mu_\epsilon(\A+\epsilon^{1/2}a,\Dg) \geq \epsilon \min(b,\Theta_0b^\prime) [1-C(1+\|\nabla a\|_\infty^2)\epsilon^{1/3}]\,.
\end{equation}
\end{proposition}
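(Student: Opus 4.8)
The plan is to adapt the semiclassical argument of \cite{boda06}, where the analogue of \eqref{eq:22} with a pure magnetic Neumann boundary condition and an $\epsilon$-independent potential is established; two features are new here and dictate the modifications. The first is the presence of the corners $\Sg$, at each of which one side of $\partial\Dg$ carries a Dirichlet condition and the other the magnetic Neumann condition. The second is the term $\epsilon^{1/2}a$, which I will remove at leading order by a gauge transformation.

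\textbf{Localization.} Fix $\rho=\epsilon^{1/3}$ and cover $\overline\Dg$ by balls $B_j=B(x_j,\rho)$, choosing the centres so that $B_j$ is of exactly one of four types (possible for $\epsilon$ small since $\Sg$ is finite): interior balls ($x_j\in\Dg$, $B_j\cap\partial\Dg=\emptyset$); Dirichlet balls ($x_j\in\partial\Dg\setminus\partial\Omega_i$, with $B_j\cap\partial\Dg\subset\partial\Omega_c$); magnetic-Neumann balls ($x_j\in\partial\Dg\cap\partial\Omega_i$, with $B_j\cap\partial\Dg\subset\partial\Omega_i$); and corner balls ($x_j\in\Sg$). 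Pick an associated quadratic partition of unity $\{\chi_j\}$ on $\overline\Dg$ with $\sum_j\chi_j^2\equiv1$, each $\chi_j$ supported in $B_j$, bounded overlap, and $\sum_j|\nabla\chi_j|^2\le C\rho^{-2}$. Writing $\A_\epsilon:=\A+\epsilon^{1/2}a$, the IMS localization formula gives
\[
\int_\Dg|\epsilon\nabla u-i\A_\epsilon u|^2\,dx=\sum_j\int_\Dg|\epsilon\nabla(\chi_ju)-i\A_\epsilon\,\chi_ju|^2\,dx-\epsilon^2\sum_j\int_\Dg|\nabla\chi_j|^2|u|^2\,dx\,,
\]
and the last sum is bounded by $C\epsilon^2\rho^{-2}\|u\|_2^2=C\epsilon^{4/3}\|u\|_2^2$, a relative error of order $\epsilon^{1/3}$.

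\textbf{Local lower bounds.} On each $B_j$ I would gauge away the constant vectors $\A(x_j)$ and $\epsilon^{1/2}a(x_j)$ (each of vanishing curl), so that $\A_\epsilon$ becomes a constant-curl potential $\A_j$, $\curl\A_j\equiv\curl\A(x_j)$, plus a remainder of sup-norm $O(\rho^2)+O(\epsilon^{1/2}\|\nabla a\|_\infty\rho)$ on $B_j$ (the first term from the Taylor expansion of $\A$, the second from the fluctuation of $a$). After flattening the $C^3$ boundary near the non-corner boundary balls — while near a corner $\Dg$ already coincides, by $(R1)$, with a sector of opening $\pi/2$ — one is reduced to the model forms $\int|\epsilon\nabla v-i\A_jv|^2$ over $\R^2$, over a Dirichlet half-plane, over a magnetic-Neumann half-plane, and over a Dirichlet-Neumann sector of opening $\pi/2$. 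Their bottoms are $\epsilon|\curl\A(x_j)|$ (the lowest Landau level), $\epsilon|\curl\A(x_j)|$, $\epsilon\Theta_0|\curl\A(x_j)|$, and — by Lemma \ref{lem:Dirichlet-Neumann}, since $\Theta^{DN}_{\pi/2}=\Theta_0$ — again $\epsilon\Theta_0|\curl\A(x_j)|$. As $|\curl\A(x_j)|\ge b$ for interior and Dirichlet centres and, by continuity of $\curl\A$, $|\curl\A(x_j)|\ge b'$ for centres in $\overline{\partial\Dg\cap\partial\Omega_i}$ (which contains $\Sg$), every local model contributes at least $\epsilon\min(b,\Theta_0b')\|\chi_ju\|_2^2$. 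Comparing each local form to its model through $|z+w|^2\ge(1-\eta)|z|^2-\eta^{-1}|w|^2$ with $\eta=\epsilon^{1/3}$ turns the $O(\rho^2)+O(\epsilon^{1/2}\|\nabla a\|_\infty\rho)$ remainder (together with the $O(\rho)$ defect from the variation of $\curl\A$ over $B_j$) into a relative error $O(\epsilon^{1/3})+O(\|\nabla a\|_\infty^2\epsilon^{1/3})$; adding the IMS error, summing over $j$ via $\sum_j\|\chi_ju\|_2^2=\|u\|_2^2$, and taking the infimum over $\|u\|_2=1$ gives \eqref{eq:22}.

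\textbf{Main obstacle.} The genuinely new point is the corner contribution: near a point of $\Sg$ one must bound from below the magnetic Laplacian on a sector with a Dirichlet condition on one side and the magnetic Neumann condition on the other, a model absent from \cite{Bon,boda06}. Lemma \ref{lem:Dirichlet-Neumann} is exactly what resolves it, and its conclusion $\Theta^{DN}_\alpha=\Theta_0$ for every $\alpha\in(0,\pi]$ — the angle-independence — is what lets the corner balls be absorbed into the uniform constant $\min(b,\Theta_0b')$ instead of producing a smaller, angle-dependent one. The rest — constructing the boundary-flattening charts, checking that they perturb the model ground-state energies by only $O(\rho)$, and the bookkeeping that singles out $\rho=\epsilon^{1/3}$ as the scale balancing localization against linearization — is routine and parallels \cite{boda06}.
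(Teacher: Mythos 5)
Your overall skeleton (IMS localization at a small scale, model operators on the plane, half-planes and the quarter-plane, with Lemma \ref{lem:Dirichlet-Neumann} supplying the Dirichlet--Neumann corner bottom $\Theta_0$) is the right circle of ideas, and you correctly identified the corner lemma as the genuinely new ingredient. But there is a concrete quantitative gap in your bookkeeping, and it is precisely the point the paper's proof is structured to avoid. In your scheme you replace $\A$ on each ball $B(x_j,\rho)$, $\rho=\epsilon^{1/3}$, by a constant-curl linearization, with a sup-norm remainder $O(\rho^2)$; absorbing it by $|z+w|^2\ge(1-\eta)|z|^2-\eta^{-1}|w|^2$ with $\eta=\epsilon^{1/3}$ produces the error term $\eta^{-1}\rho^4\|\chi_j u\|_2^2=C\,\epsilon\,\|\chi_j u\|_2^2$, which is of the \emph{same order as the main term} $\epsilon\min(b,\Theta_0 b')$, so as written the lower bound degenerates rather than yielding a relative error $O(\epsilon^{1/3})$. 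Optimizing $\rho$ and $\eta$ in this single-scale argument (e.g.\ $\rho\sim\epsilon^{3/8}$, $\eta\sim\rho^2\epsilon^{-1/2}$) only gives a relative error $O(\epsilon^{1/4})$, still short of the exponent $1/3$ claimed in \eqref{eq:22}. A further (secondary) issue is that the standing hypothesis is only $\A\in C^1(\overline\Omega,\R^2)$, so the $O(\rho^2)$ Taylor remainder you invoke is not available in general.

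The paper gets the exponent $1/3$ by decoupling the two perturbations. The partition of unity at scale $\epsilon^{1/3}$ is used \emph{only} to freeze $a$: setting $v_i=\eta_i u\,e^{-i\epsilon^{-1/2}a(x_i)\cdot x}$ gauges away $\epsilon^{1/2}a(x_i)$, and each localized piece is compared directly with $\mu_\epsilon(\A,\Dg)$ itself -- no linearization of $\A$ and no model problems at this stage -- at the cost $(1-\epsilon^{1/3})$ plus $C\epsilon^{4/3}(1+\|\nabla a\|_\infty^2)$ coming from the IMS term and the fluctuation $|a-a(x_i)|\le\|\nabla a\|_\infty\epsilon^{1/3}$; this is exactly where $(1+\|\nabla a\|_\infty^2)\epsilon^{1/3}$ in \eqref{eq:22} originates. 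The dependence on $\A$, including the corner contribution, is then handled by quoting (the steps of) Theorem 7.1 of \cite{boda06}, which yields \eqref{eq:74a} with the sharper relative error $O(\epsilon^{1/2})$ and with the extra competitor $\Theta^{DN}_\alpha\inf_{\Sg}|\curl\A|$, removed by \eqref{eq:19}. If you want to keep your self-contained route, you must either import the finer analysis of \cite{boda06} for the $\A$-part (not merely the crude per-ball comparison), or accept a weaker exponent than $\epsilon^{1/3}$; the cleanest fix is to reorganize as the paper does, using your localization only for the $\epsilon^{1/2}a$ term.
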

\begin{proof}
  The case $b=0$ being trivial, we assume that $b>0$.  We begin by
  introducing for any $\epsilon >0$ a partition of unity (cf. also
  \cite{hemo96}), i.e., families $\{\eta_i\}_{i=1}^K\subset C^\infty(\Omega)$, and
  $\{x_i\}_{i=1}^K\subset\Dg$ satisfying
\begin{displaymath}
  \sum_{i=1}^K\eta_i^2=1 \quad ; \quad {\rm supp}\,\eta_i \subset B(x_i,\epsilon^{1/3})  \quad
  ; \quad \sum_{i=1}^K|\nabla\eta_i|^2 \leq \frac{C}{\epsilon^{2/3}} \,,
\end{displaymath}
where $C>0$ is independent of $\epsilon$.

It can be easily verified that for any $u\in H^1(\Omega,\C)$
\begin{equation}
\label{eq:23}
  \|(\epsilon\nabla-i[\A+\epsilon^{1/2}a])u\|_2^2=\sum_{i=1}^K\big[
  \|(\epsilon\nabla-i[\A+\epsilon^{1/2}a])(\eta_iu)\|_2^2 - \epsilon^2\|u\nabla\eta_i\|_2^2 \big]\,.
\end{equation}
We now set
\begin{displaymath}
  v_i=\eta_iu\exp(-i\epsilon^{-1/2}a(x_i)\cdot x) \,,
\end{displaymath}
to obtain that
$$\aligned
& \|(\epsilon\nabla-i[\A+\epsilon^{1/2}a])(\eta_iu)\|_2^2\\
=&\|(\epsilon\nabla-i[\A+\epsilon^{1/2}(a-a(x_i))])v_i\|_2^2\\
 \geq&
(1-\epsilon^{1/3})\|(\epsilon\nabla-i\A)v_i\|_2^2 - \epsilon^{2/3}\|a-a(x_i))v_i\|_2^2\\
 \geq&
(1-\epsilon^{1/3})\mu_\epsilon(\A,\Dg) \|v_i\|_2^2  - C\epsilon^{4/3}\|v_i\|_2^2\,.
\endaligned
$$
Substituting the above into \eqref{eq:23} yields
\begin{equation}
\label{eq:24}
   \|(\epsilon\nabla-i[\A+\epsilon^{1/2}a])u\|_2^2 \geq (1-\epsilon^{1/3})\mu_\epsilon(\A,\Dg) \|u\|_2^2  - C\epsilon^{4/3}(1+\|\nabla a\|_\infty^2)\|u\|_2^2\,.
\end{equation}
By following the same steps of the proof  of Theorem 7.1 in \cite{boda06}
we can establish that
\begin{equation}\label{eq:74a}
 \mu_\epsilon(\A,\Dg) \geq   \epsilon \min(b,\Theta_0b^\prime,\Theta^{DN}_\alpha\inf_{x\in\Sg}|\curl \A| )
 (1-C\epsilon^{1/2})\,.
\end{equation}
The lemma now follows from \eqref{eq:19}, \eqref{eq:24}, and
\eqref{eq:74a}.
\end{proof}

\section{Steady-State Solutions}\label{sec:3}

We begin by considering steady-state solutions of \eqref{eq:1}
$(\psi_k,A_\kappa,\phi_\kappa)\in H^1(\Omega,\C\times\R^2\times\R)$ in
the limit $\kappa\to + \infty$. Hence, we look at the
system \cite[Section 5]{alhe14}
\begin{subequations}
\label{eq:25}
\begin{alignat}{2}
  - & \nabla_{\kappa A_\kappa}^2\psi_\kappa+i\kappa\phi_\kappa \psi_\kappa =\kappa^2(1-|\psi_\kappa|^2)\psi_\kappa & \quad \text{ in } \Omega \,,\\
  - &\curl^2A_\kappa+ \frac{1}{c}\nabla\phi_\kappa  =\frac{1}{\kappa}\Im(\bar\psi_\kappa\nabla_{\kappa A_\kappa}\psi_\kappa) &
  \quad \text{ in }  \Omega\,, \\
  &\psi_\kappa =0 &\quad \text{ on }  \partial\Omega_c\,, \\
  &\nabla_{\kappa A_\kappa}\psi_\kappa\cdot\nu=0 & \quad \text{ on }  \partial\Omega_i\,, \\
  & \frac{\partial\phi_\kappa}{\partial\nu} = - c\kappa J(x) &\quad \text{ on } \partial\Omega_c\,, \\
  &\frac{\partial\phi_\kappa}{\partial\nu}=0  &\quad \text{ on }  \partial\Omega_i \,, \\[1.2ex]
  &\dashint_{\partial\Omega}\curl A_\kappa \, ds = \kappa h_{ex} \,,&
\end{alignat}
\end{subequations}
with the additional gauge restriction \eqref{eq:5}.  In the above
$(\psi_k,A_\kappa,\phi_\kappa)$ is the same as $(\psi,A,\phi)$ in \eqref{eq:1}. The
subscript $\kappa$ has been added to emphasize the limit we consider
here. We assume in addition (\ref{eq:2})-(\ref{eq:10}).  By the strong
maximum principle we easily obtain that
\begin{equation}
\label{eq:26}
  \|\psi_\kappa\|_\infty <1 \,.
\end{equation}

Let $h$ be given by
\eqref{eq:9}. It has been demonstrated in \cite{alhe14} that for
some $h_c>0$, when $h<h_c\kappa$, the normal state looses its stability.
Since we consider cases for which $1<h\ll\kappa$ it is reasonable to expect
that other steady-state solutions would exist. We note, however, that
in contrast with the case $J=0$, where the existence of steady-state
solutions can be proved using variational arguments (inapplicable in
our case), existence of steady-state solutions to \eqref{eq:25} is yet an open problem
when an electric current is applied. We shall address time-dependent
solutions in the next section.

Next we set
\begin{subequations}
\label{eq:27}
  \begin{equation}
\label{eq:28}
 A_{1,\kappa}=A_\kappa-\kappa A_n\,,\q
\phi_{1,\kappa}=\phi_\kappa-c\kappa\phi_n\,.\tag{\ref{eq:27}a,b}
\end{equation}
\end{subequations}
Set further
\begin{subequations}
\label{eq:29}
  \begin{equation}
\label{eq:30}
B_\kappa =\curl A_\kappa\,,\q
B_{1,\kappa}=\curl A_{1,\kappa} \,. \tag{\ref{eq:29}a,b}
\end{equation}
\end{subequations}
By
(\ref{eq:25}b) we then have
\begin{subequations}
\label{eq:31}
  \begin{empheq}[left={\empheqlbrace}]{alignat=2}
  &\curl B_{1,\kappa} +  \frac{1}{c}\nabla\phi_{1,\kappa}  =  \frac{1}{\kappa}\Im(\bar\psi_\kappa\, \nabla_{\kappa A_\kappa}\psi_\kappa)  &  \text{ in }
  \Omega\,,\\
   &\frac{\partial\phi_{1,\kappa}}{\partial\nu}= 0 & \text{ on } \partial\Omega\,, \\
&\dashint_{\partial\Omega}B_{1,\kappa}(x)\,ds = 0 \,. &\label{eq:42c}
\end{empheq}
\end{subequations}
Note that since $\partial B_{1,\kappa}/\partial\tau=\partial\phi_{1,\kappa}/\partial\nu=0$ on $\partial\Omega$
we must have by (\ref{eq:31}c) that
\begin{equation}
\label{eq:32}
  B_{1,\kappa}|_{\partial\Omega}\equiv 0\,.
\end{equation}
Taking the divergence of (\ref{eq:25}b) yields, with the aid of the
imaginary part of (\ref{eq:25}a),  that $\phi_\kappa $ is a weak
solution of
\begin{equation}
  \label{eq:33}
  \begin{cases}
    -\Delta\phi_\kappa + c \,|\psi_\kappa|^2 \phi_\kappa =0 & \text{ in } \Omega\,, \\
  \frac{\partial\phi_\kappa}{\partial\nu}= c\kappa J &  \text{ on } \partial\Omega_c\,,\\
 \frac{\partial\phi_\kappa}{\partial\nu}= 0 &  \text{ on } \partial\Omega_i\,.
  \end{cases}
\end{equation}
By assumption $\phi_\kappa\in H^1(\Omega)$ and hence, by \cite[Proposition
A.2]{alhe14} we obtain that $\phi_\kappa\in W^{2,p}(\Omega)$ for all $p\geq2$, hence $\phi_\kappa \in C^1(\overline{\Omega})$.  By
(\ref{eq:27}b) we then have
\begin{displaymath}
  \begin{cases}
    -\Delta\phi_{1,\kappa} + c\,  |\psi_\kappa|^2 \phi_{1,\kappa} =- \kappa c^2|\psi_k|^2 \phi_n & \text{ in } \Omega\,, \\
  \frac{\partial\phi_{1,\kappa}}{\partial\nu}= 0 &  \text{ on } \partial\Omega \,.
  \end{cases}
\end{displaymath}
Let $K=\|\phi_n\|_\infty$ and $w=\phi_{1,\kappa} +K\kappa c$.  Clearly,
\begin{displaymath}
  \begin{cases}
    -\Delta w+ c\, |\psi_\kappa|^2  w =- \kappa c^2|\psi_k|^2(\phi_n-K)\geq 0 & \text{ in } \Omega\,, \\
  \frac{\partial w}{\partial\nu}= 0 &  \text{ on } \partial\Omega \,.
  \end{cases}
\end{displaymath}
It can be easily verified that $w$ is the minimizer in $H^1(\Omega)$ of
\begin{displaymath}
  \Jg(v) = \|\nabla v\|_2^2 + c\|\psi_\kappa v\|_2^2+\kappa c^2\langle|\psi_k|^2(\phi_n-K),v\rangle \,.
\end{displaymath}
As
\begin{displaymath}
   \Jg(v_+) \leq    \Jg(v)\,,
\end{displaymath}
it easily follows that $ w\geq 0$, that is
\begin{displaymath}
  \phi_{1,\kappa}+K\kappa c \geq 0 \,.
\end{displaymath}
In a similar manner we obtain
\begin{displaymath}
  \phi_{1,\kappa}-K\kappa c \leq 0 \,,
\end{displaymath}
which together with (\ref{eq:27}b) yields
\begin{equation}
\label{eq:34}
  \|\phi_\kappa\|_\infty  \leq C(\Omega,J)\, c\, \kappa \,.
\end{equation}
We now apply again Proposition A.2 in \cite{alhe14} to obtain that for
any $p\geq2$
\begin{equation}\label{estlp}
  \|\phi_\kappa\|_{2,p} \leq C(\Omega,J)\, c\, \kappa \,.
\end{equation}
We note that all elliptic estimates must be taken with special care
since $\Omega$ possesses corners.  The necessary details (with references
therein) can be found in Appendices A and B of \cite{alhe14} .

Next we set for $\delta >0$  and $\k\geq 1$,
\begin{displaymath}
  D_\delta(\kappa) = \{ x\in\Omega \,: \,|B_\kappa(x)|>(1+\delta)\kappa \} \,,
\end{displaymath}
and
\begin{equation}
\label{eq:35}
   S_\delta = \{ x\in\Omega \,: \,|B_n(x)|>(1+\delta)\}\,.
\end{equation}
By either (\ref{eq:10}a) or (\ref{eq:10}b), it follows that for $0<\delta <
h-1$, $S_\delta\neq\emptyset\,$. Below we show that the same is true for $ D_\delta(\kappa)$.
Note that \eqref{eq:8} implies that $S_\delta$ consists of two disjoint
sets:
\begin{equation}\label{sdj}
S_{\delta} = S_{\delta,1} \cup S_{\delta,2}\,,
\end{equation}
one near $\partial\Omega_{i,1}$ denoted by $S_{\delta,1}$, and one
near $\partial\Omega_{i,2}$ denoted by $S_{\delta,2}\,$. \\
We then let
\begin{equation}
  \label{eq:36}
\CC_{\delta,j}=\partial S_{\delta,j}\setminus (\partial\Omega\cap\partial S_{\delta,j})\,,\quad j=1,2\,,
\end{equation}
and
\begin{equation}\label{Cdi}
 \CC_\delta=\CC_{\delta,1}\cup \CC_{\delta,2} \,.
\end{equation}

We can now state and prove
\begin{lemma}\label{lemma3.1}
  For any $0<\alpha<1$ there exists $ \kappa_0=\kappa_0(\O,J,\alpha)>0$ such that for
  all $\kappa \geq \kappa_0$ and $0<\delta<h-1$, we have
  \begin{equation}
 \label{eq:37}
S_{\delta+\kappa^{-\alpha}} \subset  D_\delta(\kappa) \,.
  \end{equation}
\end{lemma}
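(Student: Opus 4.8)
The plan is to compare the actual magnetic field $B_\kappa = \curl A_\kappa$ with the rescaled normal field $\kappa B_n$, and show that the difference $B_{1,\kappa} = B_\kappa - \kappa B_n$ is $o(\kappa^{1-\alpha})$ uniformly on $\overline\Omega$. Granting this, the set inclusion \eqref{eq:37} is immediate: if $x\in S_{\delta+\kappa^{-\alpha}}$, then $|B_n(x)|>1+\delta+\kappa^{-\alpha}$, so $|\kappa B_n(x)|>(1+\delta)\kappa+\kappa^{1-\alpha}$, and since $|B_\kappa(x)|\ge |\kappa B_n(x)|-\|B_{1,\kappa}\|_\infty \ge (1+\delta)\kappa + \kappa^{1-\alpha} - \|B_{1,\kappa}\|_\infty$, it suffices to have $\|B_{1,\kappa}\|_\infty \le \kappa^{1-\alpha}$ for $\kappa$ large, which follows from the claimed estimate (and fixes $\kappa_0$ depending on $\Omega$, $J$, $\alpha$). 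So the whole content is the uniform bound on $B_{1,\kappa}$.

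To bound $B_{1,\kappa}$, I would work from \eqref{eq:31} together with the boundary information \eqref{eq:32}, namely $B_{1,\kappa}|_{\partial\Omega}\equiv 0$. Taking the curl of (\ref{eq:31}a) kills the gradient term and yields
\begin{equation}
  -\Delta B_{1,\kappa} = \frac{1}{\kappa}\curl\,\Im(\bar\psi_\kappa\nabla_{\kappa A_\kappa}\psi_\kappa)\quad\text{in }\Omega,\qquad B_{1,\kappa}=0\text{ on }\partial\Omega.
\end{equation}
Thus $B_{1,\kappa}$ solves a Dirichlet problem for the Laplacian (with corners, so one invokes the elliptic theory from Appendices A--B of \cite{alhe14} as already done for $\phi_\kappa$). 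The right-hand side is the (scaled) supercurrent $\frac1\kappa\,j_\kappa$ with $j_\kappa = \Im(\bar\psi_\kappa\nabla_{\kappa A_\kappa}\psi_\kappa)$. The standard a priori bound for Ginzburg--Landau, using $\|\psi_\kappa\|_\infty<1$ from \eqref{eq:26}, gives $\|j_\kappa\|_2 \le C\kappa$ and more precisely an $L^p$ control of $\nabla_{\kappa A_\kappa}\psi_\kappa$; combined with the $L^p$ bounds on $\phi_\kappa$ from \eqref{estlp} (entering the equation for $\nabla_{\kappa A_\kappa}\psi_\kappa$ through the first GL equation) one obtains $\|\frac1\kappa j_\kappa\|_p \le C(\Omega,J)$ for a suitable $p>2$. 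Elliptic $W^{2,p}$ estimates for the Dirichlet Laplacian then give $\|B_{1,\kappa}\|_{W^{2,p}(\Omega)}\le C(\Omega,J)$, and by Sobolev embedding $\|B_{1,\kappa}\|_\infty \le C(\Omega,J)$ — a constant \emph{independent of} $\kappa$. Since $C(\Omega,J) \le \kappa^{1-\alpha}$ once $\kappa \ge \kappa_0(\Omega,J,\alpha)$ (using $\alpha<1$), the proof is complete.

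The main obstacle is the a priori estimate on the scaled supercurrent $\frac1\kappa j_\kappa$ in a norm strong enough to control $B_{1,\kappa}$ in $L^\infty$; this is where the corners of $\Omega$, the dependence of $\psi_\kappa$ on $A_\kappa$, and the bound \eqref{eq:34}--\eqref{estlp} on $\phi_\kappa$ all have to be combined carefully. One cannot simply test the GL equation against $\psi_\kappa$ and stop at $L^2$: $\curl$ applied to an $L^2$ vector field only lands $B_{1,\kappa}$ in $H^1$, which does not embed in $L^\infty$ in two dimensions. So one genuinely needs the $W^{1,p}$ control of $\nabla_{\kappa A_\kappa}\psi_\kappa$ (equivalently $L^p$ control of $j_\kappa$ for some $p>2$), obtained by differentiating the first equation of \eqref{eq:25} and feeding in \eqref{estlp} and the magnetic potential regularity $A_\kappa\in W^{1,p}$; this is routine but must be done with the corner elliptic estimates. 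A cleaner alternative, which I would adopt if the bookkeeping above proves heavy, is to bound $B_{1,\kappa}$ directly by a maximum-principle / subsolution argument on the Dirichlet problem above, dominating $|\Delta B_{1,\kappa}|$ by a fixed $L^p$ function and invoking the corresponding $W^{2,p}\hookrightarrow C^0$ bound — the same mechanism already used in the excerpt to derive \eqref{eq:34}.
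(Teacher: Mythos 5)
Your overall architecture is the same as the paper's: reduce \eqref{eq:37} to a sup-norm bound on $B_{1,\kappa}$ that is $o(\kappa^{1-\alpha})$, then compare $|B_\kappa|\geq \kappa|B_n|-\|B_{1,\kappa}\|_\infty$ pointwise. That reduction is fine. The gap is in the one step you call ``routine'': the claim that $\frac1\kappa\|\Im(\bar\psi_\kappa\nabla_{\kappa A_\kappa}\psi_\kappa)\|_p\leq C(\Omega,J)$ uniformly in $\kappa$ for some $p>2$, hence $\|B_{1,\kappa}\|_\infty\leq C(\Omega,J)$. You give no argument for this, and it is in fact the entire content of the lemma. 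The identity $\|\nabla_{\kappa A_\kappa}\psi_\kappa\|_2=\kappa\|\psi_\kappa\|_2$ (see \eqref{eq:40}) only gives the scaled current in $L^2$; to go above $p=2$ you would essentially need the semiclassical gradient bound $\|\nabla_{\kappa A_\kappa}\psi_\kappa\|_\infty\leq C\kappa$, which in the presence of the electric potential (bounded only by $C c\kappa$, \eqref{eq:34}--\eqref{estlp}), the Dirichlet--Neumann corners, and the need for an a priori bound $\|A_\kappa\|_\infty\leq C\kappa$ is a genuinely nontrivial estimate, not an off-the-shelf fact. If you run the natural bootstrap you sketch (elliptic $W^{2,p}$ estimate for $\psi_\kappa$ plus interpolation between $L^2$ and $L^\infty$), you do \emph{not} get a $\kappa$-independent bound: you get $\frac1\kappa\|\nabla_{\kappa A_\kappa}\psi_\kappa\|_p\leq C\kappa^{3(1-2/p)}$, exactly as in \eqref{eq:47}, and correspondingly $\|B_{1,\kappa}\|_\infty\leq C\kappa^{3(p-2)/p}$ as in \eqref{B1k}. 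This is how the paper proceeds, and the lemma is then saved only by the extra freedom of choosing $p$ close to $2$, namely $2<p<\min(3,6/(2+\alpha))$, so that $\kappa^{3(p-2)/p}=o(\kappa^{1-\alpha})$; your argument, which leans on the bound being $\kappa$-independent, never contemplates this choice and so does not close as written.

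Two further points. First, your ``cleaner alternative'' (maximum principle on $-\Delta B_{1,\kappa}=\frac1\kappa\curl\,\Im(\bar\psi_\kappa\nabla_{\kappa A_\kappa}\psi_\kappa)$, dominating the right-hand side by a fixed $L^p$ function) cannot work: the right-hand side involves derivatives of the supercurrent and is of size $O(\kappa)$ at best, so it is not dominated uniformly in $\kappa$ by any fixed function. Second, a minor slip: with $\frac1\kappa j_\kappa\in L^p$ the right-hand side of your Dirichlet problem is only in $W^{-1,p}$, so elliptic theory gives $B_{1,\kappa}\in W^{1,p}$, not $W^{2,p}$ (the paper works directly from \eqref{eq:31}--\eqref{eq:32}, estimating $\nabla B_{1,\kappa}$ in $L^p$ together with $\nabla\phi_{1,\kappa}$ via a $W^{1,p}$ estimate for \eqref{eq:38}); since $W^{1,p}\hookrightarrow L^\infty$ for $p>2$ this slip is harmless, but the missing uniform (or at least $o(\kappa^{1-\alpha})$, with the attendant choice of $p$ depending on $\alpha$) bound on the scaled current is not.
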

\begin{proof} {\it Step 1: Prove that
for some $C(\Omega,J)>0$
\begin{equation}
\label{eq:43}
\|A_\kappa\|_\infty \leq C\, \kappa  \,.
\end{equation}}

Taking the divergence of (\ref{eq:31}a)  yields, with the aid of  (\ref{eq:31}b),
\begin{equation}
\label{eq:38}
  \begin{cases}
    - \Delta\phi_{1,\kappa}   = - \frac{c}{\kappa}\, \Div \Im(\bar\psi_\kappa\, \nabla_{\kappa A_\kappa
    }\psi_\kappa) & \text{ in } \Omega\,, \\
     \frac{\partial\phi_{1,\kappa}}{\partial\nu}= 0 & \text{ on } \partial\Omega \,.
  \end{cases}
\end{equation}
Multiplying the above equation by $\phi_{1,\kappa }$ and integrating by parts
then yields, with the aid of \eqref{eq:26} and (\ref{eq:25}c,d),
\begin{equation}
\label{eq:39}
     \|\nabla \phi_{1,\kappa} \|_2 \leq \frac{c}{\kappa}\,  \|\nabla_{\kappa A_\kappa}\psi_\kappa\|_2 \,.
 \end{equation}
Taking the inner product of (\ref{eq:25}a) with $\psi_\kappa$ yields, after
integration by parts
\begin{equation}
\label{eq:40}
   \|\nabla_{\kappa A_\kappa}\psi_\kappa\|_2^2 = \kappa^2\, \|\psi_\kappa\|_2^2 \,.
\end{equation}
By \eqref{eq:39} we then obtain that
\begin{displaymath}
\|\nabla\phi_{1,\kappa}\|_2 \leq Cc\,.
\end{displaymath}
Since $\curl B_{1,\kappa}=\nabla_\perp B_{1,\kappa}\,$, the boundedness of $ \|\nabla
B_{1,\kappa}\|_2$ then easily follows from the above and
\eqref{eq:31}. Consequently,
  \begin{equation}
\label{eq:41}
    \frac{1}{c}\|\nabla \phi_{1,\kappa} \|_2+ \|\nabla B_{1,\kappa}\|_2 \leq C\,.
\end{equation}
Note that $\nabla\phi_{1,\kappa}$ and $\nabla_\perp B_{1,\kappa}$ are respectively the  $L^2$ projections
of $\Im(\bar\psi_\kappa\, \nabla_{\kappa A_\kappa}\psi_\kappa)$ on
\begin{displaymath}
   H^0_0(\curl, \Omega)=\{\widehat V \in L^2(\Omega,\mathbb R^2)\,:\, \curl \widehat V=0\}\,,
\end{displaymath}
and
\begin{displaymath}
  \mathcal H^0_d:= \{\widehat W \in L^2(\Omega,\mathbb R^2)\,:\, \Div \widehat
  W=0\mbox{ and } \widehat W \cdot \nu =0\mbox{ on } \pa \Omega \}\,.
\end{displaymath}

Next, we attempt to estimate $\|\nabla \phi_{1,\kappa} \|_p$ and $\|\nabla B_{1,\kappa}\|_p$ for
any \linebreak $p>2$. Since $\Omega$ is simply-connected, we may conclude
from \eqref{eq:5}, (\ref{eq:29}a) and Remark B.2 in \cite{alhe14} that
there exists for any $p>2$ a constant \linebreak $C(p,\Omega)>0$ such that
\begin{displaymath}
  \|A_\kappa\|_{1,p} \leq C\, \|B_\kappa\|_p \,,
\end{displaymath}
for all $\kappa\geq 1$.  Sobolev embeddings then imply
\begin{equation}
  \label{eq:42}
\|A_\kappa\|_\infty \leq C\,  \|B_\kappa\|_p\,.
\end{equation}
Since $ \|\nabla B_{1,\kappa}\|_2$ is uniformly bounded for all $\kappa \geq 1$, we
obtain from \eqref{eq:32}, the Poincar\'e inequality, and Sobolev embeddings
that, for any $p>2$ there exists  a constant  $C(p,\Omega)>0$ such that we have
\begin{equation}\label{estB1p}
 \|B_{1,\kappa}\|_p \leq  C(p,\Omega)\,.
\end{equation}

Hence, recalling from \eqref{reguphi} that $B_n\in L^p$ and independent
of $c$ and $\kappa$, as $J$ is independent of $\kappa$, there
exists a constant $C>0$ such that
\begin{equation}\label{estBp}
  \|B_\k\|_p=\|B_{1,\k}+\k B_n\|_p\leq C\, \k\,.
\end{equation}
Combining the above computations with \eqref{eq:42} then yields
\eqref{eq:43}.
\vspace{1ex}

{\it Step 2: Prove \eqref{eq:37}.}
\vspace{1ex}

We first rewrite (\ref{eq:25}a,c,d)  in the following form
\begin{displaymath}
  \begin{cases}
\Delta\psi_\kappa   = 2i\kappa A_\kappa\cdot\nabla_{\kappa A_\kappa}\psi_\kappa  +  |\kappa A_\kappa|^2\psi_\kappa- \kappa^2\psi_\kappa\big( 1 - |\psi_\kappa|^{2}
\big)+i\kappa\phi_\kappa\psi_\kappa  & \text{ in } \Omega\, ,\\
  \psi_\kappa =0  & \text{on }  \partial\Omega_c \,, \\
\frac{\partial\psi_\kappa}{\partial\nu}=i\kappa  (A_\kappa \cdot\nu) \psi_\kappa=0\, &  \text{ on }  \partial\Omega_i \,,\\
  \end{cases}
\end{displaymath}
where the last equality follows from (\ref{eq:5}).  By \eqref{eq:43},
\eqref{eq:34}, the fact $\|\psi_\kappa\|_\infty\leq 1$, Proposition~A.3 and
Remark~A.4 in \cite{alhe14} (note that $\psi_\kappa$ vanishes at the
corners) we obtain that for some $C(\Omega,p,J)$
\begin{displaymath}
  \|\psi_\kappa\|_{2,p} \leq C\big[\kappa^4 + \kappa^2\|\nabla_{\kappa A_\kappa}\psi_\kappa\|_p\big]\,,  \quad \forall p>2\,.
\end{displaymath}
Sobolev embedding and \eqref{eq:43} then yield
\begin{equation}
\label{eq:44}
  \|\nabla_{\kappa A_\kappa}\psi_\kappa\|_\infty \leq C\big[\kappa^4 +
  \kappa^2\|\nabla_{\kappa A_\kappa}\psi_\kappa\|_p\big] \,.
\end{equation}
We now use a standard interpolation theorem to obtain that
\begin{displaymath}
  \|\nabla_{\kappa A_\kappa}\psi_\kappa\|_p\leq \|\nabla_{\kappa A_\kappa}\psi_\kappa\|_2^{2/p} \|\nabla_{\kappa A_\kappa}\psi_\kappa\|_\infty^{1-2/p}\,.
\end{displaymath}

Substituting \eqref{eq:44} in conjunction with \eqref{eq:40} into the
above inequality then yields
\begin{equation}
\label{eq:45}
   \|\nabla_{\kappa A_\kappa}\psi_\kappa\|_p\leq C\Big[\kappa^4 + \kappa^2\|\nabla_{\kappa A_\kappa}\psi_\kappa\|_p\Big]^{1-2/p}\kappa^{2/p}\,.
\end{equation}
Suppose first that
\begin{displaymath}
 \kappa^2< \|\nabla_{\kappa A_\kappa}\psi_\kappa\|_p \,.
\end{displaymath}
Then, we have
\begin{displaymath}
  \|\nabla_{\kappa A_\kappa}\psi_\kappa\|_p \leq C\|\nabla_{\kappa A_\kappa}\psi_\kappa\|_p^{1-2/p}\kappa^{2(1-1/p)}\,.
\end{displaymath}
Hence,
\begin{equation}
\label{eq:46}
  \|\nabla_{\kappa A_\kappa}\psi_\kappa\|_p \leq C\kappa^{p-1} \,.
\end{equation}
Next, assume that
\begin{displaymath}
  \|\nabla_{\kappa A_\kappa}\psi_\kappa\|_p\leq \kappa^2 \,,
\end{displaymath}
to obtain that
\begin{displaymath}
   \|\nabla_{\kappa A_\kappa}\psi_\kappa\|_p\leq C\kappa^{4-6/p} \,.
\end{displaymath}
From the above, together with \eqref{eq:46} we easily conclude that,
for any $2<p\leq3$, there exists a constant $C$ such that
\begin{equation}
\label{eq:47}
    \frac{1}{\kappa}\|\nabla_{\kappa A_\kappa}\psi_\kappa\|_p\leq C\kappa^{3(1-2/p)}\,.
  \end{equation}

  To continue, we need a $W^{1,p}$ estimates for the solution of $\Delta
  u=f$ where $f\in W^{-1,p}$.  We thus apply \cite[Theorem 7.1]{gima12},
which is valid for any domain which is bilipschitz equivalent to the
unit cube, to \eqref{eq:38}.  This yields that, for some $C(p,\Omega)>0$,
we have
\begin{equation}\label{reglp}
  \|\nabla\phi_{1,\kappa}\|_p \leq  C \frac{c}{\kappa} \|\nabla_{\kappa A_\kappa}\psi_\kappa\|_p\,,  \q  \forall\kappa \geq 1\,,\; 2<p\leq3\,.
\end{equation}
From \eqref{eq:47} and \eqref{eq:31} we then obtain that
 \begin{equation}
\label{eq:48}
   \frac{1}{c}\|\nabla \phi_{1,\kappa} \|_p + \|\nabla B_{1,\kappa} \|_p \leq C\,  \kappa^{3(p-2)/p}\,.
\end{equation}
Upon \eqref{eq:48} and \eqref{eq:32} we use the Poincar\'e inequality
together with Sobolev embeddings to conclude that
\begin{equation}
\label{B1k}
  \|B_{1,\kappa}\|_\infty  \leq C \kappa^{3(p-2)/p}\,,\quad  \forall\kappa>1\,,\; 2<p\leq3\,.
\end{equation}

Let $x\in S_{\delta+\kappa^{-\alpha}}\,$, namely
\begin{displaymath}
|B_n(x)|>(1+\delta+\kappa^{-\alpha})\,.
\end{displaymath}
From \eqref{B1k}, for some $C(p,\Omega,J)>0$ we have that
\begin{displaymath}
  \aligned
  |B_\kappa(x)|>&\kappa|B_n(x)| - |B_{1,\kappa}(x)| \geq(1+\delta+\kappa^{-\alpha})\kappa  - C \kappa^{3(p-2)/p}\\
  =&(1+\delta)\k+[\kappa^{1-\alpha}- C \kappa^{3(p-2)/p}].
  \endaligned
\end{displaymath}
By choosing
\begin{displaymath}
  2<p<\min\Big(3,\frac{6}{2+\alpha}\Big)\,,
\end{displaymath}
we have $\kappa^{1-\alpha}- C \kappa^{3(p-2)/p}>0$ for sufficiently large $\kappa\,$. Thus
$$|B_\kappa(x)|>(1+\delta)\kappa\,,
$$
and hence $x\in D_\delta(\kappa)$. Consequently, $S_{\delta+\kappa^{-\alpha}}\subset  D_\delta(\kappa)\,$.
\end{proof}

As a byproduct of the proof, we also obtain
\begin{proposition}
  For any $2<p\leq 3 $, there exists $\kappa_0\geq 1$ and $C>0$ such that
  \begin{equation}
    \label{eq:49}
\|A_{1,\kappa}\|_{2,p} \leq C \kappa^{3(p-2)/p}\,,\quad \forall \kappa \geq \kappa_0\,.
  \end{equation}
\end{proposition}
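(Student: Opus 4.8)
The plan is to obtain \eqref{eq:49} by inverting the div--curl system in the Coulomb gauge, feeding in the $W^{1,p}$ control of $B_{1,\kappa}$ already produced in the proof of Lemma~\ref{lemma3.1}. First I would note that, since $A_\kappa$ satisfies \eqref{eq:5} and the normal field $A_n$ is likewise required to satisfy \eqref{eq:5}, the difference $A_{1,\kappa}=A_\kappa-\kappa A_n$ obeys $\Div A_{1,\kappa}=0$ in $\Omega$ and $A_{1,\kappa}\cdot\nu=0$ on $\partial\Omega$; moreover $\curl A_{1,\kappa}=B_{1,\kappa}$, and by \eqref{eq:32} the function $B_{1,\kappa}$ has vanishing trace on $\partial\Omega$.

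Next I would collect the relevant bound on $B_{1,\kappa}$: estimate \eqref{eq:48} gives $\|\nabla B_{1,\kappa}\|_p\le C\kappa^{3(p-2)/p}$ for $2<p\le 3$, and combining this with the vanishing trace \eqref{eq:32} and the Poincar\'e inequality yields $\|B_{1,\kappa}\|_{1,p}\le C\kappa^{3(p-2)/p}$ on the same range of $p$. The restriction $p\le 3$ in \eqref{eq:49} is simply inherited at this point (it comes from \eqref{eq:47} and \eqref{eq:48}).

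Finally I would apply the second-order elliptic estimate for the div--curl operator, in the same spirit as the first-order estimate $\|A_\kappa\|_{1,p}\le C\|B_\kappa\|_p$ (Remark~B.2 in \cite{alhe14}) used in Step~1 of the proof of Lemma~\ref{lemma3.1}. Since $\Omega$ is simply connected there is no nontrivial field with vanishing divergence, curl, and normal component, so the system $\Div V=0$, $\curl V=f$ in $\Omega$, $V\cdot\nu=0$ on $\partial\Omega$ is uniquely solvable and satisfies $\|V\|_{2,p}\le C\|f\|_{1,p}$; taking $V=A_{1,\kappa}$ and $f=B_{1,\kappa}$ and invoking the previous step gives \eqref{eq:49}. (Equivalently, writing $A_{1,\kappa}=\nabla^\perp\zeta$ and using that $\partial\Omega$ is a connected curve, the condition $A_{1,\kappa}\cdot\nu=0$ forces $\zeta$ to be constant on $\partial\Omega$, so $\zeta$ solves $\Delta\zeta=B_{1,\kappa}$ with a homogeneous Dirichlet condition, whence $\|A_{1,\kappa}\|_{2,p}\le C\|\zeta\|_{3,p}\le C\|B_{1,\kappa}\|_{1,p}$.) The one point requiring care — and the main obstacle — is the validity of this elliptic estimate up to the corners of $\partial\Omega$: thanks to the flatness hypothesis (R1)(b) and the right angle, the corner exponents of the relevant problem (the $V\cdot\nu=0$ problem, equivalently the Dirichlet problem for $\zeta$) are the positive even integers, which do not obstruct the $W^{2,p}$ bound for $2<p\le 3$, so the corner elliptic estimates of Appendices~A and~B in \cite{alhe14} apply; the rest is a routine assembly of inequalities already in hand.
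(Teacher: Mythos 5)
Your overall route is the paper's route: in the Coulomb gauge, deduce \eqref{eq:49} from the bound \eqref{eq:48} on $\nabla B_{1,\kappa}$ (together with the vanishing trace \eqref{eq:32} and Poincar\'e, which give $\|B_{1,\kappa}\|_{1,p}\le C\kappa^{3(p-2)/p}$), plus a $W^{2,p}$ elliptic estimate for the div--curl system in the corner domain. The paper disposes of that last step by citing \eqref{eq:31}, \eqref{eq:48} and Proposition~B.3 of \cite{alhe14}; your proposal tries to justify it by hand, and that is where the one genuine problem lies.

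The claim that the even-integer corner exponents ``do not obstruct the $W^{2,p}$ bound for $2<p\le3$'' is incorrect as stated: an integer exponent is exactly the resonant case. For the Dirichlet problem for the stream function at a right angle, the exponent $\lambda_1=2$ resonates with the degree-two particular solution generated by the right-hand side; the model problem $\Delta\zeta=c$ near such a corner, with $c\neq0$, forces a contribution proportional to $r^{2}\bigl(\log r\,\sin 2\theta+\theta\cos 2\theta\bigr)$, whose third derivatives behave like $1/r$ and hence lie in $L^p$ only for $p<2$. So a bare exponent count does not give $\zeta\in W^{3,p}$, i.e. $A_{1,\kappa}\in W^{2,p}$, for any $p>2$, and the estimate $\|A_{1,\kappa}\|_{2,p}\le C\|B_{1,\kappa}\|_{1,p}$ would be false for generic data. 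What saves the estimate here is structure you have available but do not invoke: by \eqref{eq:32} the curl $B_{1,\kappa}$ vanishes on all of $\partial\Omega$, in particular at the corners (and, being in $W^{1,p}$ with $p>2$, it vanishes there at a H\"older rate), so the resonant coefficient is killed -- equivalently, the odd reflection of $B_{1,\kappa}$ across the flat sides stays in $W^{1,p}$, which is what makes the reflection/corner analysis behind Proposition~B.3 of \cite{alhe14} work. The clean fix is therefore either to quote that proposition directly (as the paper does) or to add the corner-vanishing of $B_{1,\kappa}$ as the explicit reason the $r^{2}\log r$ singularity is absent. The remaining ingredients of your argument (Coulomb gauge for $A_n$, hence $\Div A_{1,\kappa}=0$ and $A_{1,\kappa}\cdot\nu=0$; the Poincar\'e step; the restriction $2<p\le3$ inherited from \eqref{eq:47}--\eqref{eq:48}) are correct and coincide with the paper's.
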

\begin{proof}
  The proof follows immediately from \eqref{eq:31}, \eqref{eq:48}, and
  Proposition B.3 in \cite{alhe14}.
\end{proof}

We can now prove the following semi-classical Agmon estimate for
$\psi_\kappa$, establishing  that it must be exponentially small in $S_\delta$.
\begin{proposition}
Suppose that $h$ satisfies (\ref{eq:10}b). Let then $j\in\{1,2\}$ be such that
$h_j>1/\Theta_0$.  There exist $C>0$  and
$\delta_0>0$, such that, for any $0<\delta\leq \delta_0\,$, some $\kappa_0(\delta)$ can be found,
for which
  \begin{equation}
 \label{eq:50}
\kappa \geq \kappa_0(\delta) \Rightarrow\int_{S_{\delta,j}} \exp\Big(\delta^{1/2}\kappa \,
  d(x,\CC_{\delta,j})\Big) |\psi_\kappa|^2 \,dx \leq \frac{C}{\delta^{3/2}} \,,
  \end{equation}
where $S_{\delta,j}$ is introduced in \eqref{sdj} and  $\mathcal C_{\delta,j}$ in  \eqref{eq:36}.
\end{proposition}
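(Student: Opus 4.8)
The plan is a semiclassical Agmon estimate on $S_{\delta,j}$, with the spectral lower bound \eqref{eq:22} (together with Lemma~\ref{lem:Dirichlet-Neumann}) as the essential input; note in advance that the electric potential will not interfere, since it enters only through a purely imaginary term. Let $d_\delta(x)$ be the distance from $x$ to $\CC_{\delta,j}$ measured inside $\Omega$, set $\Phi=\tfrac12\,\delta^{1/2}\kappa\,d_\delta$, and pick a cut-off $\zeta$, a function of $d_\delta$ alone, with $\zeta\equiv0$ where $d_\delta\le\rho_0$, $\zeta\equiv1$ where $d_\delta\ge2\rho_0$, and $|\nabla\zeta|\le C/\rho_0$, the width $\rho_0=\rho_0(\kappa,\delta)$ to be fixed later; put $\chi=\zeta e^{\Phi}$. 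Multiplying (\ref{eq:25}a) by $\chi^2\bar\psi_\kappa$, integrating over $S_{\delta,j}$ and taking real parts: the boundary integral vanishes because $\psi_\kappa=0$ on $\partial\Omega_c$, $\nabla_{\kappa A_\kappa}\psi_\kappa\cdot\nu=0$ on $\partial\Omega_i$ and $\chi=0$ on $\CC_{\delta,j}$; the contribution of $i\kappa\phi_\kappa\psi_\kappa$ vanishes because $\phi_\kappa$ is real; and integrating by parts (with $\chi$ real), after discarding $-\kappa^2\int\chi^2|\psi_\kappa|^4\le0$, yields
\begin{displaymath}
\int_{S_{\delta,j}}\big|\nabla_{\kappa A_\kappa}(\chi\psi_\kappa)\big|^2\,dx \;\le\; \kappa^2\int_{S_{\delta,j}}\chi^2|\psi_\kappa|^2\,dx \;+\; \int_{S_{\delta,j}}|\nabla\chi|^2|\psi_\kappa|^2\,dx\,.
\end{displaymath}

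For the companion lower bound, take any $v\in\Hg(S_{\delta,j})$; writing $A_\kappa=\kappa A_n+A_{1,\kappa}$ and rescaling with $\epsilon=\kappa^{-2}$ we have $\nabla_{\kappa A_\kappa}v=\kappa^{2}\big(\epsilon\nabla-i(A_n+\epsilon^{1/2}A_{1,\kappa})\big)v$, whence
\begin{displaymath}
\int_{S_{\delta,j}}|\nabla_{\kappa A_\kappa}v|^2\,dx \;\ge\; \kappa^{4}\,\mu_\epsilon\big(A_n+\epsilon^{1/2}A_{1,\kappa},\,S_{\delta,j}\big)\,\|v\|_2^2\,,
\end{displaymath}
so that \eqref{eq:22} applies with $\A=A_n$ and $a=A_{1,\kappa}$. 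Here $b=\inf_{S_{\delta,j}}|B_n|\ge1+\delta$ by the definition of $S_{\delta,j}$; since $B_n\equiv h_j$ on $\partial\Omega_{i,j}$, which contains $\partial S_{\delta,j}\cap\partial\Omega_i$ and, by continuity, neighbourhoods of the two corners of $S_{\delta,j}$ (so $\CC_{\delta,j}$ stays away from them, these being genuine Dirichlet--Neumann corners of $S_{\delta,j}$), one gets $b'=|h_j|$ and, using $\Theta^{DN}_\alpha=\Theta_0$ from Lemma~\ref{lem:Dirichlet-Neumann} in the corner term of \eqref{eq:74a}, the relevant minimum is $\min(1+\delta,\Theta_0|h_j|)$. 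As $\Theta_0|h_j|>1$, fixing $\delta_0\in(0,\Theta_0|h_j|-1)$ makes this equal $1+\delta$ for $0<\delta\le\delta_0$, while the error $C(1+\|\nabla A_{1,\kappa}\|_\infty^2)\epsilon^{1/3}$ is $o(1)$ once $p$ in \eqref{eq:49} is taken close enough to $2$ that $6(p-2)/p<2/3$. Hence $\int_{S_{\delta,j}}|\nabla_{\kappa A_\kappa}v|^2\ge\kappa^2(1+\delta)(1-o(1))\|v\|_2^2$ for all $v\in\Hg(S_{\delta,j})$; inserting $v=\chi\psi_\kappa$ and using the first display,
\begin{displaymath}
\kappa^2\big(\delta-o(1)\big)\int_{S_{\delta,j}}\chi^2|\psi_\kappa|^2\,dx \;\le\; \int_{S_{\delta,j}}|\nabla\chi|^2|\psi_\kappa|^2\,dx\,.
\end{displaymath}

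It remains to absorb the weight gradient. Since $|\nabla\Phi|\le\tfrac12\delta^{1/2}\kappa$ we have $\zeta^2|\nabla\Phi|^2e^{2\Phi}\le\tfrac14\delta\kappa^2\chi^2$, and expanding $|\nabla\chi|^2=|e^\Phi\nabla\zeta+\zeta e^\Phi\nabla\Phi|^2$ and treating the cross term by Young's inequality with a small parameter $\eta$, we get $\int|\nabla\chi|^2|\psi_\kappa|^2\le(1+\eta)\tfrac14\delta\kappa^2\int\chi^2|\psi_\kappa|^2+C_\eta\int|\nabla\zeta|^2e^{2\Phi}|\psi_\kappa|^2$. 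Since $(1+\eta)/4<1$, the first term is absorbed into the left-hand side above once $\eta$ is small and $\kappa\ge\kappa_0(\delta)$ (so $o(1)\le\delta/8$, say), giving $\int_{S_{\delta,j}}\chi^2|\psi_\kappa|^2\le\frac{C}{\kappa^2\delta}\int_{S_{\delta,j}}|\nabla\zeta|^2e^{2\Phi}|\psi_\kappa|^2$. On the set where $\nabla\zeta\ne0$ one has $d_\delta\le2\rho_0$, hence $e^{2\Phi}\le e^{2\delta^{1/2}\kappa\rho_0}$, $|\nabla\zeta|\le C/\rho_0$, $|\psi_\kappa|<1$, and the Lebesgue measure of that set is $\le C\rho_0$ (the total length of $\CC_{\delta,j}$ being bounded for $\delta\le\delta_0$, after shrinking $\delta_0$ if needed); therefore $\int|\nabla\zeta|^2e^{2\Phi}|\psi_\kappa|^2\le(C/\rho_0)\,e^{2\delta^{1/2}\kappa\rho_0}$. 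Choosing $\rho_0=(\delta^{1/2}\kappa)^{-1}$ this is $\le C\delta^{1/2}\kappa$, so $\int_{S_{\delta,j}}\chi^2|\psi_\kappa|^2\le C(\delta^{1/2}\kappa)^{-1}$; adding the contribution of the region $d_\delta\le2\rho_0$ (where $e^{2\Phi}\le e^2$, $|\psi_\kappa|<1$, measure $\le C\rho_0$) gives $\int_{S_{\delta,j}}e^{\delta^{1/2}\kappa d_\delta}|\psi_\kappa|^2\le C(\delta^{1/2}\kappa)^{-1}$, which is $\le C\delta^{-3/2}$ provided $\kappa_0(\delta)$ is enlarged so that $\kappa_0(\delta)\ge\delta^{-1}$. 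This is \eqref{eq:50}.

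The step needing the most care is the last one: the coefficient $|\nabla\Phi|^2=\tfrac14\delta\kappa^2$ lies below the spectral gap $\delta\kappa^2$, but only by a bounded factor, so one cannot afford the crude bound $|\nabla(\zeta e^\Phi)|^2\le2|\nabla\zeta|^2e^{2\Phi}+2\zeta^2|\nabla\Phi|^2e^{2\Phi}$ — the cross term $2\zeta(\nabla\zeta\cdot\nabla\Phi)e^{2\Phi}$ must be split by Young's inequality with a small weight so that the $|\nabla\Phi|^2$-part stays strictly below the gap — and the cut-off width must then be tuned to $\rho_0\sim(\delta^{1/2}\kappa)^{-1}$, small enough that $e^{2\Phi}=O(1)$ on the transition region yet still compatible with the tube-measure bound. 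The other point demanding attention is verifying the hypotheses of \eqref{eq:22} on $S_{\delta,j}$ with the clean constants $b=1+\delta$, $b'=|h_j|$: that the two corners of $S_{\delta,j}$ lie on $\partial\Omega_{i,j}$ where $B_n\equiv h_j$ — which is exactly where $|h_j|>1/\Theta_0$ is used, via Lemma~\ref{lem:Dirichlet-Neumann} — and that the $\kappa$-dependent perturbation $A_{1,\kappa}$ produces only $o(1)$ errors, for which one leans on \eqref{eq:49}.
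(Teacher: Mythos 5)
Your proof is correct, and its core coincides with the paper's: a weighted Agmon-type energy identity with weight $\exp\big(\tfrac12\delta^{1/2}\kappa\,d(x,\CC_{\delta,j})\big)$, the spectral lower bound \eqref{eq:22} (with Lemma \ref{lem:Dirichlet-Neumann} entering through the corner term), the dichotomy $\min(1+\delta,\Theta_0|h_j|)=1+\delta$ for $\delta\le\delta_0$ --- which is precisely where $h_j>1/\Theta_0$ is used --- and \eqref{eq:49} to make the $A_{1,\kappa}$-perturbation error $o(1)$ (your condition $6(p-2)/p<2/3$). Where you genuinely differ is in the localization scheme. The paper lets its cutoff $\eta$ decay \emph{outside} $S_{\delta,j}$, across the $O(\delta)$-wide collar between $\CC_{\delta,j}$ and $\CC_{\delta/2,j}$, so that $\nabla\eta$ and the gradient of the exponential weight have disjoint supports (no cross term), and then splits $\zeta\psi_\kappa$ by a quadratic partition of unity at scale $\kappa^{-1/2}$ near $\partial\Omega_i$, applying \eqref{eq:22} on $S_{\delta/2}$ to the boundary-layer piece and the pointwise bound $\|\nabla_{\kappa A_\kappa}v\|_2^2\ge\langle\kappa B_\kappa v,v\rangle$ of \cite{AHS} to the interior piece. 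You instead stay inside $S_{\delta,j}$, cut off in a thin collar of width $\rho_0\sim(\delta^{1/2}\kappa)^{-1}$ around $\CC_{\delta,j}$, apply \eqref{eq:22} once on $S_{\delta,j}$ to $\chi\psi_\kappa\in\Hg(S_{\delta,j})$ (legitimate, since $\chi\psi_\kappa$ vanishes near $\CC_{\delta,j}$ and on $\partial\Omega_c$), and absorb the $\nabla\zeta$--$\nabla\Phi$ cross term by Young; this avoids both the partition of unity and the AHS step, and gives $C(\delta^{1/2}\kappa)^{-1}$ in place of the paper's $C\delta^{-3}\kappa^{-2}$ (either implies $C\delta^{-3/2}$). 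Two small remarks: your caution about the cross term is not actually needed, since the crude bound $|\nabla\chi|^2\le 2\zeta^2|\nabla\Phi|^2e^{2\Phi}+2|\nabla\zeta|^2e^{2\Phi}$ already leaves $2\cdot\tfrac{\delta}{4}=\tfrac{\delta}{2}$ strictly below the gap $\delta$; and the tube-measure bound for the collar is likewise dispensable, as the crude bound by $|\Omega|$ still yields a $\delta$-independent constant and hence \eqref{eq:50}.
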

\begin{proof}
For $\delta >0$,  let  $\eta\in C^\infty(\Omega,[0,1])$ satisfy
\begin{equation}
\label{eq:51}
    \eta(x)=
    \begin{cases}
      1 & x\in S_{\delta,j}\,, \\
      0 & x\in\Omega\setminus S_{\delta/2,j} \,.
    \end{cases}
  \end{equation}
  By \eqref{eq:12} and \eqref{reguphi}, it follows that $\nabla B_n$ is
  bounded and independent of both $\delta$ and $\kappa$.  Consequently,
there exists a constant $C_1 >0$ such that
  \begin{displaymath}
    d(\CC_{\delta,j},\CC_{\delta/2,j}) \geq \frac{\delta}{C_1}\,,
  \end{displaymath}
and hence,  for some $C(\Omega,J)$ and all $0<\delta<\delta_0$ we can choose  $\eta$ such  that
  \begin{displaymath}
   |\nabla\eta| \leq \frac{C}{\delta} \,.
  \end{displaymath}
Let further $$\zeta=\chi\, \eta$$ where
\begin{displaymath}
  \chi=
  \begin{cases}
    \exp(\alpha_\delta\kappa d(x,\CC_{\delta,j})) &\text{if } x\in S_{\delta,j}\,, \\
    1 &\text{if } x\in\Omega\setminus S_{\delta,j}\,.
  \end{cases}
\end{displaymath}
We leave the determination of $\alpha_\delta$ to a later stage.  We further
define, for any $r\in (0,r_0)$,  $\eta_r\in C^\infty(\Omega,[0,1])$ and $\tilde \eta_r\in C^\infty(\Omega,[0,1])$ such that
\begin{equation}
\label{eq:52}
     \eta_r(x)=
    \begin{cases}
      1 & d(x,\partial\Omega_i)>r \\
      0 & d(x,\partial\Omega_i)<r/2 \,,
    \end{cases}
    \qq\text{and}\qq
|\nabla\eta_r|^2 + |\nabla \tilde \eta_r|^2  \leq \frac{C}{r^2} \,,
\end{equation}
and
$$
 \eta_r^2 + \tilde{\eta}_r ^2 = 1\,.
$$

Fix $0<\alpha<1$. Multiplying (\ref{eq:25}a)  by $\zeta^2\bar{\psi}$,
integrating by parts yields for the real part
$$\aligned
& \|\nabla_{\kappa A_\kappa}(\zeta\tilde{\eta}_{\kappa^{-1/2}}\psi_\kappa)\|_2^2+
 \|\nabla_{\kappa A_\kappa}(\zeta\eta_{\kappa^{-1/2}}\psi_\kappa)\|_2^2 \\
 \leq& \kappa^2\|\zeta\psi_\kappa\|_2^2
 +\|\zeta\psi_\kappa\nabla\eta_{\kappa^{-1/2}}\|_2^2+\|\zeta\psi_\kappa\nabla\tilde{\eta}_{\kappa^{-1/2}}\|_2^2 + \|\psi_\kappa\nabla\zeta\|_2^2 \,.
\endaligned
$$
Observing that $\langle\psi_\kappa\nabla\chi,\psi_\kappa\nabla\eta\rangle=0$, we obtain
\begin{displaymath}
   \|\psi_\kappa\nabla\zeta\|_2^2 \leq  \alpha_\delta^2\kappa^2\|\psi_\kappa\zeta\|_2^2 +   \|\psi_\kappa\nabla\eta\|_2^2 \,.
\end{displaymath}
Hence,
\begin{equation}\label{eq:53}
\aligned
&  \|\nabla_{\kappa A_\kappa}(\zeta\tilde{\eta}_{\kappa^{-1/2}}\psi_\kappa)\|_2^2+
 \|\nabla_{\kappa A_\kappa}(\zeta\eta_{\kappa^{-1/2}}\psi_\kappa)\|_2^2 \\
\leq& \kappa^2\big(1+ \alpha_\delta^2+C\kappa^{-1}\big)\|\zeta\psi_\kappa\|_2^2 + \|\psi_\kappa\nabla\eta\|_2^2\,.
\endaligned
\end{equation}

We now use \eqref{eq:22} and \eqref{eq:49} to obtain, for sufficiently
small $\delta$,
\begin{equation}
\label{eq:54}
\aligned
 \|\nabla_{\kappa A_\kappa}(\zeta\tilde{\eta}_{\kappa^{-1/2}}\psi_\kappa)\|_2^2
\geq&
\kappa^4\mu_{\kappa^{-2}}(A_n+\kappa^{-1}A_{1,\kappa},S_{\delta/2})\|\zeta\tilde{\eta}_{\kappa^{-1/2}}\psi_\kappa\|_2^2\\
\geq & \kappa^2\min(\Theta_0h_j,1+\delta/2)
 [1-C\kappa^{-2/3}]\|\zeta\tilde{\eta}_{\kappa^{-1/2}}\psi_\kappa\|_2^2 \\
\geq &  (1+\frac \delta 2 )\, \kappa^2[1-C\kappa^{-2/3}] \|\zeta\tilde{\eta}_{\kappa^{-1/2}}\psi_\kappa\|_2^2  \,.
\endaligned
\end{equation}
By \cite[Theorem 2.9]{AHS} we have, since $\zeta\eta_{\kappa^{-1/2}}\psi_\kappa$ vanishes on $\partial \Omega$,
\begin{equation}
\label{eq:55}
  \|\nabla_{\kappa A_\kappa}(\zeta\eta_{\kappa^{-1/2}}\psi_\kappa)\|_2^2 \geq
  (1+\delta/2-\kappa^{-1/2})\kappa^2\|\zeta\eta_{\kappa^{-1/2}}\psi_\kappa\|_2^2\,.
\end{equation}
Consequently, by  \eqref{eq:53}, \eqref{eq:54}, and
\eqref{eq:55}, and  by choosing
$$
\alpha_\delta^2 = \frac{\delta}{4}\,$$ we obtain, that for $\kappa\geq \kappa(\delta)$, with
$\kappa(\delta)$ sufficiently large:
\begin{displaymath}
  \kappa^2 \frac{\delta}{8}  \|\zeta\psi_\kappa\|_2^2 \leq  \kappa^2 \left( \frac{\delta}{4} - \widehat C \kappa^{-\frac 12}\right)\, \|\zeta\psi_\kappa\|_2^2 \leq  \|\psi_\kappa\nabla\eta\|_2^2\,,
  \end{displaymath}
from which \eqref{eq:50} easily follows.
\end{proof}

Next we consider currents satisfying only  (\ref{eq:10}a). Let, for $j=1,2$,
 \begin{equation} \label{eq:56}
  \omega_{\delta,j}  = \{x\in \Omega \,: \, (-1)^jB_n(x)>1+\delta \;;\;d(x,\partial\Omega_i) > \delta\,\} \,,\end{equation}
and
\begin{equation} \label{Gamma}
 \Gamma_{\delta,j}=  \partial\omega_{\delta,j} \setminus\partial\Omega_c\cap\partial\omega_{\delta,j} \,.
\end{equation}
We can now state
\begin{proposition}
  Suppose that for some $j\in\{1,2\}$ we have
  \begin{equation}
\label{eq:57}
    1<|h_j| \,.
  \end{equation}
Then, there exist $C>0$, $\delta_0>0$, such that  for any $0<\delta<\delta_0$,
some $\kappa_0 (\delta)>0$ can be found, for which
 \begin{equation}   \label{eq:58}
\kappa\geq \kappa_0(\delta)\Rightarrow \int_{\omega_{\delta,j}} \exp\Big(\delta^{1/2}\kappa d(x,
\Gamma_{\delta,j})\Big) |\psi_\kappa|^2 \,dx \leq \frac{C}{\delta^{3/2}}  \,.
\end{equation}
\end{proposition}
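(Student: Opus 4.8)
The plan is to mimic the proof of the previous proposition, which handled the case $h_j>1/\Theta_0$, but now with the weaker lower bound $|h_j|>1$ we can only afford to work away from $\partial\Omega_i$, which is exactly why the cut-off region $\omega_{\delta,j}$ in \eqref{eq:56} excludes a $\delta$-neighbourhood of $\partial\Omega_i$. The point is that in the interior the relevant spectral threshold is $|\curl A_n|=|B_n|$ itself (rather than $\Theta_0|B_n|$ as one would get on the insulating boundary), so the hypothesis $|B_n|>1+\delta$ on $\omega_{\delta,j}$ suffices to beat the $\kappa^2$ coefficient on the right-hand side. Concretely, I would fix $\eta\in C^\infty(\Omega,[0,1])$ equal to $1$ on $\omega_{\delta,j}$ and supported in $\omega_{\delta/2,j}$, with $|\nabla\eta|\leq C/\delta$ (using that $\nabla B_n$ is bounded independently of $\delta,\kappa$, so the relevant level sets are at mutual distance $\gtrsim\delta$, and that $d(\cdot,\partial\Omega_i)>\delta$ stays comfortably away from the insulator). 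Then set $\zeta=\chi\eta$ with $\chi=\exp(\alpha_\delta\kappa\,d(x,\Gamma_{\delta,j}))$ on $\omega_{\delta,j}$ and $\chi=1$ elsewhere, and $\alpha_\delta^2=\delta/4$ as before.

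Next I would multiply (\ref{eq:25}a) by $\zeta^2\bar\psi_\kappa$, integrate by parts, take the real part, and use $\|\psi_\kappa\|_\infty\leq 1$ together with the electric-potential term being purely imaginary to discard it, exactly as in the derivation of \eqref{eq:53}. This gives
\begin{displaymath}
\|\nabla_{\kappa A_\kappa}(\zeta\psi_\kappa)\|_2^2 \leq \kappa^2\big(1+\alpha_\delta^2+C\kappa^{-1}\big)\|\zeta\psi_\kappa\|_2^2 + \|\psi_\kappa\nabla\eta\|_2^2\,.
\end{displaymath}
Since $\spt\zeta\subset\omega_{\delta/2,j}$ stays at distance $\geq\delta/2$ from $\partial\Omega_i$, the function $\zeta\psi_\kappa$ vanishes on $\partial\Omega_c$ and has support away from $\partial\Omega_i$; hence I do not need the partition $\eta_r,\tilde\eta_r$ and the Dirichlet–Neumann corner analysis at all. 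Instead I apply the lower bound for the magnetic Laplacian with a full Dirichlet (or compactly supported) boundary condition on $\omega_{\delta/2,j}$: using \eqref{eq:49} to replace $A_\kappa$ by $\kappa A_n$ up to a correction of order $\kappa^{3(p-2)/p}$ in $W^{1,\infty}$ (for $p$ close to $2$ this is $o(\kappa)$), and using \cite[Theorem 2.9]{AHS} — or simply the diamagnetic/lowest-Landau-level bound $\int|(\nabla-iF)v|^2\geq \inf|\curl F|\int|v|^2$ for $v$ with compact support — to get
\begin{displaymath}
\|\nabla_{\kappa A_\kappa}(\zeta\psi_\kappa)\|_2^2 \geq \kappa^2\,\Big(\inf_{\omega_{\delta/2,j}}|B_n|\Big)\,\big[1-C\kappa^{-\sigma}\big]\|\zeta\psi_\kappa\|_2^2 \geq (1+\delta/2)\kappa^2\big[1-C\kappa^{-\sigma}\big]\|\zeta\psi_\kappa\|_2^2
\end{displaymath}
for some $\sigma>0$, since $|B_n|>1+\delta/2$ on $\omega_{\delta/2,j}$ by \eqref{eq:56}.

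Combining the two displays, the $\kappa^2\|\zeta\psi_\kappa\|_2^2$ terms give a net coefficient $1+\delta/2 - 1 - \delta/4 - C\kappa^{-\sigma} = \delta/4 - C\kappa^{-\sigma} \geq \delta/8$ for $\kappa\geq\kappa_0(\delta)$, so
\begin{displaymath}
\frac{\delta}{8}\,\kappa^2\|\zeta\psi_\kappa\|_2^2 \leq \|\psi_\kappa\nabla\eta\|_2^2 \leq \frac{C}{\delta^2}\|\psi_\kappa\|_2^2 \leq \frac{C}{\delta^2}\,|\Omega|\,,
\end{displaymath}
using $\|\psi_\kappa\|_\infty\leq 1$ on the gradient-of-cutoff term (note $\nabla\eta$ is supported in $\omega_{\delta/2,j}\setminus\omega_{\delta,j}$, where $\chi$ is still only exponentially large of the same order, so it can be absorbed — or one estimates there directly by $e^{C\alpha_\delta\kappa\delta}$ and reabsorbs, but the cleanest route is to note $\nabla\eta\neq 0$ only where $d(x,\Gamma_{\delta,j})$ is still $O(\delta)$... here one must be slightly careful). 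Dividing by $\delta\kappa^2/8$ and recalling $\chi^2\geq \exp(\delta^{1/2}\kappa\,d(x,\Gamma_{\delta,j}))$ on $\omega_{\delta,j}$ (since $2\alpha_\delta=\delta^{1/2}$) yields \eqref{eq:58}.

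The main obstacle is the treatment of the $\|\psi_\kappa\nabla\eta\|_2^2$ term: on the support of $\nabla\eta$, namely $\omega_{\delta/2,j}\setminus\omega_{\delta,j}$, the weight $\chi$ need not be $O(1)$, so one cannot naively bound it by $C/\delta^2$. The fix — as in the previous proposition — is to choose $\eta$ to depend only on $B_n$ and on $d(\cdot,\partial\Omega_i)$ in such a way that $\nabla\eta\neq 0$ forces $d(x,\Gamma_{\delta,j})\leq C\delta$, so that on $\spt\nabla\eta$ one has $\chi\leq \exp(\alpha_\delta\kappa\cdot C\delta)=\exp(C\delta^{3/2}\kappa)$; this blows up with $\kappa$, so instead one must run the Agmon argument with the weight itself incorporated into $\zeta$ from the start (which is what $\zeta=\chi\eta$ does) and observe that the contribution of $\spt\nabla\eta$ can be moved to the left-hand side because there $\chi$ is comparable to its values just inside $\omega_{\delta,j}$; more precisely one splits $\Gamma_{\delta,j}$ from the part of $\partial\omega_{\delta/2,j}$ cut by the condition $d(x,\partial\Omega_i)>\delta$, and checks that near the latter part $\psi_\kappa$ is \emph{not} expected to be small, so $d(x,\Gamma_{\delta,j})$ must be measured only to the ``good'' boundary component. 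Getting this geometry of $\Gamma_{\delta,j}$ versus $\partial\omega_{\delta,j}\cap\partial\Omega_i$-side right, and confirming that the bad region contributes a harmless $O(1/\delta^2)$ after the exponential weight is accounted for, is the one place where a genuine (if routine) argument is needed; everything else is a direct transcription of the $h_j>1/\Theta_0$ case with the corner/insulator spectral input replaced by the trivial interior bound $\inf|\curl A_n|=\inf|B_n|$.
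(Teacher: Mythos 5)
Your argument is essentially the paper's proof: there one sets $\check\zeta=\eta_\delta\,\eta\,\check\chi$ with $\check\chi=\exp\bigl(\tfrac12\delta^{1/2}\kappa\, d(x,\Gamma_{\delta,j})\bigr)$ on $\omega_{\delta,j}$ and $\check\chi=1$ elsewhere, and observes that $\check\zeta\psi_\kappa$ now vanishes on all of $\partial\Omega$ (near $\partial\Omega_i$ because of the cutoff, on $\partial\Omega_c$ because $\psi_\kappa=0$ there), so the bound of \cite{AHS} used in \eqref{eq:55} applies with $\kappa B_\kappa=\kappa^2 B_n+\kappa B_{1,\kappa}$ and only $|h_j|>1$ is needed; that is exactly your route, including the use of \eqref{eq:49} to control the perturbation of the field. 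The only thing to set straight is your closing ``main obstacle'' concerning $\|\psi_\kappa\nabla\eta\|_2^2$: it is not an obstacle. By your own definition $\chi\equiv1$ on $\Omega\setminus\omega_{\delta,j}$, while $\nabla\eta$ (and $\nabla\eta_\delta$) is supported in $\overline{\omega_{\delta/2,j}}\setminus\omega_{\delta,j}$, so the weight equals $1$ on $\supp\nabla\eta$ and $\nabla\chi$, $\nabla\eta$ have (a.e.) disjoint supports -- this is precisely the observation $\langle\psi_\kappa\nabla\chi,\psi_\kappa\nabla\eta\rangle=0$ invoked before \eqref{eq:53} -- giving $\|\psi_\kappa\nabla\zeta\|_2^2\leq\alpha_\delta^2\kappa^2\|\zeta\psi_\kappa\|_2^2+C\delta^{-2}$ directly, with no splitting of $\Gamma_{\delta,j}$ and no comparison of values of $\chi$ needed. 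The fact that the weight measures the distance to all of $\Gamma_{\delta,j}$, including its portion at distance $\delta$ from $\partial\Omega_i$, is simply why the conclusion \eqref{eq:58} gives no decay near the insulator; it does not affect the validity of the argument.
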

\begin{proof}
Without loss of generality we may assume $h_j>0$; otherwise we apply
  to \eqref{eq:25} the transformation
  $(\psi_\kappa,A_\kappa,\phi_\kappa)\to(\bar{\psi_\kappa},-A_\kappa,-\phi_\kappa)$.  Let
   \begin{equation}\label{check1}
 { \check{\chi}=}
  \begin{cases}
    \exp\Big(\frac 12 \delta^{1/2}\kappa d(x,\Gamma_{\delta,j})\Big) &\text{if } x\in \omega_{\delta,j}\,,  \\
    1 &\text{if } x\in\Omega\setminus \omega_{\delta,j}\,.
  \end{cases}
\end{equation}
Let further $\eta$  and  $\eta_r$  be given by \eqref{eq:51} and
\eqref{eq:52} respectively.  Then set
\begin{equation}\label{check2}
\check{\zeta}= \eta_\delta\, \eta\, \check \chi \,.
\end{equation}
The proof proceeds in the same manner as in the previous proposition
with $\zeta$ replaced by $\check{\zeta}$ with the difference that now
$\check{\zeta} \psi_\kappa(x)$ vanishes for all $ x\in\partial \Omega_i$. Consequently,
(\ref{eq:10}a) is no longer necessary (see \eqref{eq:55}). We use (\ref{eq:10}b)
to establish that $\omega_{\delta,j}$ is not empty.
\end{proof}

We conclude this section by showing that for $\OO(\kappa)$ currents
(i.e. when $J$ is independent of $\kappa$) $\|\psi_\kappa\|_2$ must be small.
To this end we define $\Phi_n$ as the solution of (\ref{eq:11b}a,b),  and
\begin{equation}
\label{eq:59}
  \int_\Omega |\psi_\kappa|^2\Phi_n \,dx =0\, .
\end{equation}
The above condition is a natural choice as by \eqref{eq:33} we have
that
\begin{displaymath}
   \int_\Omega |\psi_\kappa|^2\phi_\kappa  \,dx =0\, .
\end{displaymath}
It can be easily verified from \eqref{eq:20} that
\begin{equation}\label{calconst}
  \Phi_n=\phi_n+ C(\kappa,c)\,,
\end{equation}
where $\phi_n$ denotes the solution of \eqref{eq:11b}.
The constant can be extracted from \eqref{eq:59}:
$$
C (\kappa,c) = - \frac{\int_\Omega \phi_n |\psi_\kappa|^2\, dx }{\int_\Omega  |\psi_\kappa|^2\, dx}\,,
$$
from which we get  the following  upper bound (independent of $\kappa$ and $c$)
\begin{equation}
\label{eq:60}
|C (\kappa,c) | \leq \|\phi_n\|_\infty < +\infty\,.
\end{equation}

\begin{proposition}
  Under Assumptions \eqref{eq:2}-\eqref{eq:10} there exists \linebreak
  $C(J,\Omega)>0$  and $\k_0>0$ such that for any $\kappa\geq \k_0$,
  \begin{equation}
    \label{eq:61}
\| \psi_\kappa \|_2 \leq C (J,\Omega) (1+c^{-1/2})^{1/3}\kappa^{-1/6}  \,.
  \end{equation}
\end{proposition}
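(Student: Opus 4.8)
The idea is to get the global smallness of $\psi_\kappa$ out of the \emph{electric} equation \eqref{eq:33} rather than from any spectral lower bound. Two facts are combined: $\phi_\kappa$ differs from its ``normal'' value $c\kappa\phi_n$ only by a field $\phi_{1,\kappa}$ with $\|\nabla\phi_{1,\kappa}\|_2=O(\|\psi_\kappa\|_2)$, whereas the neutrality identity $\int_\Omega|\psi_\kappa|^2\phi_\kappa\,dx=0$ (recorded after \eqref{eq:59}, and coming from $(J2)$) forces the measure $|\psi_\kappa|^2\,dx$ to concentrate near a level set of $\phi_n$; since $\phi_n$ is a fixed non‑constant function — non‑constant because $J\not\equiv0$, which holds since $h_1h_2<0$ — this is compatible with a fixed amount of mass only if $\|\psi_\kappa\|_2$ is small. (We may of course assume $\psi_\kappa\not\equiv0$.)

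\emph{Step 1 (the key estimate).} Testing \eqref{eq:33} against $\phi_\kappa$, and testing $\Delta(c\kappa\phi_n)=0$ against $\phi_\kappa$, produces two identities with the \emph{same} boundary term, since $\phi_\kappa$ and $c\kappa\phi_n$ have the same co‑normal data on $\partial\Omega$ (namely $-c\kappa J$ on $\partial\Omega_c$ and $0$ on $\partial\Omega_i$). Using the second identity to eliminate the boundary term in the first and substituting $\phi_\kappa=c\kappa\phi_n+\phi_{1,\kappa}$ gives
\begin{equation*}
c\int_\Omega|\psi_\kappa|^2\phi_\kappa^2\,dx \;=\; -\,c\kappa\int_\Omega\nabla\phi_n\cdot\nabla\phi_{1,\kappa}\,dx\;-\;\|\nabla\phi_{1,\kappa}\|_2^2 .
\end{equation*}
By \eqref{eq:39}--\eqref{eq:40} one has $\|\nabla\phi_{1,\kappa}\|_2\le\frac c\kappa\|\nabla_{\kappa A_\kappa}\psi_\kappa\|_2\le c\|\psi_\kappa\|_2$, so Cauchy--Schwarz yields the key bound $\int_\Omega|\psi_\kappa|^2\phi_\kappa^2\,dx\le C(J,\Omega)\,c\,\kappa\,\|\psi_\kappa\|_2$.

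\emph{Step 2 (concentration).} Write $\phi_\kappa=v+w$ with $v=c\kappa(\phi_n-\mu)$, where $\mu$ is the $|\psi_\kappa|^2$‑average of $\phi_n$, so $v$ is harmonic and $\int_\Omega|\psi_\kappa|^2 v\,dx=0$; then $w=\phi_{1,\kappa}-\text{const}$ has $\int_\Omega|\psi_\kappa|^2 w\,dx=0$ and $\nabla w=\nabla\phi_{1,\kappa}$, hence $\|\nabla w\|_2\le c\|\psi_\kappa\|_2$. The vanishing $|\psi_\kappa|^2$‑average of $w$, Poincar\'e's inequality and \eqref{eq:26} give $\|w\|_{L^4(\Omega)}\le C(\Omega)\,c$. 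Inserting $\phi_\kappa^2\ge\tfrac12 v^2-w^2$ into Step 1, using $\int_\Omega|\psi_\kappa|^2 w^2\le\|w\|_4^2\|\psi_\kappa\|_4^2\le C c^2\|\psi_\kappa\|_2$ (here $\|\psi_\kappa\|_4^2\le\|\psi_\kappa\|_2$ by \eqref{eq:26}) and $v^2=c^2\kappa^2(\phi_n-\mu)^2$, one arrives at
\begin{equation*}
\int_\Omega|\psi_\kappa|^2\,(\phi_n-\mu)^2\,dx \;\le\; \frac{C(J,\Omega)\,(1+c^{-1})}{\kappa}\,\|\psi_\kappa\|_2 \;\le\; \frac{C(J,\Omega)\,(1+c^{-1})}{\kappa}\,.
\end{equation*}

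\emph{Step 3 (conclusion) and the hard part.} Since $\phi_n$ is harmonic in $\Omega$ and lies in $W^{2,p}(\Omega)$ by \eqref{reguphi}, the sublevel tubes $T^{t}_{\rho}=\{x\in\Omega:|\phi_n(x)-t|<\rho\}$ satisfy $|T^{t}_{\rho}|\le C(\Omega,J)\,\rho$, uniformly in $t$, away from the finitely many isolated critical points of $\phi_n$, which are absorbed via the local normal form of a harmonic function. Chebyshev's inequality with $t=\mu$, together with $\|\psi_\kappa\|_\infty<1$, then gives for every $\rho>0$
\begin{equation*}
\|\psi_\kappa\|_2^2 \;\le\; C\rho \;+\;\frac{1}{\rho^2}\int_\Omega|\psi_\kappa|^2(\phi_n-\mu)^2\,dx \;\le\; C\rho+\frac{C(1+c^{-1})}{\rho^{2}\kappa}\,,
\end{equation*}
and choosing $\rho=((1+c^{-1})/\kappa)^{1/3}$ yields $\|\psi_\kappa\|_2\le C(J,\Omega)(1+c^{-1})^{1/6}\kappa^{-1/6}\le C(J,\Omega)(1+c^{-1/2})^{1/3}\kappa^{-1/6}$. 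The step I expect to be the main obstacle is exactly the tube–measure bound in Step 3: one must control the level–set geometry of $\phi_n$ — notably the order of its critical points and its behaviour near $\partial\Omega$ and the corners, where $\phi_n$ is only $W^{2,p}$ — well enough to keep $|T^{t}_{\rho}|$ linear (or a weaker power still compatible with the stated exponent). It is here, if needed, that one would bring in the exponential decay of Theorem~\ref{mainsteadystate} in $\{|B_n|>1\}$, and possibly a companion weighted estimate for the harmonic conjugate $B_n$, so as to confine $|\psi_\kappa|^2\,dx$ to a neighbourhood of the \emph{discrete} set $\phi_n^{-1}(\mu)\cap B_n^{-1}(0)$.
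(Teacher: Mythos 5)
Your Steps 1--2 arrive, by slightly different bookkeeping, at exactly the paper's key quantity: testing \eqref{eq:33} against $\phi_\kappa$ and eliminating the common boundary term is a correct variant of the paper's pairing of the equation \eqref{eq:62} for $\Phi_{1,\kappa}=\phi_\kappa-c\kappa\Phi_n$ with $\Phi_n=\phi_n-\mu$, and your Step~2 display is (up to the harmless factor $\|\psi_\kappa\|_2$) the square of the paper's bound \eqref{eq:64}, $\|\psi_\kappa\Phi_n\|_2\le C\kappa^{-1/2}(1+c^{-1/2})$. The genuine gap is precisely the one you flag yourself: the tube bound $|\{x\in\Omega:\,|\phi_n(x)-\mu|<\rho\}|\le C\rho$, uniformly in $\mu\in[-\|\phi_n\|_\infty,\|\phi_n\|_\infty]$, is asserted but not proved, and your proposed route (local normal forms at ``finitely many isolated critical points'', behaviour at the corners, possibly invoking Theorem~\ref{mainsteadystate} to confine the mass near $\phi_n^{-1}(\mu)\cap B_n^{-1}(0)$) is not what is needed and is not carried out. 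The paper closes this step with a specific geometric input: $|\nabla\phi_n|=|\nabla B_n|>0$ on all of $\overline\Omega$, established in \cite[\S 2.3]{alhe14}, so $\phi_n$ has \emph{no} critical points. Granted this, the argument of \eqref{eq:65}--\eqref{eq:66} is elementary: through a point $x_0$ with $\phi_n(x_0)-\mu=\tau\neq0$ one follows the level curve of the harmonic conjugate $B_n$, whose tangent is parallel to $\nabla\phi_n$, so that $\phi_n$ varies along it at speed $|\nabla\phi_n|$ bounded below; hence $d(x_0,\partial\Omega\cup\phi_n^{-1}(\mu))\le C|\tau|$, the tube lies in a $C\rho$-neighbourhood of $\phi_n^{-1}(\mu)\cup\partial\Omega$, and its measure is $O(\rho)$. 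Without this (or an equivalent) input your Step~3, and with it the conclusion, is unsupported.

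A smaller but real flaw sits in Step 2: the claim $\|w\|_{L^4(\Omega)}\le C(\Omega)c$ does not follow from the vanishing $|\psi_\kappa|^2$-average of $w$ plus Poincar\'e, because the constant in a Poincar\'e inequality normalized by a $|\psi_\kappa|^2$-weighted mean degenerates as $\|\psi_\kappa\|_2\to0$ -- the very regime you are aiming for; indeed $w=\phi_{1,\kappa}+c\kappa\mu$ and a priori $\|w\|_\infty$ may be of order $c\kappa$. The step is repairable without new ideas: only $\int_\Omega|\psi_\kappa|^2w^2\,dx$ is needed, and writing $w=(w-w_\Omega)+w_\Omega$ with the plain mean $w_\Omega$, the zero weighted average together with $\|\psi_\kappa\|_4^2\le\|\psi_\kappa\|_2$ gives $|w_\Omega|\,\|\psi_\kappa\|_2\le C\|\nabla w\|_2\le Cc$, hence $\int_\Omega|\psi_\kappa|^2w^2\,dx\le Cc^2$; after division by $c^2\kappa^2$ this only adds an $O(\kappa^{-2})$ term to your Step~2 estimate and leaves the final exponent unchanged. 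The paper sidesteps the issue altogether by pairing against $|\psi_\kappa|^2\Phi_n$, which integrates to zero, so the free constant can be taken to be the plain average (the term $\Phi_{1,\kappa}-(\Phi_{1,\kappa})_\Omega$ in \eqref{eq:63}).
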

\begin{proof}
Let $\Phi_{1,\kappa}=\phi_k-c\kappa\Phi_n$.
 An immediate consequence of \eqref{eq:33} is that
\begin{equation}
\label{eq:62}
  \begin{cases}
    -\Delta\Phi_{1,\kappa}  + c|\psi|^2_\kappa \Phi_{1,\kappa} = - c^2|\psi|^2_\kappa \kappa\Phi_n & \text{ in } \Omega\,, \\
  \frac{\partial\Phi_{1,\kappa}}{\partial\nu}= 0 &  \text{ on } \partial\Omega\,.
  \end{cases}
\end{equation}
Taking the inner product with $\Phi_n$ yields, with the aid of (\ref{eq:59}),
\begin{equation}
\label{eq:63}
   \aligned
& \kappa\| \psi_\kappa \Phi_n\|_2^2 = - \frac{1}{c^2}\langle\nabla\Phi_n,\nabla\Phi_{1,\kappa}\rangle  +\frac{1}{c}\langle \langle|\psi_\kappa|
  \Phi_n,|\psi_\kappa| (\Phi_{1,\kappa}-(\Phi_{1,\kappa})_\Omega \rangle\\
\leq& \frac{1}{c^2}\|\nabla\Phi_n\|_2\|\nabla\Phi_{1,\kappa}\|_2 + \frac{1}{c}\| \psi_\kappa
  \Phi_n\|_2 \| \psi_\kappa(\Phi_{1,\kappa}-(\Phi_{1,\kappa})_\Omega )\|_2\,,
\endaligned
\end{equation}
where $(\Phi_{1,\kappa})_\Omega $ is the average of $\Phi_{1,\kappa}$ in $\Omega$.

With the aid of (\ref{eq:41}) (note that $\nabla \phi_{1,\kappa}=\nabla \Phi_{1,\kappa}$),
the fact that \break $|\psi_\kappa|\leq1$, and the Poincar\'e inequality we then
obtain
\begin{equation}
  \label{eq:64}
 \| \psi_\kappa \Phi_n\|_2 \leq  C\kappa^{-1/2}(1+ c^{-1/2}) \,.
\end{equation}
We now set
\begin{displaymath}
  \Ug_\kappa = \{ x\in\Omega \,:\, |\Phi_n(x) |< (1+c^{-1/2})^{2/3}\kappa^{-1/3}\,\} \,.
\end{displaymath}
By \eqref{eq:59}  the level set $\Phi_n^{-1}(0)$ lies inside
$\Omega$.
Let $x_0\in\Phi_n^{-1}(\tau)$ for some $\tau\neq0$, and set
$$\Gamma_\perp=B_n^{-1}(B_n(x_0)).
$$
By (\ref{eq:11}a) $B_n$ is the conjugate harmonic function of $\Phi_n$,
and hence $\Gamma_\perp$ must be perpendicular to $\Phi_n^{-1}(\tau)$ at $x_0$. Note
that in \cite[(2.3)]{alhe14} we showed that $B_n^{-1}(\mu)$ is a simple
smooth curve connecting the two connected components of $\partial\Omega_c$ for any
$h_1<\mu<h_2$. We denote by $\tilde{\Gamma}_\perp$ the subcurve of $\Gamma_\perp$
  originating from $x_0$ in the direction where $\Phi_n$ decreases if
$\tau>0$ or increases if $\tau<0$, and terminating either on $\Phi_n^{-1}(0)$ or
on the boundary.  Clearly,
\begin{displaymath}
  |\tilde{\Gamma}_\perp|\inf_{x\in\Omega}|\nabla\Phi_n|\leq  \Big|\int_{\tilde{\Gamma}_\perp} \nabla\Phi_n  ds\Big| \leq |\tau|\,,
\end{displaymath}
where $\int_\Gamma \vec{V}$ denotes the circulation of $\vec{V}$ along the path $\Gamma$.
In \cite[\S2.3]{alhe14} we have
established that $|\nabla\Phi_n|=|\nabla B_n|>0$ in $\bar{\Omega}$. It follows that
\begin{equation}
\label{eq:65}
  d(x_0,\partial\Omega\cup\Phi_n^{-1}(0)) \leq   |\tilde{\Gamma}_\perp| \leq C|\tau|\,.
\end{equation}
Let
\begin{displaymath}
  \tilde{\Ug}_\kappa(r)= \{x\in\Omega \,| \, d(x,\partial\Omega\cup\Phi_n^{-1}(0)) \leq
  r(1+c^{-1/2})^{2/3}\kappa^{-1/3}\}\,.
\end{displaymath}
By \eqref{eq:65} we obtain that for sufficiently large $r$ there
exists $\kappa_0(r)$ such that for all $\kappa>\kappa_0$ and
$c\in\R$ we have $\Ug_\kappa \subseteq \tilde{\Ug}_\kappa(r)$. Consequently,
\begin{equation}
\label{eq:66}
  | \Ug_\kappa| \leq C(1+c^{-1/2})^{2/3}\kappa^{-1/3}(|\Phi_n^{-1}(0)|+|\partial\Omega|)\leq C(1+c^{-1/2})^{2/3}\kappa^{-1/3}\,.
\end{equation}

By \eqref{eq:64} we have that
\begin{displaymath}
  \| \psi_\kappa\|_{L^2(\Omega\setminus\Ug_\kappa)} \leq C(1+c^{-1/2})^{1/3}\kappa^{-1/6} \,,
\end{displaymath}
whereas from \eqref{eq:66} and \eqref{eq:26} we learn that
\begin{displaymath}
   \| \psi_\kappa\|_{L^2(\Ug_\kappa)} \leq C(1+c^{-1/2})^{1/3}\kappa^{-1/6}  \,.
\end{displaymath}
The proposition can now be  readily verified.
\end{proof}
An immediate conclusion is that whenever $c\kappa\gg1$, $|\psi_\kappa|$ is small. If
$c=\mathcal O(\kappa^{-1})$, $|\psi_\kappa|$ may not tend to $0$ as $\kappa\to\infty$. Further research
is necessary to establish this point.
\begin{remark}
 If, for some $0<\alpha<1$, we assume that $J=J(\cdot, \kappa)$ satisfies
$$
\|J\|\leq C\kappa^{-\alpha}\,,
$$
then (\ref{eq:63}) and
  (\ref{eq:41}) remain valid. Assuming $c=1$, and using  this time (\ref{eq:48}), we obtain
  instead of (\ref{eq:64}),   for any $0<\beta$, that
  \begin{displaymath}
     \| \psi_\kappa \Phi_n\|_2 \leq  C_\beta \kappa^{-1/2}(\kappa^{\alpha/2}+\kappa^\beta\|\psi_\kappa\|_2^{1/2}) \,.
  \end{displaymath}
Using the above, with sufficiently small $\beta$, we obtain,  similarly to the derivation of  (\ref{eq:61})
\begin{displaymath}
   \| \psi_\kappa\|_2\leq C\kappa^{-(1-\alpha)/6}\,,
\end{displaymath}
which implies
$$\|\psi_\kappa\|_2\xrightarrow[\kappa\to + \infty]{}0\,.
$$
This result stands in sharp contrast with the behavior obtained
in the absence of electric potential \cite{sase03,attar2014ground}.
\end{remark}

\section{Time-Dependent Analysis}
\label{sec:4}
In this section we return to the time-dependent problem as introduced
in \eqref{eq:1}.  For convenience we set here $$c=1\,.$$
\begin{subequations}
\label{eq:67}
\begin{alignat}{2}
 & \frac{\partial\psi_\kappa}{\partial t}-  \nabla_{\kappa A_\kappa}^2\psi_\k+i\kappa\phi_\kappa \psi_\kappa =\kappa^2(1-|\psi_\kappa|^2)\psi_\kappa & \text{ in } (0,+\infty)\times \Omega& \,,\\
  &\frac{\partial A_\kappa}{\partial t}+\nabla\phi_\kappa +\curl^2A_\kappa
  =\frac{1}{\kappa}\Im(\bar\psi_\kappa\nabla_{\kappa A_\kappa}\psi_\kappa) &
  \text{ in } (0,+\infty)\times  \Omega&\,, \\
  &\psi =0 &\text{ on } (0,+\infty)\times  \partial\Omega_c&\,, \\
&  \nabla_{\kappa A_\kappa}\psi_\kappa\cdot\nu=0 &  \text{ on } (0,+\infty)\times  \partial\Omega_i&\,, \\
&  \frac{\partial\phi_\kappa}{\partial\nu} = - \kappa J(x) & \text{ on }(0,+\infty)\times  \partial\Omega_c&\,, \\
 & \frac{\partial\phi_\kappa}{\partial\nu}=0  & \text{ on } (0,+\infty)\times  \partial\Omega_i &\,, \\[1.2ex]
&  \dashint_{\partial\Omega}\curl A_\kappa \, ds = \kappa h_{ex} &\quad \text{ on } (0,+\infty)&\,, \\
&  \psi(0,x)=\psi_0(x)  & \text{ in } \Omega&\,, \\
&A(0,x)=A_0(x) & \text{ in } \Omega & \,.
\end{alignat}
\end{subequations}
We assume again \eqref{condinit}-(\ref{eq:6}), \eqref{eq:8}, and
\eqref{eq:10}.  Since in the time dependent case $\phi_\kappa$ is determined
up to a constant in view of \eqref{eq:4}
 and \eqref{eq:5}, we can
further  impose
\begin{equation}\label{eq:68}
  \int_\Omega\phi_\kappa (t,x)\,dx =0\,, \quad \forall t>0 \,.
\end{equation}
It follows from \eqref{condinit} by the maximum
principle (see \cite[Theorem 2.6] {alhe14}) that
\begin{equation}
\label{eq:69}
  \|\psi_\kappa(t, \cdot)\|_\infty\leq 1 \,, \quad \forall t\geq0\,.
\end{equation}
We recall from \cite  [Subsection 2.4] {alhe14} the following spectral entity
\begin{subequations}
\label{eq:70}
  \begin{equation}
\lambda= \inf_{
    \begin{subarray}{2}
      V\in\Hg_d \\
      \|V\|_2 =1
    \end{subarray}} \| \curl V\|_2^2 \,,
\end{equation}
where
\begin{equation}
  \Hg_d = \big\{ V\in H^1(\Omega,\R^2)\,:\, \Div V=0 \,, V\big|_{\partial\Omega}\cdot\nu=0 \big\}  \,.
\end{equation}
\end{subequations}
We further recall from \cite [Proposition 2.5] {alhe14} that, under condition $(R_1)$ on $\partial \Omega$,
\begin{displaymath}
  \lambda=\lambda^D:=\inf_{
    \begin{subarray}{2}
      u\in H^1_0(\Omega) \\
      \|u\|_2 =1
    \end{subarray}} \| \nabla u\|_2^2>0 \,.
\end{displaymath}

We retain our definition of the normal fields $(A_n,\phi_n)$ via \eqref{eq:11}. For
the solution $(A_\k,\phi_\k)$ of \eqref{eq:67} we set
\begin{equation}\label{time-dep-AB}  \aligned
&A_{1,\kappa}(t,x)=A_\kappa(t,x)-\kappa A_n(x),\\
&\phi_{1,\kappa}(t,x)=\phi_\kappa(t,x)-\kappa\phi_n(x),\\
&B_\kappa(t,x)=\curl A_\kappa(t,x),\\
&B_{1,\kappa}(t,x)=\curl A_{1,\kappa}(t,x).
\endaligned
\end{equation}
Clearly,
\begin{subequations}
\label{eq:71}
  \begin{alignat}{2}
  \frac{\partial A_{1,\kappa}}{\partial t}+\nabla\phi_{1,\kappa}  + \curl B_{1,\kappa}
 & =  \frac{1}{\kappa}\Im(\bar\psi_\kappa\, \nabla_{\kappa A_\kappa}\psi_\kappa)  &  \text{ in }
 (0,+\infty)\times  \Omega\,,\\
   \frac{\partial\phi_{1,\kappa}}{\partial\nu}&= 0 & \text{ on } (0,+\infty)\times \partial\Omega\,, \\
\dashint_{\partial\Omega}B_{1,\kappa}(t,x)\,ds& = 0  & \text{ in } (0,+\infty) \,.
\end{alignat}
\end{subequations}

We begin by the following auxiliary estimate. We recall that \break $\|A(t,\cdot)\|_{1,2}=\|A(t,\cdot)\|_{H^1(\O,\R^2)}$.
\begin{lemma}
\label{lem:parabolic-A1}
Let $A_{1,\kappa}$ and $B_{1,\kappa}$ be defined by \eqref{time-dep-AB}.
Suppose that \linebreak $\|A_{1,\kappa}(\cdot,0)\|_2\leq M$ (where $M$ may depend on
$\kappa$). Then, under the above assumptions, there exists $ t^*(M)$ and a
constant $C=C(\Omega,t^*)>0$ such that for all $t>t^*$ and $\kappa\geq 1$ we
have
\begin{equation}
  \label{eq:72}
\|A_{1,\kappa}(t,\cdot)\|_{1,2} +  \|A_{1,\kappa}\|_{L^2(t,t+1,H^2(\Omega))} \leq C.
\end{equation}
\end{lemma}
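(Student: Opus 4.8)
The plan is to treat \eqref{eq:71} as an inhomogeneous heat equation for $A_{1,\kappa}$ and to run the classical parabolic \emph{a priori} estimates, the only two nonstandard points being (i) the uniformity of all constants in $\kappa\geq1$, which will be produced by the bound $\|\psi_\kappa\|_\infty\leq1$ of \eqref{eq:69}, and (ii) the corners of $\partial\Omega$, which are dealt with by invoking the corner-adapted elliptic regularity of Appendix~B of \cite{alhe14}. First I would record some structural facts. Because the solution is in the Coulomb gauge \eqref{eq:5}, one has $\Div A_{1,\kappa}=0$ and $A_{1,\kappa}(t,\cdot)\cdot\nu=0$ on $\partial\Omega$, so $A_{1,\kappa}(t,\cdot)\in\Hg_d$ and $\curl B_{1,\kappa}=-\Delta A_{1,\kappa}$. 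Taking the normal component of (\ref{eq:71}a) on $\partial\Omega$, using $A_{1,\kappa}\cdot\nu=0$, (\ref{eq:71}b), the identity $(\curl B_{1,\kappa})\cdot\nu=\partial B_{1,\kappa}/\partial\tau$, and $\Im(\bar\psi_\kappa\nabla_{\kappa A_\kappa}\psi_\kappa)\cdot\nu=0$ on $\partial\Omega$ (which follows from (\ref{eq:67}c,d)), one gets $\partial B_{1,\kappa}/\partial\tau=0$ on $\partial\Omega$; since $\partial\Omega$ is connected, (\ref{eq:71}c) then forces $B_{1,\kappa}(t,\cdot)|_{\partial\Omega}\equiv0$, exactly as in \eqref{eq:32}. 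Finally, taking the divergence of (\ref{eq:71}a) shows $\phi_{1,\kappa}$ solves $-\Delta\phi_{1,\kappa}=-\Div\mathbf J_\kappa$ in $\Omega$ with $\partial\phi_{1,\kappa}/\partial\nu=0$, where $\mathbf J_\kappa:=\tfrac1\kappa\Im(\bar\psi_\kappa\nabla_{\kappa A_\kappa}\psi_\kappa)$; testing against $\phi_{1,\kappa}$ gives $\|\nabla\phi_{1,\kappa}(t,\cdot)\|_2\leq\|\mathbf J_\kappa(t,\cdot)\|_2$, the time-dependent analogue of \eqref{eq:39}.

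The key $\kappa$-uniform input is the source estimate. Testing (\ref{eq:67}a) against $\bar\psi_\kappa$ and taking real parts yields, using \eqref{eq:69},
\begin{displaymath}
\tfrac12\tfrac{d}{dt}\|\psi_\kappa\|_2^2+\|\nabla_{\kappa A_\kappa}\psi_\kappa\|_2^2=\kappa^2\int_\Omega(1-|\psi_\kappa|^2)|\psi_\kappa|^2\,dx\leq\kappa^2|\Omega|\,.
\end{displaymath}
Integrating over $[s,s+1]$ and using \eqref{eq:69} again gives $\int_s^{s+1}\|\nabla_{\kappa A_\kappa}\psi_\kappa\|_2^2\,dt\leq(\kappa^2+\tfrac12)|\Omega|$, hence, since $\|\psi_\kappa\|_\infty\leq1$,
\begin{displaymath}
\|\mathbf J_\kappa\|_{L^2(s,s+1,L^2(\Omega))}^2\leq\kappa^{-2}\int_s^{s+1}\|\nabla_{\kappa A_\kappa}\psi_\kappa\|_2^2\,dt\leq2|\Omega|\,,\q\forall\kappa\geq1,\ s\geq0\,,
\end{displaymath}
and therefore $\|\nabla\phi_{1,\kappa}\|_{L^2(s,s+1,L^2(\Omega))}\leq\sqrt{2|\Omega|}$ as well.

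Next come the $L^2$ and $L^2(H^1)$ bounds. Testing (\ref{eq:71}a) against $A_{1,\kappa}$, integrating by parts (the boundary term vanishes since $B_{1,\kappa}|_{\partial\Omega}=0$ and $A_{1,\kappa}\cdot\nu=0$) and using $\langle\nabla\phi_{1,\kappa},A_{1,\kappa}\rangle=0$, one obtains $\tfrac12\tfrac{d}{dt}\|A_{1,\kappa}\|_2^2+\|B_{1,\kappa}\|_2^2=\langle\mathbf J_\kappa,A_{1,\kappa}\rangle$, where $\|B_{1,\kappa}\|_2^2=\|\curl A_{1,\kappa}\|_2^2\geq\lambda\|A_{1,\kappa}\|_2^2$ by \eqref{eq:70} (recall $A_{1,\kappa}(t,\cdot)\in\Hg_d$ and $\lambda=\lambda^D>0$). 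Hence $\tfrac{d}{dt}\|A_{1,\kappa}\|_2+\lambda\|A_{1,\kappa}\|_2\leq\|\mathbf J_\kappa\|_2$, and by Duhamel's formula together with the source estimate, $\|A_{1,\kappa}(t,\cdot)\|_2\leq Me^{-\lambda t}+C(\Omega)$; so there is $t^*(M)$, of order $\lambda^{-1}\log M$, with $\|A_{1,\kappa}(t,\cdot)\|_2\leq C(\Omega)$ for all $t\geq t^*(M)$ and all $\kappa\geq1$. Integrating the differential identity over $[t,t+1]$ and applying the div--curl inequality $\|A_{1,\kappa}\|_{1,2}^2\leq C(\Omega)(\|\curl A_{1,\kappa}\|_2^2+\|A_{1,\kappa}\|_2^2)$ for fields in $\Hg_d$ then gives $\|A_{1,\kappa}\|_{L^2(t,t+1,H^1(\Omega))}\leq C(\Omega)$. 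Finally, the $H^1$ and $L^2(H^2)$ bounds follow by a time-localized parabolic argument: write (\ref{eq:71}a) as $\partial_tA_{1,\kappa}-\Delta A_{1,\kappa}=\mathbf J_\kappa-\nabla\phi_{1,\kappa}$, whose right-hand side is bounded in $L^2(s,s+1,L^2(\Omega))$ uniformly in $\kappa$; choose $\chi\in C^\infty(\R)$ with $\chi\equiv0$ on $(-\infty,t-1]$, $\chi\equiv1$ on $[t,\infty)$, $|\chi'|\leq C$; then $v:=\chi A_{1,\kappa}$ has zero data at $t-1$, satisfies the same (Hodge) boundary conditions, and solves a heat equation whose right-hand side $\chi(\mathbf J_\kappa-\nabla\phi_{1,\kappa})+\chi'A_{1,\kappa}$ is bounded in $L^2(t-1,t+1,L^2(\Omega))$ once $t-1\geq t^*(M)$ (the term $\chi'A_{1,\kappa}$ being controlled by the $L^\infty(L^2)$ bound just obtained). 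Parabolic maximal $L^2$-regularity for the Hodge Laplacian on $\Omega$ --- valid up to the $\tfrac\pi2$ corners of $\partial\Omega$ by the elliptic estimates of Appendix~B of \cite{alhe14} --- then yields $\|v\|_{L^\infty(t-1,t+1,H^1(\Omega))}+\|v\|_{L^2(t-1,t+1,H^2(\Omega))}\leq C(\Omega)$; since $v=A_{1,\kappa}$ on $[t,t+1]$, this is \eqref{eq:72} after enlarging $t^*(M)$ by $1$.

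The step I expect to be the main obstacle is the last one: ensuring that the parabolic maximal-regularity estimate (equivalently, the $H^2$ elliptic estimate for the Hodge Laplacian) holds with a constant independent of $\kappa$ in the presence of the corners of $\partial\Omega$, and that the boundary conditions for $v$ are preserved under the time cut-off. All of this, however, reduces to quoting the corner-adapted linear elliptic theory already developed in \cite{alhe14}; the genuinely new ingredient --- the $\kappa$-uniform bound $\|\mathbf J_\kappa\|_{L^2(s,s+1,L^2(\Omega))}\leq\sqrt{2|\Omega|}$ --- is elementary and relies only on $\|\psi_\kappa\|_\infty\leq1$.
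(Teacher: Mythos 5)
Your proposal is correct and follows essentially the same route as the paper: a $\kappa$-uniform $L^2_tL^2_x$ bound on the source $\tfrac1\kappa\Im(\bar\psi_\kappa\nabla_{\kappa A_\kappa}\psi_\kappa)$ coming from the $\psi$-energy identity, an exponential-decay $L^2$ energy estimate giving $\|A_{1,\kappa}(t,\cdot)\|_2\le C(\Omega)$ for $t\ge t^*(M)$ (which the paper simply quotes as Lemma 5.3 of \cite{alhe14}), and then linear parabolic regularity on unit time intervals. The only difference is packaging: where you move $\nabla\phi_{1,\kappa}$ to the right-hand side and run a time-cutoff maximal-regularity argument for the vector Laplacian with the boundary conditions $A_{1,\kappa}\cdot\nu=0$, $B_{1,\kappa}=0$, the paper directly invokes the smoothing estimates (C.2) and (C.4) of Theorem C.1 in \cite{alhe14} for the operator $\mathcal L^{(1)}$ --- the same corner-adapted linear machinery you cite.
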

\begin{proof}
  By \cite[Lemma 5.3]{alhe14} there exists a constant $C=C(\Omega)>0$ such that for
  sufficiently large
  $\kappa$
  \begin{displaymath}
    \|A_{1,\kappa}(t,\cdot)\|_2^2 \leq \Big(\|A_{1,\kappa}(0,\cdot)\|_2^2+\frac{C}{\kappa^2}\Big)
e^{-\lambda t} + C   \int_0^te^{-\lambda(t-\tau)}\|\psi_\kappa(\tau,\cdot)\|_2^2 \,d\tau \,.
  \end{displaymath}
From \eqref{eq:69},  we then get
   \begin{displaymath}
    \|A_{1,\kappa}(t,\cdot)\|_2^2 \leq \Big(\|A_{1,\kappa}(0,\cdot)\|_2^2+\frac{C}{\kappa^2}\Big)\,
e^{-\lambda t} + \frac{C}{\lambda}
  \end{displaymath}
We thus have
     \begin{equation}\label{anytime}
    \|A_{1,\kappa}(t,\cdot)\|_2^2 \leq  (M+C) e^{-\lambda t} + \frac{C}{\lambda}
  \end{equation}
Hence, there exists $t_0^*(M)$, such that for $t \geq t_0^*(M)$,  we have
\begin{equation}
\label{eq:73}
    \|A_{1,\kappa}(t,\cdot)\|_2 \leq \frac{2C}{\lambda} \,.
\end{equation}
Next, we apply \cite[Theorem C.1 (Formula C.4)]{alhe14} to the operator
$\mathcal L^{(1)}$ (as introduced there  in Example (4) above this theorem) to
obtain that
\begin{equation}\label{consc1}
  \aligned
& \|A_{1,\kappa}\|_{L^\infty(t_0,t_0+1,H^1(\Omega))}+
  \|A_{1,\kappa}\|_{L^2(t_0,t_0+1,H^2(\Omega))}\\& \qquad
\leq  \frac{C}{\kappa}\|\Im \{\bar{\psi}_\k\nabla_{\kappa
    A_\kappa}\psi_\kappa\}\|_{L^2(t_0,t_0+1,L^2(\Omega))} +
  C  \|A_{1,\kappa}(t_0,\cdot)\|_{1,2} \,.
\endaligned
\end{equation}
with a constant $C$ independent of $t_0$.
Since from (\ref{eq:67}a) (cf. \cite{alhe14}) we can easily get
that
\begin{equation}
\label{eq:74}
      \|\nabla_{\kappa A_\k}\psi_\k (t,\cdot)\|_2^2 \leq  \kappa^2\|\psi_\kappa(t,\cdot)\|_2^2-\frac{1}{2}
      \frac{d\|\psi_\k (t,\cdot)\|_2^2}{dt} \,,
\end{equation}
we obtain by integrating over $(t_0,t_0+1)$
\begin{equation}\label{eq:70a}
  \|\nabla_{\kappa A_\k}\psi_\k \|^2_{L^2(t_0,t_0+1,L^2(\Omega))} \leq  \kappa^2\|\psi_\kappa \|_{L^2(t_0,t_0+1,L^2(\Omega))}^2 + \frac 12 \|\psi_\kappa(t_0, \cdot)\|_2^2\,,
 \end{equation}
and note for later reference that it implies
 \begin{equation}\label{eq:70aa}
  \|\nabla_{\kappa A_\k}\psi_\k \|_{L^2(t_0,t_0+1,L^2(\Omega))} \leq   C(\Omega)\, \kappa\,.
 \end{equation}
Implementing the upper bound \eqref{eq:70a}  in \eqref{consc1}, yields
$$\aligned
& \|A_{1,\kappa}\|_{L^\infty(t_0,t_0+1,H^1(\Omega))}+ \|A_{1,\kappa}\|_{L^2(t_0,t_0+1,H^2(\Omega))} \\
\leq& C\Big[1+
  \|\psi_\kappa\|_{L^2(t_0,t_0+1,L^2(\Omega))}  +
  \frac{1}{\kappa}\|\psi_\kappa(t_0, \cdot)\|_2+\|A_{1,\kappa}(t_0,\cdot)\|_{1,2}\Big] \,.
\endaligned
$$
We next apply \cite[Theorem C.1 (Formula C.2)]{alhe14} to obtain in
precisely the same manner
$$
\begin{array}{l}
\|A_{1,\kappa}\|_{L^\infty(t_0,t_0+1,H^1(\Omega))}\\
\qquad  \leq C\Big[1+
  \|\psi_\kappa\|_{L^2(t_0-1,t_0+1,L^2(\Omega))}  +
  \frac{1}{\kappa}\|\psi_\kappa(t_0, \cdot)\|_2 + \|A_{1,\kappa}(t_0-1,\cdot)\|_2 \Big] \,.
  \end{array}
$$
The above together with \eqref{eq:69} and \eqref{eq:73} yields, for
 $t_0\geq t_0^* +1$,
\begin{equation}
\label{est-of-B}
\|A_{1,\kappa}\|_{L^\infty(t_0,t_0+1,H^1(\Omega))}+ \|A_{1,\kappa}\|_{L^2(t_0,t_0+1,H^2(\Omega))}
\leq C\,,
\end{equation}
which implies  \eqref{eq:72}, with $t^* = t_0^* (M) +1$.
\end{proof}
\begin{remark}\label{usefulrem}
  Since our interest is in the limit as $t \to +\infty$, Lemma~\ref{lem:parabolic-A1} allows us to assume in the sequel, without
  any loss of generality, that (\ref{eq:72}) is satisfied for all
  $t\geq0$. We have just to make a translation $t \mapsto t -t^*$ and to
  observe that $\psi_\kappa(t^*,\cdot)$ has the same properties as $\psi_0$.
  \end{remark}
\begin{proposition}
  \label{lem:asymp-exp}
  Let $\omega_{\delta,j}$ ($j\in\{1,2\}$) be defined in
  \eqref{eq:56}.
Suppose that
  for some $j\in\{1,2\}$ we have that
  \begin{equation}
\label{eq:75}
    1<|h_j| \,.
  \end{equation}
 Then, there exist $C>0$ and $\delta_0>0$, and, for any $0<\delta<\delta_0$,  $\kappa_0(\delta)\geq 1$ such that, for  $\kappa\geq \kappa_0(\delta)$,
\begin{equation}
\label{eq:76}
\limsup_{t\to\infty}\int_{\omega_{\delta,j} }|\psi_\kappa|^2(t,x) \,dx \leq
  \frac{C_\delta }{\kappa^2}  \,.
\end{equation}
\end{proposition}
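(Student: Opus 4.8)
The plan is to run the time-dependent analogue of the stationary Agmon estimate behind \eqref{eq:58}, with the elliptic bounds on $A_{1,\kappa}$ used there replaced by the parabolic ones of Lemma~\ref{lem:parabolic-A1}, and with the stationary absorption replaced by a Gronwall argument. First, exactly as for \eqref{eq:58}, one reduces to $h_j>0$ by applying the transformation $(\psi_\kappa,A_\kappa,\phi_\kappa)\mapsto(\bar\psi_\kappa,-A_\kappa,-\phi_\kappa)$ to \eqref{eq:67}, and, invoking Remark~\ref{usefulrem}, assumes the conclusions of Lemma~\ref{lem:parabolic-A1} hold for all $t\geq0$. Then fix a cutoff $\check\zeta\in C_0^\infty(\Omega)$ with $\check\zeta\equiv1$ on $\omega_{\delta,j}$, with $\mathrm{supp}\,\check\zeta\subset\{(-1)^jB_n>1+\delta/2\}\cap\{d(\cdot,\partial\Omega_i)>\delta/2\}$ and $|\nabla\check\zeta|\leq C/\delta$. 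As in the proof of \eqref{eq:58}, the decisive feature of this choice is that $\check\zeta\psi_\kappa$ vanishes on all of $\partial\Omega$ (on $\partial\Omega_c$ since $\psi_\kappa=0$, near $\partial\Omega_i$ since $\check\zeta=0$), so that only the Dirichlet --- not the ``surface'' --- lower bound for the magnetic energy is needed and (\ref{eq:10}a) is unnecessary.

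Multiplying (\ref{eq:67}a) by $\check\zeta^2\bar\psi_\kappa$, taking real parts and integrating over $\Omega$, the electric term disappears, the boundary terms vanish by (\ref{eq:67}c,d), and $\|\psi_\kappa(t,\cdot)\|_\infty\leq1$ together with $|\mathrm{supp}\,\nabla\check\zeta|\leq C\delta$ give
\[
\tfrac12\,\tfrac{d}{dt}\|\check\zeta\psi_\kappa(t,\cdot)\|_2^2+\|\nabla_{\kappa A_\kappa}(\check\zeta\psi_\kappa)\|_2^2\;\leq\;\kappa^2\|\check\zeta\psi_\kappa\|_2^2+\tfrac{C}{\delta}\,.
\]
For the lower bound, extend $\check\zeta\psi_\kappa$ by zero and use the magnetic Bochner (AHS) identity \cite[Theorem 2.9]{AHS}, as in \eqref{eq:55}: since $\curl(\kappa A_\kappa)=\kappa B_\kappa=\kappa^2B_n+\kappa B_{1,\kappa}$ and $B_n>1+\delta/2$ on $\mathrm{supp}\,\check\zeta$,
\[
\|\nabla_{\kappa A_\kappa}(\check\zeta\psi_\kappa)\|_2^2\;\geq\;\kappa^2\!\int_\Omega B_n|\check\zeta\psi_\kappa|^2-\kappa\!\int_\Omega|B_{1,\kappa}|\,|\check\zeta\psi_\kappa|^2\;\geq\;\bigl(1+\tfrac\delta2\bigr)\kappa^2\|\check\zeta\psi_\kappa\|_2^2-\kappa\,\|B_{1,\kappa}(t,\cdot)\|_2\,\|\check\zeta\psi_\kappa\|_4^2\,.
\]

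The crux is the last term, and this is where the time-dependent case genuinely differs from Section~\ref{sec:3}. In the stationary proof it is disposed of via $\|B_{1,\kappa}\|_\infty=o(\kappa)$, which comes from the elliptic estimate \eqref{eq:49}; here Lemma~\ref{lem:parabolic-A1} only gives $\|B_{1,\kappa}(t,\cdot)\|_2\leq C$, which is of the critical size and would wreck the argument. To get around this I would first establish a time-dependent version of Proposition~\ref{entire}, namely $\limsup_{t\to\infty}\|\psi_\kappa(t,\cdot)\|_2\leq\epsilon_\kappa$ with $\epsilon_\kappa\to0$ as $\kappa\to\infty$ --- obtained by reworking the $\Phi_n$ level-set argument of Proposition~\ref{entire}, the $\phi_\kappa$-equation now carrying the extra source $\tfrac1\kappa\Im(\bar\psi_\kappa\partial_t\psi_\kappa)$, which is controlled in $L^2((t,t+1)\times\Omega)$ by the basic energy estimates. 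Feeding this into the parabolic estimates behind Lemma~\ref{lem:parabolic-A1} (in particular into the $L^\infty(t,t+1;H^1)$ bound for $A_{1,\kappa}$) upgrades it to $\|B_{1,\kappa}(t,\cdot)\|_2\leq C\epsilon_\kappa$ for all large $t$. Then, by the diamagnetic inequality together with the two-dimensional Ladyzhenskaya inequality applied to $|\check\zeta\psi_\kappa|\in H_0^1(\Omega)$, $\|\check\zeta\psi_\kappa\|_4^2\leq C\|\check\zeta\psi_\kappa\|_2\,\|\nabla_{\kappa A_\kappa}(\check\zeta\psi_\kappa)\|_2$, so a Young splitting with the $\kappa$-dependent weight $\sqrt{\epsilon_\kappa}$ yields
\[
\kappa\,\|B_{1,\kappa}(t,\cdot)\|_2\,\|\check\zeta\psi_\kappa\|_4^2\;\leq\;\sqrt{\epsilon_\kappa}\,\kappa^2\|\check\zeta\psi_\kappa\|_2^2+C\,\epsilon_\kappa^{3/2}\,\|\nabla_{\kappa A_\kappa}(\check\zeta\psi_\kappa)\|_2^2\,,
\]
with both coefficients $\to0$ with $\kappa$.

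Combining the three displays, absorbing the $\epsilon_\kappa^{3/2}\|\nabla_{\kappa A_\kappa}(\check\zeta\psi_\kappa)\|_2^2$ term on the left and using $\tfrac{1+\delta/2-\sqrt{\epsilon_\kappa}}{1+C\epsilon_\kappa^{3/2}}\geq1+\tfrac{3\delta}{8}$ for $\kappa$ large, one arrives, for $\kappa\geq\kappa_0(\delta)$, at
\[
\tfrac12\,\tfrac{d}{dt}\|\check\zeta\psi_\kappa(t,\cdot)\|_2^2+\tfrac\delta8\,\kappa^2\|\check\zeta\psi_\kappa(t,\cdot)\|_2^2\;\leq\;\tfrac{C}{\delta}\,,
\]
whence Gronwall's inequality forces $\limsup_{t\to\infty}\|\check\zeta\psi_\kappa(t,\cdot)\|_2^2\leq C/(\delta^2\kappa^2)$; since $\check\zeta\equiv1$ on $\omega_{\delta,j}$, this is \eqref{eq:76} with $C_\delta=C/\delta^2$. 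The main obstacle, as indicated, is the control of $B_{1,\kappa}$: proving the global $L^2$-smallness of $\psi_\kappa$ in the time-dependent regime, and then making sure the resulting error in the Agmon estimate stays strictly below the $\delta\kappa^2$ spectral gap --- even an $O(1)$ constant in front of $\|\nabla_{\kappa A_\kappa}(\check\zeta\psi_\kappa)\|_2^2$ would be fatal, which is why the Young weight must be taken $\to0$ with $\kappa$; a merely $L^2$-in-time bound on $B_{1,\kappa}$ would not suffice, since the incomplete-interval part of the Gronwall exponent would then blow up.
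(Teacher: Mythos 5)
There is a genuine gap, and it originates in a misdiagnosis. You claim that the uniform bound $\|B_{1,\kappa}(t,\cdot)\|_2\leq C$ coming from Lemma~\ref{lem:parabolic-A1} is ``of the critical size'' and therefore insufficient, but this is only true if you insist on estimating $\|\check\zeta\psi_\kappa\|_4^2$ by Ladyzhenskaya, which reintroduces the gradient term. The paper instead uses the elementary consequence of \eqref{eq:69}: since $|\psi_\kappa|\leq 1$ and $0\leq\check\zeta\leq1$, one has $\|\check\zeta\psi_\kappa\|_4^4\leq\|\check\zeta\psi_\kappa\|_2^2$, so the troublesome term is bounded by
\begin{displaymath}
\kappa\,\|B_{1,\kappa}(t,\cdot)\|_2\,\|\check\zeta\psi_\kappa\|_4^2\;\leq\;C\kappa\,\|\check\zeta\psi_\kappa\|_2\;\leq\;\frac{\delta\kappa^2}{4}\|\check\zeta\psi_\kappa\|_2^2+\frac{C}{\delta}\,,
\end{displaymath}
which is strictly subcritical relative to the coercive term $\tfrac{\delta}{2}\kappa^2\|\check\zeta\psi_\kappa\|_2^2$; Gronwall (or, as in the paper, Duhamel with the kernel $e^{-\delta\kappa^2(t-\tau)}$, Cauchy--Schwarz in time using $\|B_{1,\kappa}\|_2\leq C$, and an iteration over unit time intervals $a_n^2\leq C\delta^{-3}(\kappa^{-2}+\kappa^{-1}(a_n+a_{n-1}))$) then yields exactly \eqref{eq:76}. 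In other words, no smallness of $B_{1,\kappa}$ whatsoever is needed, and the entire detour you build to obtain it is unnecessary.

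The detour is also where your argument actually breaks: the ``time-dependent version of Proposition~\ref{entire}'' ($\limsup_{t\to\infty}\|\psi_\kappa(t,\cdot)\|_2\leq\epsilon_\kappa\to0$), on which your absorption with $\kappa$-dependent Young weights entirely rests, is asserted but not proved, and its sketched proof is doubtful. In the time-dependent setting the divergence of (\ref{eq:67}b) gives $-\Delta\phi_\kappa+|\psi_\kappa|^2\phi_\kappa=-\tfrac1\kappa\Im(\bar\psi_\kappa\partial_t\psi_\kappa)$, and the stationary level-set argument requires controlling the pairing of this extra source with $\Phi_n$ at a fixed time, at order $o(\kappa)$; the ``basic energy estimates'' only control $\partial_t\psi_\kappa$ in $L^2$ over unit time intervals and at a size that is a priori $O(\kappa^2)$ (cf. the bounds \eqref{eq:70aa}, \eqref{eq:90}), so the perturbation may be $O(\kappa)$, which destroys the estimate for $\|\psi_\kappa\Phi_n\|_2$; there is also a circularity risk, since sharp bounds on $\partial_t\psi_\kappa$ would essentially require the smallness being proved. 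Since the conclusion you need can be reached directly from $\|B_{1,\kappa}(t,\cdot)\|_2\leq C$ and \eqref{eq:69}, as above, you should discard the auxiliary claim rather than try to repair it; with that replacement your set-up (the cutoff $\check\zeta$ vanishing near $\partial\Omega_i$, the AHS lower bound, reduction to $h_j>0$, Remark~\ref{usefulrem}) coincides with the paper's proof.
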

\begin{proof} 
  Without loss of generality we may assume $h_j>0\,$; otherwise we apply
to \eqref{eq:67} the transformation
$(\psi_\kappa,A_\kappa,\phi_\kappa)\to(\bar{\psi_\kappa},-A_\kappa,-\phi_\kappa)$.
\vspace{1ex}

{\it Step 1: Let, for $n\geq1$,
\begin{displaymath}
  a_n = \|{\widehat \zeta}\psi_\kappa\|_{L^\infty(n-1, n,L^2(\Omega))} \,.
\end{displaymath}
Prove that for all $\delta \in (0,1)$  and $\kappa \geq \kappa_0(\delta)$,
\begin{equation}\label{inega}
  a_n^2 \leq  C \delta^{-3} \Big(\kappa^{-2} + \kappa^{-1} (a_{n} + a_{n-1})\Big)\,.
\end{equation}}
\vspace{2ex}

Let $\eta$ and $\eta_r$ be given by \eqref{eq:51} and \eqref{eq:52}
respectively. Then, set
\begin{equation}
\label{eq:77}
  {\widehat \zeta}=\eta \, \eta_\delta\, .
\end{equation}
Multiplying (\ref{eq:67}a) by ${\widehat \zeta}^{\,2} \bar{\psi_\kappa}$ and integrating by
 parts yields
\begin{displaymath}
  \frac{1}{2}\frac{d}{dt} \left(  \|{\widehat \zeta}\psi_\kappa (t,\cdot)\|_2^2\right)  \, +  \|\nabla_{\kappa A_\kappa}({\widehat \zeta}\psi_\kappa (t,\cdot))\|_2^2 \leq
  \kappa^2\|{\widehat \zeta}\psi_\kappa (t,\cdot)\|_2^2 + \|\psi_\kappa (t,\cdot)\, \nabla{\widehat \zeta}\|_2^2 \,.
\end{displaymath}
By \cite[Theorem 2.9] {AHS},
we have
$$\aligned
  \|\nabla_{\kappa A_\kappa}({\widehat \zeta}\psi_\kappa (t,\cdot))\|_2^2 \geq& \langle\kappa B_\kappa (t,\cdot) {\widehat \zeta}\psi_\kappa(t,\cdot),{\widehat \zeta}\psi_\kappa(t,\cdot)\rangle \\ =&
  \kappa^2\langle B_n{\widehat \zeta}\psi_\kappa(t,\cdot),{\widehat \zeta}\psi_\kappa(t,\cdot) \rangle + \langle\kappa B_{1,\kappa}(t,\cdot)\,{\widehat \zeta}\psi_\kappa(t,\cdot),{\widehat \zeta}\psi_\kappa(t,\cdot)\rangle  \\
  \geq&
  \kappa^2\Big(1+\frac{\delta}{2}\Big)\|{\widehat \zeta}\psi_\kappa(t,\cdot)\|_2^2+ \langle\kappa B_{1,\kappa}{\widehat \zeta}\psi_\kappa(t,\cdot)\,,\,{\widehat \zeta}\psi_\kappa(t,\cdot)\rangle \,.
\endaligned
$$
We can thus write
\begin{displaymath}
\begin{array}{l}
   \frac{1}{2}\frac{d}{dt} \left( \|{\widehat \zeta}\psi_\kappa(t,\cdot) \|_2^2\right) +   \frac{\kappa^2\delta}{2}\|{\widehat \zeta}\psi_\kappa (t,\cdot) \|_2^2\\ \quad  \leq
  \|\psi_\kappa (t,\cdot) \nabla\eta\|_2^2 + \|\psi_\kappa(t,\cdot) \nabla\eta_\delta\|_2^2- \langle\kappa B_{1,\kappa}(t,\cdot) {\widehat \zeta}\psi_\kappa(t,\cdot),{\widehat \zeta}\psi_\kappa(t,\cdot)\rangle \,.
  \end{array}
\end{displaymath}
Since
\begin{displaymath}
  \|\psi_\kappa (t,\cdot)\nabla\eta\|_2^2 + \|\psi_\kappa(t,\cdot) \nabla\eta_\delta\|_2^2 \leq \frac{C}{\delta^2} \,,
\end{displaymath}
we obtain that
\begin{equation}
  \label{eq:78}
  \begin{array}{l}
   \frac{1}{2}\frac{d}{dt} \left( \|{\widehat \zeta}\psi_\kappa (t,\cdot)\|_2^2 \right)
    +   \frac{\kappa^2\delta}{2}\|{\widehat \zeta}\psi_\kappa (t,\cdot)\|_2^2\\ \qquad
   \leq \frac{C}{\delta^2}  - \langle \kappa B_{1,\kappa} (t,\cdot) {\widehat \zeta}\psi_\kappa(t,\cdot)\,,\, {\widehat \zeta}\psi_\kappa (t,\cdot) \rangle \,.
\end{array} \end{equation}
From \eqref{eq:78} we can conclude that
\begin{equation}\label{eq:79}
\aligned
 \|{\widehat \zeta}\psi_\kappa(t,\cdot)\|_2^2 \leq& \|{\widehat \zeta}\psi_0\|_2^2\, e^{-\delta\kappa^2t} +
  \frac{C}{\delta^3\kappa^2}\\
  &+2 \int_0^t
  e^{-\delta\kappa^2(t-\tau)}\big|\langle\kappa B_{1,\kappa}(\tau,\cdot) {\widehat \zeta}\psi_\kappa(\tau,\cdot) ,{\widehat \zeta}\psi_\kappa(\tau,\cdot) \rangle \big|\,d\tau \,.
\endaligned
\end{equation}
 
To estimate the last term on the right-hand-side of \eqref{eq:79},  we start from
 $$ \aligned
&  \int_0^t  e^{-\delta\kappa^2(t-\tau)}\big|\langle\kappa
  B_{1,\kappa}(\tau,\cdot) {\widehat \zeta}\psi_\kappa (\tau,\cdot) ,{\widehat \zeta}\psi_\kappa(\tau, \cdot) \rangle\big|\,d\tau  \\
\leq &   \kappa\, \Big[\int_0^t  e^{-\delta\kappa^2(t-\tau)}
  {  \|B_{1,\kappa}(\tau, \cdot)\|_2^2 \,d\tau} \cdot \int_0^t  e^{-\delta\kappa^2(t-\tau)}
  \|{\widehat \zeta}{ \psi_\kappa(\tau, \cdot )\|_4^4}\,d\tau\Big]^{1/2}\,. \\
   \endaligned
  $$
With Remark \ref{usefulrem} in mind,  we use \eqref{eq:69} to obtain
$$
 \int_0^t  e^{-\delta\kappa^2(t-\tau)}
    \|B_{1,\kappa}(\tau, \cdot)\|_2^2 \,d\tau  \leq \frac{C}{\delta \kappa^2} \,.
$$
Implementing the above estimate, we obtain
$$
\begin{array}{l}
\int_0^t  e^{-\delta\kappa^2(t-\tau)}\big|\langle\kappa
  B_{1,\kappa}(\tau,\cdot) {\widehat \zeta}\psi_\kappa (\tau,\cdot) ,{\widehat \zeta}\psi_\kappa(\tau, \cdot) \rangle\big|\,d\tau\\
 \qquad
   \leq C \delta^{-\frac 12} \Big[\int_0^t  e^{-\delta\kappa^2(t-\tau)}
  \|{\widehat \zeta}{\psi_\kappa(\tau, \cdot )\|_4^4}\,d\tau\Big]^{1/2}\,.
  \end{array}
$$
To control of the right hand side we now write for $t\geq 1$
$$\aligned
& \Big[\int_0^t  e^{-\delta\kappa^2(t-\tau)}
  \|{\widehat \zeta}\psi_\kappa(\tau, \cdot )\|_4^4\,d\tau\Big]^{1/2} \\
& \leq  \Big[\int_0^{t-1}  e^{-\delta\kappa^2(t-\tau)}
  \|{\widehat \zeta}\psi_\kappa(\tau, \cdot )\|_4^4\,d\tau\Big]^{1/2} +  \Big[\int_{t-1} ^t  e^{-\delta\kappa^2(t-\tau)}
  \|{\widehat \zeta}\psi_\kappa(\tau, \cdot )\|_4^4\,d\tau\Big]^{1/2}\\
 & \leq C  \Big[\int_{t-1} ^t  e^{-\delta\kappa^2(t-\tau)}
  \|{\widehat \zeta}\psi_\kappa(\tau, \cdot )\|_2^2\,d\tau\Big]^{1/2}  + C \delta^{-\frac 12}  \kappa^{-1}  e^{- \frac{\delta }{2} \kappa^2}\\
& \leq  C \delta^{-\frac 12} \kappa^{-1} \|\widehat \zeta
  \psi_\kappa\|_{L^\infty(t-1,t,L^2(\Omega)) } + C \delta^{-\frac 12}  \kappa^{-1}  e^{-
    \frac{\delta }{2} \kappa^2} \,.
\endaligned
$$

Substituting the above into \eqref{eq:79} yields, with a new constant
$C$, for $\kappa$ large enough, and for $t\geq1\,$,
$$
\begin{array}{l}
   \|{\widehat \zeta}\psi_\kappa(t,\cdot)\|_2^2 \\ \quad   \leq C \delta^{-3}  \kappa^{-2} +  C \delta^{-1} \kappa^{-1}  e^{ - \frac{\delta}{2} \kappa^2}
    + C \delta^{-1} \kappa^{-1} \|\widehat \zeta \psi_\kappa\|_{L^\infty(t-1,t,L^2(\Omega)) }  \,.
     \end{array}
$$
From which we easily obtain \eqref{inega}.
\vspace{1ex}

{\it Step 2: Prove \eqref{eq:76}.}
\vspace{1ex}

By (\ref{eq:69}) we have 
$$
0 < a_n \leq C \,,
$$
which readily yields
$$
a_n \leq  C \delta^{-\frac 32}\kappa^{-\frac 12}\,.
$$
We improve the above estimate by reimplementing \eqref{inega}.
To this end we set
$$\widehat C := C \delta^{- \frac 32 }\,,$$ and then let
\begin{displaymath}
  a_n= \frac{\widehat C}{\kappa}\alpha_n \,.
\end{displaymath}
Substituting into (\ref{inega}) yields
\begin{equation}\label{inegaga}
  \alpha_n^2 \leq 1 + \alpha_{n-1} +\alpha_n \,.
\end{equation}
Suppose that for some $N\geq0$, we have $\alpha_N\leq 1 + \sqrt{2} \,$,  then\break $\alpha_{N+1}\leq 1 +\sqrt{2}$ and
hence $\alpha_n\leq 1 + \sqrt{2}$ for all $n\geq N$.

If $\alpha_{n-1} > 1+ \sqrt{2}$ for any $n$,  we have, with $\hat \alpha_n = \alpha_n - \frac 12\,$,
\begin{displaymath}
 \hat  \alpha_n^2\leq \frac 74 + \hat \alpha_{n-1} < \hat \alpha_{n-1}^2  \,.
\end{displaymath}
Hence, $\hat \alpha_n<\hat \alpha_{n-1}$ which means that $\hat \alpha_n$
converges as a positive decreasing sequence, and necessarily to a
limit smaller than $1/2 + \sqrt{2}$. We thus conclude that
\begin{equation}
  \label{eq:80}
\limsup \alpha_n \leq 1 + \sqrt{2} \,,
\end{equation}
and hence
\begin{equation}
\label{ineq-1} \limsup a_n \leq \frac{C (1 +\sqrt{2})}{\delta^{3/2}\kappa},
\end{equation}
from which \eqref{eq:76} can easily be  deduced.
\end{proof}

We next obtain the following improvement over (\ref{eq:70aa}) for
$\nabla_{\kappa A_\kappa}(\widehat \zeta \psi_\kappa)$.

\begin{proposition}
  Let $p \geq 2$.  For any $\delta >0$, there exists $\kappa_0 (\delta)$ and $C(\delta)$
  such that for $\kappa \geq \kappa (\delta)$ we have, with $u={\widehat \zeta}\psi_\kappa\,,
  $ the following estimate:
\begin{equation}\label{eq:89a}
  \|\nabla_{\kappa A_\k}u\|_{L^p(t_0,t_0+1; L^p(\Omega))}\leq C (\delta) \kappa^{6(1-2/p)}\,.
  \end{equation}
\end{proposition}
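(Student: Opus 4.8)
The plan is to derive a parabolic $L^p$ estimate for $u = {\widehat\zeta}\psi_\kappa$ by treating the equation satisfied by $u$ as a magnetic heat equation with a controllable right-hand side, and then interpolating as in the proof of Lemma~\ref{lemma3.1} (Step~2). First I would write the PDE for $u$: multiplying through, $u$ satisfies
\begin{displaymath}
  \frac{\partial u}{\partial t} - \nabla_{\kappa A_\kappa}^2 u = \kappa^2(1-|\psi_\kappa|^2)u - i\kappa\phi_\kappa u + \psi_\kappa\frac{\partial{\widehat\zeta}}{\partial t} - 2\nabla_{\kappa A_\kappa}\psi_\kappa\cdot\nabla{\widehat\zeta} - \psi_\kappa\Delta{\widehat\zeta}\,,
\end{displaymath}
with the boundary conditions $u = 0$ on $\partial\Omega_c$ and $\nabla_{\kappa A_\kappa}u\cdot\nu = 0$ on $\partial\Omega_i$ (both inherited from $\psi_\kappa$, since ${\widehat\zeta}$ is chosen compactly supported away from the relevant pieces of boundary and ${\partial{\widehat\zeta}}/{\partial t}=0$). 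The key point is that $\phi_\kappa$, $A_\kappa$, and the cutoff gradients are all controlled: by \eqref{eq:69} $|\psi_\kappa|\le 1$; by the analogue of \eqref{eq:34}--\eqref{estlp} (which holds in the time-dependent case with $c=1$) one has $\|\phi_\kappa(t,\cdot)\|_{2,p}\le C\kappa$; by Lemma~\ref{lem:parabolic-A1} (and Remark~\ref{usefulrem}) $\|A_{1,\kappa}(t,\cdot)\|_{1,2}\le C$, whence $\|A_\kappa(t,\cdot)\|_\infty\le C\kappa$ exactly as in Step~1 of Lemma~\ref{lemma3.1}; and $|\nabla{\widehat\zeta}|,|\Delta{\widehat\zeta}|\le C(\delta)$.

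Next I would apply the maximal parabolic regularity estimate for the magnetic heat semigroup on $\Dg = \operatorname{supp}{\widehat\zeta}$ with Dirichlet--Neumann corner conditions — this is exactly the kind of estimate recorded in \cite[Theorem C.1]{alhe14} (used already in \eqref{consc1}), applied now in $L^p$ rather than $L^2$. This gives
\begin{displaymath}
  \|\nabla_{\kappa A_\kappa}u\|_{L^p(t_0,t_0+1;L^p(\Omega))} \le C(\delta)\Big[\kappa^2\|u\|_{L^p(t_0-1,t_0+1;L^p)} + \kappa\|\phi_\kappa u\|_{L^p} + \|\nabla_{\kappa A_\kappa}\psi_\kappa\cdot\nabla{\widehat\zeta}\|_{L^p} + \text{lower order}\Big]\,.
\end{displaymath}
Using $\|u\|_\infty\le 1$, $\|\phi_\kappa\|_p\le C\kappa$, and the $L^2$ bound \eqref{eq:70aa} on $\nabla_{\kappa A_\kappa}\psi_\kappa$ (localized via $\nabla{\widehat\zeta}$ and combined with an $L^\infty$ bound $\|\nabla_{\kappa A_\kappa}\psi_\kappa\|_\infty\le C\kappa^4$ obtained as in \eqref{eq:44}, interpolated to get an $L^p$ bound $\le C\kappa^{4-6/p}$ exactly as in the derivation of \eqref{eq:47}), the right-hand side is bounded by $C(\delta)\kappa^{3}$ on the worst term — but this is too crude. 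To reach the stated exponent $\kappa^{6(1-2/p)}$ one must instead bootstrap: feed the sharp time-dependent decay \eqref{eq:76}, i.e. $\limsup_t\|u(t,\cdot)\|_2\le C(\delta)\kappa^{-1}$, into the estimate so that the dominant term $\kappa^2\|u\|_{L^p}$ becomes $\kappa^2\cdot\kappa^{-1}\cdot(\text{interpolation factor})$, and iterate the $L^\infty$-versus-$L^p$ dichotomy on $\|\nabla_{\kappa A_\kappa}u\|_p$ precisely along the lines of \eqref{eq:45}--\eqref{eq:47}. Balancing $\|\nabla_{\kappa A_\kappa}u\|_p \le C\big[\kappa^4+\kappa^2\|\nabla_{\kappa A_\kappa}u\|_p\big]^{1-2/p}\kappa^{2/p}$ with the improved zeroth-order input yields the exponent $6(1-2/p)$.

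The main obstacle is the interplay between the corner singularities and the $L^p$ maximal regularity: \cite[Theorem C.1]{alhe14} and the $W^{2,p}$ elliptic estimates are stated for domains bilipschitz equivalent to a cube with mixed Dirichlet--Neumann data, and one must check that restricting to $\Dg = \operatorname{supp}{\widehat\zeta}$ — which touches $\partial\Omega_i$ but stays away from $\partial\Omega_c$ by construction of $\eta_\delta$ and $\eta$ — keeps us within the admissible class, so that no loss of regularity at the Dirichlet--Neumann edges intrudes (here assumption $(R1)$ on flat right-angle corners is essential, as emphasized after \eqref{estlp}). The secondary technical point is the clean localization of $\nabla_{\kappa A_\kappa}\psi_\kappa\cdot\nabla{\widehat\zeta}$: since $\nabla{\widehat\zeta}$ is supported in $S_{\delta/2,j}\setminus S_{\delta,j}$ and near $\partial\Omega_i$, one uses \eqref{eq:76} applied with a slightly smaller $\delta$ to control $\psi_\kappa$ there, together with the integrated bound \eqref{eq:70a}, so that this term does not spoil the exponent. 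All the remaining steps — the gauge bounds, the $L^\infty$ estimate on $A_\kappa$, the interpolation inequality for $\|\nabla_{\kappa A_\kappa}u\|_p$ — are routine repetitions of arguments already carried out in Section~\ref{sec:3}.
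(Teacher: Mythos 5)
Your proposal hinges on two steps that do not hold up. First, you invoke ``maximal parabolic regularity for the magnetic heat semigroup \ldots applied now in $L^p$ rather than $L^2$'' as if it were a routine extension of \cite[Theorem C.1]{alhe14}. That theorem is an $L^2$-in-time, $L^2/H^1/H^2$-in-space energy-type estimate; no $L^p$ maximal regularity result for the operator $\partial_t-\nabla_{\kappa A_\kappa}^2$ (with time-dependent, $\kappa$-dependent coefficients) is available in the paper, and you neither prove one nor track how its constant would depend on $\kappa A_\kappa$. Second, the final quantitative step --- ``balancing \ldots with the improved zeroth-order input yields the exponent $6(1-2/p)$'' --- is asserted, not derived; it is not at all clear that your dichotomy modelled on \eqref{eq:45}--\eqref{eq:47} (which in the elliptic setting produces exponents of the form $3(1-2/p)$) would output $6(1-2/p)$, and you never identify the ingredient that actually generates this exponent. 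In the paper it arises from a purely $L^2$-based scheme: one proves $\|u\|_{L^2(t_0-1,t_0+1,H^2)}\leq C\kappa^3$ (Dirichlet parabolic regularity plus an Agmon-inequality bound on $\|A_\kappa\|_{L^4L^\infty}$ and a unit-interval iteration of the type $\alpha_n^2\leq 1+\alpha_{n-1}+\alpha_n$), deduces $\|\nabla_{\kappa A_\kappa}u\|_{L^2H^1}\leq C\kappa^3$, imports $\|\nabla_{\kappa A_\kappa}u\|_{L^\infty L^2}\leq C\kappa^3$ from \cite[(5.35)]{alhe14} (an input absent from your argument), shows $\|\nabla_{\kappa A_\kappa}u\|_{L^2L^2}\leq C$ via the energy identity and \eqref{eq:76}, and then concludes by the space-time Gagliardo--Nirenberg interpolation
\begin{equation*}
\|\nabla_{\kappa A_\kappa}u\|_{L^pL^p}^p\leq C\,\|\nabla_{\kappa A_\kappa}u\|_{L^2H^1}^{p-2}\,\|\nabla_{\kappa A_\kappa}u\|_{L^2L^2}^{4-p}\,\|\nabla_{\kappa A_\kappa}u\|_{L^\infty L^2}^{p-2}\,,
\end{equation*}
whose two factors of $\kappa^{3(p-2)}$ give exactly $\kappa^{6(1-2/p)}$.

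Two further remarks. Your worry about Dirichlet--Neumann corners for this step is misplaced: since ${\widehat\zeta}=\eta\,\eta_\delta$ vanishes in a $\delta/2$-neighbourhood of $\partial\Omega_i$ and $\psi_\kappa=0$ on $\partial\Omega_c$, the function $u$ vanishes on all of $\partial\Omega$, so the paper simply works with the Dirichlet Laplacian on $\Omega$ and the mixed-boundary/corner issue never enters here; stating that $u$ satisfies the magnetic Neumann condition on $\partial\Omega_i$ obscures this simplification. Also, your bound $\|\phi_\kappa(t,\cdot)\|_{2,p}\leq C\kappa$ is carried over from the steady-state argument of Section~\ref{sec:3} without justification in the time-dependent setting, where $\phi_\kappa$ no longer solves the elliptic problem \eqref{eq:33}; if you keep the $\phi_\kappa u$ term in your right-hand side you must supply a time-dependent substitute for that estimate.
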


\begin{proof} {\it Step 1: Prove that for some $C(\delta)>0$ we have, for
  sufficiently large $\kappa$ that
\begin{equation}
  \label{eq:87}
\|u\|_{L^2(t_0-1,t_0+1,H^2(\Omega))} \leq C\kappa^3\,.
\end{equation}}
\vspace{1ex}

We rewrite (\ref{eq:67}a, c, d) in the form
\begin{displaymath}
  \begin{cases}
\frac{\partial\psi_\kappa}{\partial t}-\Delta\psi_\kappa   = -2i\kappa A_\kappa\cdot\nabla\psi_\kappa  -  |\kappa A_\kappa|^2\psi_\kappa+ \kappa^2\psi_\kappa\big( 1 - |\psi_\kappa|^{2}
\big)\\
\qq\qq\qq \qquad   - i\kappa\phi_\kappa\psi_\kappa&\text{ in } (0,+\infty)\times\O,\\
  \psi_\kappa =0   &\text{ on } (0,+\infty)\times  \partial\Omega_c \,, \\
\frac{\partial\psi_\kappa}{\partial\nu}=i\kappa A_\kappa\psi_\kappa\cdot\nu=0  &\text{ on }(0,+\infty)\times   \partial\Omega_i \,.\\
  \end{cases}
\end{displaymath}
Clearly, by our choice of $\widehat \zeta$,
\begin{displaymath}
    \begin{cases}
\frac{\partial u}{\partial t}-\Delta u
= {\widehat \zeta}\big[-2i\kappa A_\kappa\cdot\nabla\psi_\kappa -  |\kappa
A_\kappa|^2\psi_\kappa + \kappa^2\psi_\kappa\big( 1 - |\psi_\kappa|^{2}
\big)&  \\
\qquad\qq\q - i\kappa\phi_\kappa\psi_\kappa\big]  + 2\nabla{\widehat \zeta}\cdot\nabla\psi_\kappa
  + \psi_\kappa \Delta{\widehat \zeta}   & \text{ in } (0,+\infty)\times  \Omega\, ,\\
  u =0  & \text{on }  (0,+\infty)\times \partial\Omega \,.
\end{cases}
\end{displaymath}
By  \cite[Theorem C.1]{alhe14} (this time applied to the Dirichlet
Laplacian in $\Omega$)  in the interval $(t_0-1,t_0+1)$
\begin{equation}\label{eq:81}
\aligned
&\|u\|_{L^2(t_0,t_0+1,H^2(\Omega))} \\
\leq &\big\|{\widehat \zeta}\big[-2i\kappa A_\kappa\cdot\nabla\psi_\kappa -  |\kappa
A_\kappa|^2\psi_\kappa + \kappa^2\psi_\kappa\big( 1 - |\psi_\kappa|^{2}
\big)- i\kappa\phi_\kappa\psi_\kappa\big]\big\| _{L^2(t_0-1,t_0+1,L^2(\Omega,\R^2))} \\
&+ \|2\nabla{\widehat \zeta}\cdot\nabla\psi_\kappa
  + \psi_\kappa \Delta{\widehat \zeta}\|_{L^2(t_0-1,t_0+1,L^2(\Omega,\R^2))} + C\|u(t_0-1,\cdot)\|_{2} \,.
\endaligned
\end{equation}

By \eqref{eq:69} we have that
\begin{equation}
\label{eq:82}
  \|2\nabla{\widehat \zeta}\cdot\nabla\psi_\kappa  + \psi_\kappa \Delta{\widehat \zeta}\|_{L^2(t_0-1,t_0+1,L^2(\Omega))}
\leq C(1+\|\nabla\psi_\kappa\|_{L^2(t_0-1,t_0+1,L^2(\Omega))})  \,.
\end{equation}
As
\begin{displaymath}\
  \|\nabla\psi_\kappa\|_{L^2(t_0-1,t_0+1,L^2(\Omega))} \leq
  \|\nabla_{\kappa A_\kappa}\psi_\kappa\|_{L^2(t_0-1,t_0+1,L^2(\Omega))}+\|\kappa A_\kappa\psi_\kappa\|_{L^2(t_0-1,t_0+1,L^2(\Omega))}\,,
\end{displaymath}
we obtain in view of (\ref{eq:70a})  and (\ref{eq:72}) that
\begin{equation}
\label{eq:83}
  \|\nabla\psi_\kappa\|_{L^2(t_0-1,t_0+1,L^2(\Omega))} \leq C\kappa^2 \,.
\end{equation}
Substituting the above into (\ref{eq:82}) yields
\begin{equation}
  \label{eq:84}
\|2\nabla{\widehat \zeta}\cdot\nabla\psi_\kappa  + \psi_\kappa \Delta{\widehat \zeta}\|_{L^2(t_0-1,t_0+1,L^2(\Omega))} \leq C\kappa^2\,.
\end{equation}

We next observe that
\begin{displaymath}
 \|{\widehat \zeta} |\kappa A_\kappa|^2\psi_\kappa \|_{L^2(t_0-1,t_0+1,L^2(\Omega))} \leq
 \kappa^2\|A_\kappa\|_{L^4(t_0-1,t_0+1,L^\infty(\Omega,\R^2))}^2 \|{\widehat \zeta}\psi_\kappa \|_{L^2(t_0-1,t_0+1,L^2(\Omega))}\,.
\end{displaymath}
Since $\widehat \zeta$ is supported in the set $\omega_{\delta,j}$, we may use
\eqref{eq:76}, which together with Agmon's inequality \cite[Lemma
13.2]{ag65}, (\ref{time-dep-AB}), and \eqref{eq:72}, yield, for $t_0$ large enough,
\begin{equation}\label{eq:85}  \aligned
& \|{\widehat \zeta} |\kappa A_\kappa|^2\psi_\kappa \|_{L^2(t_0-1,t_0+1,L^2(\Omega))}\\
\leq& C\kappa\|A_\kappa\|_{L^\infty(t_0-1,t_0+1,L^2(\Omega,\R^2))}
\|A_\kappa\|_{L^2(t_0-1,t_0+1,H^2(\Omega,\R^2))}\\
\leq& C\kappa^3\,.
\endaligned
\end{equation}
Similarly,
\begin{equation}\label{eq:86}
\aligned
& \|{\widehat \zeta}\kappa A_\kappa\cdot\nabla\psi_\kappa\|_{L^2(t_0-1,t_0+1,L^2(\Omega))}\\
\leq& \kappa\|A_\kappa\|_{L^4(t_0,t_0+1,L^\infty(\Omega,\R^2))}
  \big[\|u\|_{L^4(t_0-1,t_0+1,H^1(\Omega))}
  +\|\psi_\kappa\nabla{\widehat \zeta}\|_{L^4(t_0-1,t_0+1,L^2(\Omega))}\big]\\
\leq&   C\kappa^2 +
  C\kappa^2\|u\|^{1/2}_{L^2(t_0-1,t_0+1,H^2(\Omega))}\|u\|^\frac 12 _{L^\infty(t_0-1,t_0+1,L^2(\Omega))} \\
\leq & C\Big[\kappa^2 + \kappa^{3/2}\|u\|^\frac 12_{L^2(t_0-1,t_0+1,H^2(\Omega))}\Big]\,.
\endaligned
\end{equation}
Substituting \eqref{eq:86} together with \eqref{eq:84},
and
\eqref{eq:85} into \eqref{eq:81} yields with the aid of \eqref{eq:69}
\begin{equation}\label{eq:80a}
\|u\|_{L^2(t_0,t_0+1,H^2(\Omega))} \leq C\Big[\kappa^3 + \kappa^{3/2}\|u\|^{\frac 12}_{L^2(t_0-1,t_0+1,H^2(\Omega))}\Big]\,
\end{equation}

Proceeding as in the proof of Proposition \ref{lem:asymp-exp}, we can
assume $C\geq 1$ in \eqref{eq:80a} and set
$$
\alpha_n = C^{-1} \kappa^{-\frac 32}  \|u\|_{L^2(n,n+1,H^2(\Omega))} ^\frac 12\,.
$$
We now can  rewrite \eqref{eq:80a} in the form
$$
\alpha_n^2 \leq ( 1 +\alpha_{n-1} + \alpha_n)\,,
$$
which is precisely \eqref{inegaga}.
We can thus conclude (\ref{eq:80}), and hence, for a new value of $C$,
\eqref{eq:87} easily follows.
\vspace{1ex}

{\it Step 2: Prove that
\begin{equation}
  \label{eq:89}
  \|\nabla_{\kappa A_\kappa}u\|_{L^2(t_0,t_0+1,H^1(\Omega,\R^2))}  \leq C \kappa^3 \,.
\end{equation}}
\vspace{2ex}

It can be easily verified that
\begin{equation}
\label{eq:88}
  \|\nabla_{\kappa A_\kappa}u(t,\cdot)\|_{1,2}\leq \|u(t,\cdot)\|_{2,2} + \kappa\|\,|A_\kappa|\nabla u(t,\cdot)\|_2 +
  \kappa\|u\nabla A_{\kappa}(t,\cdot)\|_2 \,.
\end{equation}
Furthermore, in the same manner we have obtained \eqref{eq:86} we
obtain, with the aid of \eqref{eq:87}
\begin{displaymath}
 \kappa  \|\,|A_\kappa|\nabla u\|_{L^2(t_0,t_0+1,L^2(\Omega,\R^2))} \leq C
  \kappa^{3/2}\|u\|_{L^2(t_0,t_0+1,H^2(\Omega,\R^2))}^{1/2}\leq C\kappa^3 \,.
\end{displaymath}
By \eqref{eq:69} and \eqref{eq:72} we have that
\begin{displaymath}
  \kappa\|u\nabla A_{\kappa}\|_{L^2(t_0,t_0+1,L^2(\Omega,\R^2))} \leq C\kappa^2 \,.
\end{displaymath}
We can now conclude \eqref{eq:89} from \eqref{eq:88}. 
\vspace{1ex}

{\it Step 3: Prove \eqref{eq:89a}.}
\vspace{1ex}

 In \cite[(5.35)]{alhe14} it was shown that
 \begin{equation}
\label{eq:90}
    \|\nabla_{\kappa A_\k}\psi_\k\|_{L^\infty(t_0,t_0+1; L^2(\Omega))}\leq C\kappa^3\,.
 \end{equation}
 (Note that while the setting in \cite{alhe14} is different then - in
 particular, we assume there $J\sim\OO(\kappa)$ - the estimate  is
 still valid in the present case because $c=1$.)
Hence, we get
\begin{equation}
\label{eq:91}
  \|\nabla_{\kappa A_\k}u\|_{L^\infty(t_0,t_0+1; L^2(\Omega))}\leq C\kappa^3\,.
\end{equation}
We now use Gagliardo-Nirenberg interpolation inequality (see  \cite{ni59}) to
obtain
$$
\aligned
 &  \|\nabla_{\kappa A_\k}u\|_{L^p(t_0,t_0+1; L^p(\Omega))}^p\qquad  \\
& \leq C \int_{t_0}^{t_0+1}
   \|\nabla_{\kappa A_\k}u(t,\cdot)\|_{1,2}^{p-2}\|\nabla_{\kappa A_\k}u(t,\cdot)\|_2^2 \,dt\\
& \leq C   \|\nabla_{\kappa A_\k}u\|_{L^2(t_0,t_0+1; H^1(\Omega))}^{p-2} \|\nabla_{\kappa
    A_\k}u\|_{L^{\frac{4}{4-p}}(t_0,t_0+1; L^2(\Omega))}^2 \,.
\endaligned
$$
Consequently,
\begin{equation}\label{eq:92}
\aligned
&\|\nabla_{\kappa A_\k}u\|_{L^p(t_0,t_0+1; L^p(\Omega))}^p \\
\leq &C \|\nabla_{\kappa A_\k}u\|_{L^2(t_0,t_0+1; H^1(\Omega))}^{p-2} \|\nabla_{\kappa
    A_\k}u\|_{L^2(t_0,t_0+1; L^2(\Omega))}^{4-p} \|\nabla_{\kappa
    A_\k}u\|_{L^\infty(t_0,t_0+1; L^2(\Omega))}^{p-2} \,.
\endaligned
\end{equation}
 Multiplying (\ref{eq:67}a) by ${\widehat \zeta}^2\bar{\psi}_\kappa$ and integrating over
$\Omega$ we obtain for the real part  that
\begin{displaymath}
  \|\nabla_{\kappa A_\k(t,\cdot)}u(t,\cdot)\|_2^2 \leq \kappa^2\|u (t,\cdot)\|_2^2-\frac{1}{2}
      \frac{d\|u (t,\cdot)\|_2^2}{dt}+ \|\psi_\kappa (t,\cdot) \nabla{\widehat \zeta}\|_2^2 \,,
\end{displaymath}
Integrating over $(t_0,t_0+1)$ and using \eqref{eq:76} we then obtain, for sufficiently large $t_0$,
\begin{displaymath}
   \|\nabla_{\kappa A_\k}u\|_{L^2(t_0,t_0+1; L^2(\Omega))}^2  \leq C \,.
\end{displaymath}
Substituting the above, \eqref{eq:89}, and \eqref{eq:91}, into
\eqref{eq:92} we then obtain \eqref{eq:89a}.
\end{proof}

We can now obtain the following improved regularity for $B_{1,\kappa}$
\begin{proposition}
Let $2<p\leq12/5$. For $0<\delta<\delta_0$,  there exists a constant $C=C(\Omega,\delta)>0$ such that for all $t_0>1$
  and $\kappa>\kappa_0(\delta)$ we have
\begin{equation}
 \label{eq:93}
\|B_{1,\kappa}\|_{L^p(t_0,t_0+1,W^{1,p}(\omega_{\delta,j}))}  \leq C \,.
\end{equation}
\end{proposition}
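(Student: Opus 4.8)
The plan is to take the scalar curl of (\ref{eq:71}a): $B_{1,\kappa}$ then solves a \emph{scalar} heat equation with homogeneous Dirichlet condition whose right--hand side is a first derivative of the supercurrent $\frac1\kappa\Im(\bar\psi_\kappa\nabla_{\kappa A_\kappa}\psi_\kappa)$, and on the region of interest this supercurrent has just been shown to be small in $L^p_tL^p_x$ by \eqref{eq:89a}. A \emph{localized} application of $L^p$ maximal parabolic regularity should then give the claim, the restriction $p\le 12/5$ being forced by the $\kappa$--power in \eqref{eq:89a}.

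Concretely, I would fix $\delta\in(0,\delta_0)$ and apply \eqref{eq:76} and \eqref{eq:89a} with $\delta/2$ in place of $\delta$; let $\widehat\zeta$ denote the corresponding cut--off of \eqref{eq:77} (built from $\delta/2$), so that $\widehat\zeta\equiv1$ on $S_{\delta/2,j}\cap\{\dist(\cdot,\partial\Omega_i)>\delta/2\}$, a set containing $\overline{\omega_{\delta,j}}$ with room to spare since $B_n$ and $\dist(\cdot,\partial\Omega_i)$ are Lipschitz. I would then choose $\chi\in C^\infty(\overline\Omega,[0,1])$ with $\chi\equiv1$ on $\omega_{\delta,j}$ and $\spt\chi$ contained in the interior of $\{\widehat\zeta=1\}$ and in $\{\dist(\cdot,\partial\Omega_i)>\delta/2\}$; thus $\chi$ vanishes near $\partial\Omega_i$ while $\widehat\zeta\equiv1$ on a neighbourhood of $\spt\chi$, so that on $\spt\chi$ one has $\psi_\kappa=\widehat\zeta\psi_\kappa$ and $\nabla_{\kappa A_\kappa}\psi_\kappa=\nabla_{\kappa A_\kappa}(\widehat\zeta\psi_\kappa)$. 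Writing $I:=\frac1\kappa\Im(\bar\psi_\kappa\nabla_{\kappa A_\kappa}\psi_\kappa)$ and using $\|\psi_\kappa\|_\infty\le1$ with \eqref{eq:89a},
\[
\|\chi I\|_{L^p(t_0-1,t_0+1;L^p(\Omega))}\le\frac1\kappa\|\nabla_{\kappa A_\kappa}(\widehat\zeta\psi_\kappa)\|_{L^p(t_0-1,t_0+1;L^p(\Omega))}\le C(\delta)\,\kappa^{\,6(1-2/p)-1},
\]
which is bounded uniformly in $\kappa$ precisely when $6(1-2/p)\le1$, i.e.\ $p\le 12/5$.

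Next, taking the scalar curl of (\ref{eq:71}a) kills $\nabla\phi_{1,\kappa}$, turns $\curl^2$ into $-\Delta$, and — using the time--dependent analogue of \eqref{eq:32}, valid because $A_{1,\kappa}\cdot\nu\equiv0$ (Coulomb gauge \eqref{eq:5}) and $I\cdot\nu\equiv0$ on $\partial\Omega$ — yields $\partial_t B_{1,\kappa}-\Delta B_{1,\kappa}=\curl I$ in $(0,\infty)\times\Omega$ with $B_{1,\kappa}(t,\cdot)|_{\partial\Omega}\equiv0$. Putting $w:=\chi B_{1,\kappa}$ and commuting $\chi$ through, $w$ solves a homogeneous--Dirichlet heat equation with right--hand side $\Div F+g$, where $F:=\chi I^{\perp}-2B_{1,\kappa}\nabla\chi$ (with $I^\perp$ the rotation of $I$ such that $\Div I^\perp=\curl I$) and $g:=-(\nabla^\perp\chi)\cdot I+B_{1,\kappa}\Delta\chi$. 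The contributions of $I$ to $F$ and $g$ are bounded in $L^p((t_0-1,t_0+1)\times\Omega)$ by the display above, since $\spt\chi$ and $\spt\nabla\chi$ lie in the good region. For the contributions of $B_{1,\kappa}$, \eqref{eq:72} gives $\|B_{1,\kappa}\|_{L^\infty(t_0-1,t_0+1;L^2(\Omega))}+\|B_{1,\kappa}\|_{L^2(t_0-1,t_0+1;H^1(\Omega))}\le C$, whence by the two--dimensional Ladyzhenskaya inequality $\|B_{1,\kappa}\|_{L^4((t_0-1,t_0+1)\times\Omega)}\le C$; since $p\le 12/5<4$, both $B_{1,\kappa}\nabla\chi$ and $B_{1,\kappa}\Delta\chi$ are bounded in $L^p((t_0-1,t_0+1)\times\Omega)$, uniformly in $\kappa$ and $t_0$.

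It then suffices to invoke $L^p$ maximal parabolic regularity for the Dirichlet heat equation on $\Omega$ — available for $p$ near $2$, in particular for $p\le 12/5$, as $\Omega$ is bilipschitz equivalent to a cube (cf.\ \cite{gima12} and Appendix~C of \cite{alhe14}) — in the form $\|w\|_{L^p(t_0,t_0+1;W^{1,p}(\Omega))}\le C(\|F\|_{L^p((t_0-1,t_0+1)\times\Omega)}+\|g\|_{L^p((t_0-1,t_0+1)\times\Omega)})$, once the data at $t_0-1$ is removed by multiplying $w$ by a temporal cut--off equal to $0$ for $t\le t_0-\tfrac12$ and $1$ for $t\ge t_0$ (which only adds to $g$ a term supported in $(t_0-\tfrac12,t_0)$ and bounded there by $C\,B_{1,\kappa}$, hence in $L^p$). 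Using Remark~\ref{usefulrem} to reduce to $t\ge0$, this gives $\|w\|_{L^p(t_0,t_0+1;W^{1,p}(\Omega))}\le C(\Omega,\delta)$ for all $\kappa\ge\kappa_0(\delta)$ and $t_0>1$, and $\chi\equiv1$ on $\omega_{\delta,j}$ then gives \eqref{eq:93}. The real obstacle is not the (standard) parabolic estimate but the uniform--in--$\kappa$ bookkeeping: one is forced to localize into a slightly enlarged good region which, unlike a purely interior estimate, necessarily meets $\partial\Omega_c$ (so one needs $B_{1,\kappa}\equiv0$ on all of $\partial\Omega$), and to cast the right--hand side as $\Div F+g$ with $F,g$ bounded in $L^p$; it is this requirement, passed through the $\kappa$--power of \eqref{eq:89a}, that singles out the range $2<p\le 12/5$.
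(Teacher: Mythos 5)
Your strategy is, at its core, the one the paper uses: take the curl of \eqref{eq:71} to obtain a Dirichlet heat equation for $B_{1,\kappa}$ (with $B_{1,\kappa}|_{\partial\Omega}=0$ justified exactly as you indicate, via the Coulomb gauge and the vanishing of the normal supercurrent), localize by a cutoff, recast the right-hand side in divergence form, and feed in the $L^p$ bound \eqref{eq:89a} on $\nabla_{\kappa A_\kappa}(\widehat\zeta\psi_\kappa)$ (the source of the restriction $p\leq 12/5$, with the same exponent bookkeeping $\kappa^{6(1-2/p)-1}$) together with the energy bound \eqref{eq:72} on $B_{1,\kappa}$, concluding by the divergence-form parabolic $W^{1,p}$ estimate of \cite{by07}. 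Within this skeleton you make two genuine simplifications. First, by introducing a second cutoff $\chi$ with ${\rm supp}\,\chi\subset\{\widehat\zeta=1\}$ (and $\widehat\zeta$ built at level $\delta/2$), every commutator term—in particular $\nabla\chi\cdot I^{\perp}$—is supported where \eqref{eq:89a} controls the \emph{full} supercurrent in $L^p$; the paper keeps the single cutoff $\widehat\zeta$, and on ${\rm supp}\,\nabla\widehat\zeta$ only the $L^2$ bound \eqref{eq:70aa} is available, which is why it must split $\B_{\widehat\zeta}=U_1+U_2$ (\eqref{eq:117a}--\eqref{eq:117b}) and treat $U_2$ by $L^2$ maximal regularity plus Gagliardo--Nirenberg. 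Second, you remove the data at $t_0-1$ by a temporal cutoff (the extra term being controlled by the spacetime $L^4$ bound on $B_{1,\kappa}$ from \eqref{eq:72}), where the paper absorbs the initial data into the $U_2$ problem. Both devices are sound and arguably streamline the argument.

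The one step you should tighten is the invocation of $L^p$ parabolic regularity \emph{on $\Omega$ itself}. The domain has right-angle corners; the divergence-form result \cite{by07} requires boundary flatness/regularity that a corner does not supply, and the assertion that a perturbative ("$p$ near $2$") range automatically covers all of $(2,12/5]$ on a bilipschitz-to-cube domain is not backed by the sources you cite (\cite[Theorem 7.1]{gima12} is elliptic and used in the paper for a different equation, and Appendix C of \cite{alhe14} is $L^2$-based). The repair is exactly the paper's device: since ${\rm supp}\,\chi$ stays at distance of order $\delta$ from $\partial\Omega_i$, hence from the corners, work in a smooth subdomain $\hat\Omega(\delta)\subset\Omega$ whose boundary consists of part of the $C^3$ arc $\partial\Omega_c$ and an interior curve avoiding ${\rm supp}\,\chi$; there $w=\chi B_{1,\kappa}$ still satisfies homogeneous Dirichlet conditions (using $B_{1,\kappa}|_{\partial\Omega}=0$ on the outer part and $\chi=0$ on the inner part), \cite{by07} applies for the $\Div F$ part, and the non-divergence term $g\in L^p$ can, if one wishes to avoid quoting full maximal regularity, be rewritten as $\Div\bigl(\nabla\Delta_D^{-1}g\bigr)$ as the paper does for $B_{1,\kappa}\Delta\widehat\zeta$. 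With that modification your proof is complete and matches the conclusion \eqref{eq:93}.
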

\begin{proof} 
Taking the curl of \eqref{eq:71} yields that $B_{1,\kappa}$ is a weak
solution of
\begin{equation}
\label{eq:94}
  \begin{cases}
       \frac{\partial  B_{1,\kappa}}{\partial t} - \Delta B_{1,\kappa}  = \frac{1}{\kappa}\curl
       \Im(\bar{\psi}_\k\nabla_{\kappa A_\kappa}\psi_\kappa) & \text{ in } (0,+\infty)\times  \Omega \\
   B_{1,\kappa} =0 & \text{ on } (0,+\infty)\times  \partial\Omega \,.
  \end{cases}
\end{equation}
 Let
\begin{displaymath}
  \B_{\widehat \zeta} = {\widehat \zeta}B_{1,\kappa} \,,
\end{displaymath}
where the cutoff function $\widehat{\zeta}$ is defined by (\ref{eq:77}).
Let further $\hat{\Omega}(\delta)\subset\Omega$ be smooth  and satisfy
$$
{\rm supp}\,\widehat{\zeta}\subset \hat{\Omega}(\delta)\,.$$
As for any $V\in H^1(\Omega,\R^2)$ we have that
\begin{displaymath}
  \curl V = \Div V_\perp\,,
\end{displaymath}
it can be  easily verified
from (\ref{eq:94}) that
\begin{displaymath}
  \frac{\partial  \B_{\widehat \zeta}}{\partial t} - \, \Delta \B_{\widehat \zeta}  =
    \frac{1}{\kappa}{\widehat \zeta}\Div\big( \Im(\bar{\psi}_\k\nabla_{\kappa
      A_\kappa}\psi_\kappa)_\perp\big) -2\Div(B_{1,\kappa}\nabla{\widehat \zeta)}) +
    B_{1,\kappa}\Delta\widehat \zeta\,.
\end{displaymath}
Consequently,
\begin{equation}
\label{eq:95}
\left\{
  \begin{array}{rll}
    \frac{\partial  \B_{\widehat \zeta}}{\partial t} - \, \Delta \B_{\widehat \zeta}  &=
    \frac{1}{\kappa}{\widehat \zeta}\, \Div\,\big( \Im(\bar{\psi}_\k\nabla_{\kappa
      A_\kappa}\psi_\kappa)_\perp\big)& \\
      &\quad  -\Div(2  B_{1,\kappa}\nabla{\widehat \zeta} +
  \nabla \Delta_D^{-1}(B_{1,\kappa}\Delta{\widehat \zeta}) ) & \text{in } (t_0-1,t_0+1)\times\hat{\Omega} \\
   \B_{\widehat \zeta} &=0 & \text{on }  (t_0-1,t_0+1)\times\partial\hat{\Omega} \\
\B_{\widehat \zeta} (t_0-1,\cdot) &  =  \widehat \zeta\, \B_{1,\kappa} (t_0-1,\cdot)  &\text{in } \hat{\Omega}  \,.
  \end{array}
  \right.
\end{equation}
In the above $\Delta_D^{-1}$ denotes the inverse Dirichlet Laplacian in $
\hat{\Omega}$.

In order to apply
  \cite[Theorem 1.6]{by07}  which is devoted to the case of parabolic
  operators written in divergence form and with zero initial condition
  we first decompose the solution of \eqref{eq:95} into two
  Cauchy-Dirichlet problems.
  The first of them is:
  \begin{equation} \label{eq:117a}
\left\{
  \begin{array}{rll}
    \frac{\partial U_1}{\partial t} - \, \Delta U_1 &= \Div f_1
    & \text{ in } (t_0-1,t_0+1)\times\hat{\Omega}\,, \\
   U_1 &=0 & \text{ on }  (t_0-1,t_0+1)\times\partial\hat{\Omega} \,,\\
U_1 (t_0-1,\cdot)& = 0 &\text{ in } \hat{\Omega}  \,,
  \end{array}
  \right.
\end{equation}
in which
\begin{equation}\label{eq:117aa}
f_1 =   \frac{1}{\kappa}{\widehat \zeta}\, \,\big( \Im(\bar{\psi}_\k\nabla_{\kappa
      A_\kappa}\psi_\kappa)_\perp\big)  - 2  B_{1,\kappa}\nabla{\widehat \zeta} -
  \nabla \Delta_D^{-1}(B_{1,\kappa}\Delta{\widehat \zeta}) \,.
 \end{equation}
The second one is:
\begin{equation}
\label{eq:117b}
\left\{
  \begin{array}{rll}
    \frac{\partial U_2}{\partial t} - \, \Delta U_2  &= F_2
 & \text{in } (t_0-1,t_0+1)\times\hat{\Omega} \,,\\
   U_2 &=0 & \text{on }  (t_0-1,t_0+1)\times\partial\hat{\Omega} \,,\\
U_2(t_0-1,\cdot )&  =  \widehat \zeta\, \B_{1,\kappa} (t_0-1,\cdot)  &\text{in } \hat{\Omega}  \,,
  \end{array}
  \right.
\end{equation}
where
 \begin{equation} \label{eq:117ba}
 F_2:= -  \frac{1}{\kappa}  \,\big( \Im(\bar{\psi}_\k \nabla {\widehat \zeta}\cdot (\nabla_{\kappa
      A_\kappa}\psi_\kappa)_\perp) \big)\,.
 \end{equation}
By uniqueness of the weak solution, we have
\begin{equation} \label{eq:117c}
 \B_{\widehat \zeta}  = U_1 + U_2  \text{ in } (t_0-1,t_0+1)\times\hat{\Omega}\,.
 \end{equation}
We now separately estimate $U_1$ and $U_2$ in  $L^p (t_0,t_0+1, W^{1,p}( \hat \Omega))\,$.
\vskip0.05in

{\it Estimate of $U_1$}
\vspace{1ex}

We apply  \cite[Theorem 1.6]{by07} to obtain
\begin{equation} \label{eq:96}
\| U_1\|_{L^p(t_0,t_0+1,W^{1,p}(\Omega))}\\
\leq C \,\|f_1\|_{L^p(t_0,t_0+1,L^p(\Omega,\R^2)}\,.
\end{equation}
It can be easily verified, by the Gagliardo-Nirenberg interpolation
inequality \cite{ni59} that, for
all $2<p<4$, there exists a constant $C$ such that,
for any $\phi \in L^2(t_0,t_0+1,H^1(\Omega))\cap L^\infty (t_0,t_0+1,L^2(\Omega))$, we have \begin{multline}\label{eq:76a}
  \|\phi \|_{L^p(t_0,t_0+1,L^p(\Omega,\R^2))}^p\leq C \int_{t_0}^{t_0+1}\|\phi (\tau,\cdot)
  \|_2^2 \|\phi (\tau,\cdot) \|_{1,2}^{p-2}\,d\tau \\ \leq C  \| \phi \|_{L^{\frac{4}{4-p}}(t_0,t_0+1,L^2(\Omega,\R^2))}^2\|\phi\|_{L^2(t_0,t_0+1,H^1(\Omega,\R^2))}^{p-2}\,.
\end{multline}
By \eqref{eq:76a},  \eqref{eq:72}, Remark~\ref{usefulrem}, and Sobolev embeddings we
have:
\begin{equation}\label{eq:97}
\| \B_{\widehat \zeta} (t_0,\cdot)\|_p +
\|\B_{1,\kappa} \nabla{\widehat \zeta}\|_{L^p(t_0,t_0+1,L^p(\Omega))} +
\|\B_{1,\kappa}
\Delta{\widehat \zeta}\|_{W^{-1,p}(t_0,t_0+1,L^p(\Omega,\R^2))}
\leq C\,.
\end{equation}
Furthermore, by \eqref{eq:69} we have that
\begin{displaymath}
  \begin{array}{l}
\|{\widehat \zeta}\Im(\bar{\psi}_\kappa\nabla_{\kappa A_\kappa}\psi_\kappa)\|_{L^p(t_0,t_0+1,L^p(\Omega,\R^2))}\\
\qquad   \leq
\|\psi_\kappa\nabla{\widehat \zeta}\|_{L^p(t_0,t_0+1,L^p(\Omega,\R^2))}+  \|\nabla_{\kappa
  A_\kappa}({\widehat \zeta}\psi_\kappa)\|_{L^p(t_0,t_0+1,L^p(\Omega,\R^2))} \\
\qquad \leq \|\nabla_{\kappa A_\kappa}({\widehat \zeta}\psi_\kappa)\|_{L^p(t_0,t_0+1,L^p(\Omega,\R^2))} +
C\,.
\end{array}
\end{displaymath}
Substituting the above together with \eqref{eq:97}
into \eqref{eq:96} yields for $t_0$ large enough
\begin{equation}\label{eq:98}
\| U_1\|_{L^p(t_0,t_0+1,W^{1,p}(\Omega))} \leq C\Big(1 +\frac{1}{\kappa}
 \|\nabla_{\kappa A_\kappa}({\widehat \zeta}\psi_\kappa)\|_{L^p(t_0,t_0+1,L^p(\Omega,\R^2))}\Big)  \,.
\end{equation}
 Substituting \eqref{eq:89a}
 into \eqref{eq:98} yields
\begin{equation}\label{eq:79a}
\| U_1\|_{L^p(t_0,t_0+1,W^{1,p}(\Omega))}
\leq C(1+\kappa^{5-12/p})
 \leq  \widehat C\,,
\end{equation}
since $2< p\leq 12/5$.
\vspace*{2ex}

{\it Estimate of $U_2$}
\vspace{1ex}

Here we apply first $L^2$ estimates and then combine them with Sobolev's
estimates. It is in this part that we need the information on $F_2$ and $U_2$
in $[t_0-1,t_0+1)\times \hat \Omega$ in order to bound the various norms on
$(t_0,t_0+1)\times \hat \Omega$. We begin by applying once again \cite[Theorem
C.1]{alhe14} (combining $(C.1)$ and $(C.2)$ there) to obtain
 $$
 \begin{array}{l}
 \| U_2\|_{ L^2(t_0,t_0+1, H^2)} + \| U_2 \|_{L^\infty (t_0,t_0+1,H^1)}\\ \qquad
  \leq C \left(  \| F_2\|_{L^2 ((t_0-1,t_0 +1)\times \hat \Omega)} + \| U_2(t_0-1,\cdot)\|_{L^2(\hat \Omega)} \right)\,,
  \end{array}
  $$
  where $F_2$ and $U_2(t_0-1,\cdot)$ given in \eqref{eq:117ba} and
  \eqref{eq:117b}.  Applying Gagliardo-Nirenberg's inequality yields,
  for $ 2 <p <4$,
  \begin{equation}\label{eq:89b}
   \| U_2\|_{ L^p(t_0,t_0+1,W^{1,p})} \leq C \left(  \| F_2\|_{L^2
       ((t_0-1,t_0 +1)\times \hat \Omega)} + \| U_2(t_0-1,\cdot)\|_{L^2(\hat \Omega)}
   \right)\,,
  \end{equation}
By \eqref{eq:70aa} we have that
\begin{displaymath}
  \|  F_2\|_{L^2((t_0-1,t_0 +1)\times \hat \Omega)} \leq C \,.
\end{displaymath}
Furthermore, using \eqref{eq:72}, with Remark \ref{usefulrem} in mind
yields
\begin{displaymath}
  \| U_2(t_0-1,\cdot)\|_2 \leq C \,.
\end{displaymath}
Consequently, by \eqref{eq:89b}, there exist, for any $2 <p <4$ and
any $\delta >0\,,$ constants $C(\delta)$ and $\kappa(\delta)$ such that for any $\kappa
\geq \kappa_0(\delta)$ and any $t_0 >1$ we have
   \begin{equation}
\label{eq:89c}
   \| U_2\|_{ L^p(t_0,t_0+1,W^{1,p})} \leq C (\delta)\,.
  \end{equation}
 The combination of \eqref{eq:79a} and \eqref{eq:89c}  together with
 \eqref{eq:117c} completes the proof of the proposition.
\end{proof}

We can now establish the exponential decay of $\psi_\kappa$.
\begin{proposition}
  \label{asymp-exp-2}
  Let $\omega_{\delta,j}$ ($j\in\{1,2\}$) be given by \eqref{eq:56}.  Suppose
  that for some $k\in\{1,2\}$ \eqref{eq:75} is satisfied.  Then, there
  exist $C>0$ and $\delta_0>0$, and, for any  $0<\delta<\delta_0$,
  $\kappa_0(\delta)$, such that for any  $\k\geq \k_0(\delta)$ we have
\begin{equation}
\label{eq:99}
\limsup_{t\to\infty}\int_{\omega_{\delta,j}} \exp \Big(\delta^{1/2}
 \kappa d(x,
\Gamma_{\delta,j})\Big) |\psi_\kappa|^2(t,x) \,dx \leq C  \,.
\end{equation}
\end{proposition}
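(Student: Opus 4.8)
The plan is to repeat the parabolic energy argument of Proposition~\ref{lem:asymp-exp}, but with the exponentially weighted cutoff $\check\zeta=\eta_\delta\,\eta\,\check\chi$ of \eqref{check1}--\eqref{check2} in place of the plain $\widehat\zeta=\eta_\delta\,\eta$ of \eqref{eq:77}. First, exactly as there, I would reduce to the case $h_j>0$ by the transformation $(\psi_\kappa,A_\kappa,\phi_\kappa)\mapsto(\bar\psi_\kappa,-A_\kappa,-\phi_\kappa)$, invoke Remark~\ref{usefulrem} to assume \eqref{eq:72} holds for all $t\ge0$, and note that \eqref{eq:75} makes $\omega_{\delta,j}$ (and $\omega_{\delta/3,j}$) non-empty for $\delta$ small. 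Setting $y(t)=\|\check\zeta\psi_\kappa(t,\cdot)\|_2^2$ and testing (\ref{eq:67}a) against $\check\zeta^{\,2}\bar\psi_\kappa$, the real part gives the localization identity $\tfrac12\,y'(t)+\|\nabla_{\kappa A_\kappa}(\check\zeta\psi_\kappa)(t,\cdot)\|_2^2\le\kappa^2y(t)+\|\psi_\kappa(t,\cdot)\,\nabla\check\zeta\|_2^2$. Since $\check\chi$ is non-constant only inside $\omega_{\delta,j}$, while $\nabla\widehat\zeta$ is supported off $\omega_{\delta,j}$ (there $\check\chi\equiv1$ and $\nabla\check\chi=0$ a.e.), the cross term $\langle\psi_\kappa\nabla\check\chi,\psi_\kappa\nabla\widehat\zeta\rangle$ vanishes and, since $|\nabla\check\chi|\le\tfrac12\delta^{1/2}\kappa\,\check\chi$ a.e. and $|\psi_\kappa|\le1$, we get $\|\psi_\kappa\nabla\check\zeta\|_2^2\le C\delta^{-2}+\tfrac14\delta\kappa^2y$. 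Crucially, because $\omega_{\delta,j}$ stays at distance $\delta$ from $\partial\Omega_i$, the function $\check\zeta\psi_\kappa$ vanishes on \emph{all} of $\partial\Omega$ (the same simplification that distinguishes the steady-state proposition yielding \eqref{eq:58} from the one yielding \eqref{eq:50}); hence \cite[Theorem~2.9]{AHS} applies directly to $\check\zeta\psi_\kappa$ with magnetic potential $\kappa A_\kappa$, with no corner input, and together with $B_n\ge1+\delta/2$ on $\mathrm{supp}\,\check\zeta\subset S_{\delta/2,j}$ it yields $\|\nabla_{\kappa A_\kappa}(\check\zeta\psi_\kappa)\|_2^2\ge\kappa^2(1+\delta/2)y-\kappa\int_\Omega|B_{1,\kappa}|\,|\check\zeta\psi_\kappa|^2$. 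Inserting both bounds produces
\[
\tfrac12\,y'(t)+\tfrac{\delta}{4}\kappa^2\,y(t)\le\frac{C}{\delta^2}+\kappa\int_\Omega|B_{1,\kappa}(t,\cdot)|\,|\check\zeta\psi_\kappa(t,\cdot)|^2\,dx .
\]

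The new point — and the reason the three preceding propositions, culminating in \eqref{eq:93}, were proved — is the last term. The weight $\check\chi$ is of size $\exp(c\,\delta^{1/2}\kappa\,\mathrm{dist})$, so the cheap estimate $\|\check\zeta\psi_\kappa\|_4^2\le\|\check\zeta\psi_\kappa\|_2$ used in Proposition~\ref{lem:asymp-exp} (which relied on $\|\psi_\kappa\|_\infty\le1$) is no longer available. Instead I would leave \emph{all} the weight on $\psi_\kappa$: since $\mathrm{supp}\,\check\zeta$ is a compact subset of $\omega_{\delta/3,j}$ and $\widehat\zeta\le1$, we have $\int_\Omega|B_{1,\kappa}|\,|\check\zeta\psi_\kappa|^2\le\|B_{1,\kappa}(t,\cdot)\|_{L^\infty(\omega_{\delta/3,j})}\,y(t)$, and \eqref{eq:93} applied with the parameter $\delta/3$, combined with the two-dimensional Sobolev embedding $W^{1,p}(\omega_{\delta/3,j})\hookrightarrow L^\infty$ ($2<p\le12/5$), gives $\|B_{1,\kappa}\|_{L^p(t_0,t_0+1;L^\infty(\omega_{\delta/3,j}))}\le C(\delta)$. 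Writing $h(t)=\kappa\|B_{1,\kappa}(t,\cdot)\|_{L^\infty(\omega_{\delta/3,j})}$, so that $\int_n^{n+1}h\le\kappa C(\delta)$ for every $n$ while $h$ is \emph{not} bounded uniformly in $t$, the inequality becomes $y'+\tfrac{\delta}{2}\kappa^2\,y\le 2C\delta^{-2}+2h(t)\,y(t)$.

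I would then integrate this by Duhamel's formula — emphatically \emph{not} by Gr\"onwall's inequality, which would cost a fatal factor $e^{O(\kappa)}$ — along the lines of Steps~1--2 of Proposition~\ref{lem:asymp-exp}: in $y(t)\le y(0)e^{-\frac{\delta}{2}\kappa^2t}+\int_0^t e^{-\frac{\delta}{2}\kappa^2(t-\tau)}\big(2C\delta^{-2}+2h(\tau)y(\tau)\big)\,d\tau$ one splits the $\tau$-integral at $\tau=t-1$; on $(t-1,t)$ one uses H\"older in time together with $\|e^{-\frac{\delta}{2}\kappa^2(t-\cdot)}\|_{L^{p'}(t-1,t)}\le C(\delta,p)\kappa^{-2/p'}$, where $2/p'>1$ since $2<p\le12/5$, so that $\int_{t-1}^t e^{-\frac{\delta}{2}\kappa^2(t-\tau)}h(\tau)y(\tau)\,d\tau\le C(\delta)\kappa^{1-2/p'}\|y\|_{L^\infty(t-1,t)}$; the $(0,t-1)$ part is absorbed by the factor $e^{-\delta\kappa^2/2}$ against the finite (though $\kappa$-dependent) a priori bound $y\le\|\check\zeta\|_\infty^2|\Omega|$, the forcing integrates to $O(\delta^{-3}\kappa^{-2})$, and the $y(0)e^{-\frac{\delta}{2}\kappa^2t}$ term vanishes as $t\to\infty$. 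With $a_n:=\|\check\zeta\psi_\kappa\|_{L^\infty(n-1,n;L^2(\Omega))}^2$ this yields a recursion $a_n\le C_\delta\kappa^{-2}+C_\delta\kappa^{1-2/p'}(a_n+a_{n-1})+\varepsilon_n$ with $\varepsilon_n\to0$, whence, for $\kappa\ge\kappa_0(\delta)$ making the coupling coefficient $<\tfrac14$, $\limsup_{n\to\infty}a_n\le C_\delta\kappa^{-2}$. Since $\widehat\zeta\equiv1$ and $\check\chi^{\,2}=\exp(\delta^{1/2}\kappa\,d(x,\Gamma_{\delta,j}))$ on $\omega_{\delta,j}$, one has $y(t)\ge\int_{\omega_{\delta,j}}\exp(\delta^{1/2}\kappa\,d(x,\Gamma_{\delta,j}))|\psi_\kappa(t,\cdot)|^2\,dx$, and \eqref{eq:99} follows, the bound $C_\delta\kappa^{-2}\le C_\delta\kappa_0(\delta)^{-2}$ being made $\le C$ by enlarging $\kappa_0(\delta)$.

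The hard part is precisely this interaction between the exponentially large weight $\check\chi$ and the magnetic-field correction $B_{1,\kappa}$: the cheap localization bounds that sufficed for the plain cutoff break down, and one must instead feed in the improved interior regularity \eqref{eq:93} for $B_{1,\kappa}$ on $\omega_{\delta/3,j}$ — itself obtained by bootstrapping \eqref{eq:76} through the parabolic bounds \eqref{eq:87}, \eqref{eq:89a} for $\nabla_{\kappa A_\kappa}(\widehat\zeta\psi_\kappa)$ and the divergence-form splitting of \eqref{eq:95} — and combine it with the gain $\kappa^{-2/p'}$, $2/p'>1$, extracted from the decaying exponential of the heat semigroup, so that $\kappa\,\|B_{1,\kappa}\|_{L^p_tL^\infty_x}$ becomes a genuinely small perturbation of the spectral gap $\tfrac{\delta}{4}\kappa^2$.
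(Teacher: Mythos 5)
Your proposal is correct and follows essentially the same route as the paper's proof: the weighted cutoff $\check\zeta=\eta_\delta\,\eta\,\check\chi$ tested against (\ref{eq:67}a), the lower bound from \cite[Theorem 2.9]{AHS} with $B_\kappa=\kappa B_n+B_{1,\kappa}$ and $B_n\ge 1+\delta/2$ on the support, Duhamel with the time integral split at $t-1$, the gain $\kappa^{-2(p-1)/p}$ from H\"older against the heat decay combined with \eqref{eq:93} and the embedding $W^{1,p}\hookrightarrow L^\infty$, a crude $e^{C\kappa}e^{-c\delta\kappa^2}$ bound for the far past, and the discrete recursion in unit time intervals. The only (harmless) deviations are cosmetic: you bound the cutoff-gradient terms by $C\delta^{-2}$ via $\|\psi_\kappa\|_\infty\le1$ where the paper invokes \eqref{eq:76} to get $C\kappa^{-2}$, and you localize $B_{1,\kappa}$ on $\omega_{\delta/3,j}$ rather than $\omega_{\delta/2,j}$.
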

\begin{proof} 
  Without loss of generality we may, as before,  assume $h_j>0$.
Let $\check{\chi}$ and $\check{\zeta}$ be defined by \eqref{check1} and
\eqref{check2}.
\vspace{1ex}
 
{\it Step 1: Prove that
\begin{equation}
\label{eq:101}
  \|\check{\zeta}\psi_\kappa(t,\cdot)\|_2^2 \leq \|\check{\zeta}\psi_0\|_2^2e^{-2\gamma\kappa^2t} +
  \frac{C(\delta)}{\kappa^4}+ \int_0^t
  e^{-2\gamma\kappa^2(t-\tau)}\big|\langle\kappa B_{1,\kappa}\check{\zeta}\psi_\kappa,\check{\zeta}\psi_\kappa\rangle(\tau)\big|\,d\tau \,.
\end{equation}}
\vspace{1ex}
      
Multiplying
(\ref{eq:67}a) by $\check{\zeta}^2 \bar{\psi}$ and integrating by parts yields
\begin{displaymath}
  \frac{1}{2}\frac{d}{dt}\left( \|\check{\zeta}\psi_\kappa(t,\cdot) \|_2^2 \right)  +  \|\nabla_{\kappa A_\kappa(t,\cdot)}(\check{\zeta}\psi_\kappa(t,\cdot))\|_2^2 \leq
  \kappa^2\|\check{\zeta}\psi_\kappa(t,\cdot)\|_2^2 + \|\psi_\kappa (t,\cdot) \nabla\check{\zeta}\|_2^2 \,.
\end{displaymath}
By Theorem 2.9 in \cite{AHS} we have
$$\aligned
  \|\nabla_{\kappa A_\kappa}(\check{\zeta}\psi_\kappa)\|_2^2
&  \geq \langle\kappa B_\kappa(t,\cdot) \check{\zeta}\psi_\kappa(t,\cdot),\check{\zeta}\psi_\kappa (t,\cdot)\rangle\\  & =
  \kappa^2\langle B_n\check{\zeta}\psi_\kappa,\check{\zeta}\psi_\kappa\rangle + \langle\kappa B_{1,\kappa}\check{\zeta}\psi_\kappa,\check{\zeta}\psi_\kappa\rangle  \\
&\geq
  \kappa^2\Big(1+\frac{\delta}{2}\Big)\|\check{\zeta}\psi_\kappa\|_2^2+ \langle\kappa B_{1,\kappa}\check{\zeta}\psi_\kappa,\check{\zeta}\psi_\kappa\rangle \,.
\endaligned
$$
We can thus write
\begin{equation}\label{eq3.32}
\aligned
&\frac{1}{2}\frac{d}{dt}\left( \|\check{\zeta}\psi_\kappa(t,\cdot)\|_2^2\right)  +
\kappa^2\Big( \frac{\delta}{2}- \frac{\delta}{4}
\Big)\|\check{\zeta}\psi_\kappa(t,\cdot)\|_2^2\\
\leq&
  \|\psi_\kappa (t,\cdot) \nabla\eta\|_2^2 + \|\tilde \chi\,\psi_\kappa (t,\cdot)\nabla\eta_\delta\|_2^2- \langle\kappa B_{1,\kappa}(t,\cdot)\check{\zeta}\psi_\kappa(t,\cdot),\check{\zeta}\psi_\kappa(t,\cdot)\rangle \,.
\endaligned
\end{equation}
By \eqref{eq:76}, for every $0<\delta\leq\delta_0$, we have
\begin{displaymath}
  \|\psi_\kappa (t,\cdot) \nabla\eta\|_2^2 + \|\tilde \chi\, \psi_\kappa (t,\cdot) \,\nabla\eta_\delta\|_2^2 \leq \frac{C}{\kappa^2} \,,
\end{displaymath}
which when substituted into \eqref{eq3.32} yields
\begin{equation}
 \label{eq:100}
   \frac{1}{2}\frac{d}{dt}\left( \|\check{\zeta}\psi_\kappa (t,\cdot)\|_2^2\right)  +   \kappa^2\gamma\|\check{\zeta}\psi_\kappa (t,\cdot) \|_2^2
   \leq  \frac{C}{\kappa^2} - \langle\kappa B_{1,\kappa}(t,\cdot) \check{\zeta}\psi_\kappa(t,\cdot) ,\check{\zeta}\psi_\kappa (t,\cdot)\rangle \,,
\end{equation}
where $\gamma=\delta/4$. We now get \eqref{eq:101} from \eqref{eq:100}. 
\vspace{1ex}

{\it Step 2: Prove that,  for all $n\geq2$,
\begin{equation}
\label{eq3.33}
\aligned
&\|\check{\zeta}\psi_\kappa(\tau, \cdot )\|_{L^\infty(t^*+n-1,t^*+n,L^2(\Omega))}^2 \\
\leq& C \kappa^{-(p-2)/p}\|\check{\zeta}\psi_\kappa(\tau, \cdot )\|_{L^\infty(t^*+n-2,t^*+n,L^2(\Omega))}^2  +{C\over\k^4} \,,
\endaligned
\end{equation}
where $C$ is independent of $\kappa\,$.}
\vspace{1ex}

To prove \eqref{eq3.33} we need to estimate the last term on the
right-hand-side of \eqref{eq:101}. To this end we first write
\begin{multline*}
  \int_0^t  e^{-2\gamma\kappa^2(t-\tau)}\big|\langle\kappa
  B_{1,\kappa}\check{\zeta}\psi_\kappa,\check{\zeta}\psi_\kappa\rangle (\tau) \big|\,d\tau  \\
\leq \kappa \left(\int_{t-1}^t  e^{-2\gamma\kappa^2(t-\tau)}
{  \|B_{1,\kappa}(\tau, \cdot
  )\|_{L^\infty(\omega_{\delta/2,j})}}\,d\tau\right)\,\cdot\|\check{\zeta}\psi_\kappa\|_{L^\infty(t-1,t;L^2(\Omega))}\\
+ \kappa\int_0^{t-1}
e^{-2\gamma\kappa^2(t-\tau)}\|B_{1,\kappa}(\tau,\cdot)\|_1\,d\tau
\|\check{\zeta}\psi_\kappa\|_{L^\infty(0,t-1;L^2(\Omega))}\,.
\end{multline*}
For the last term on the right-hand-side we have in view of Remark~\ref{usefulrem},  \eqref{check1}, \eqref{check2}, and \eqref{eq:69} that
\begin{displaymath}
  \kappa\int_0^{t-1}
e^{-2\gamma\kappa^2(t-\tau)}\|B_{1,\kappa}(\tau,\cdot)\|_1\,d\tau
\|\check{\zeta}\psi_\kappa\|_{L^\infty(0,t-1;L^2(\Omega))}\leq Ce^{-2\gamma\kappa^2}e^{C\kappa}\,.
\end{displaymath}
Hence for sufficiently large $\kappa$ we have
\begin{multline}
\label{eq:102}
  \int_0^t  e^{-2\gamma\kappa^2(t-\tau)}\big|\langle\kappa
  B_{1,\kappa}\check{\zeta}\psi_\kappa,\check{\zeta}\psi_\kappa\rangle (\tau) \big|\,d\tau \leq \\
\kappa \left(\int_{t-1}^t  e^{-2\gamma\kappa^2(t-\tau)}  \|B_{1,\kappa}(\tau, \cdot
  )\|_{L^\infty(\omega_{\delta/2,j})}\,d\tau\right)\,\cdot\|\check{\zeta}\psi_\kappa\|_{L^\infty(t-1,t;L^2(\Omega))}+ Ce^{-\gamma\kappa^2}\,.
\end{multline}
Since by Sobolev embeddings
\begin{displaymath}
  \|B_{1,\kappa}\|_{L^p(t-1,t,L^\infty(\omega_{\delta/2,j}))} \leq C\|B_{1,\kappa}\|_{L^p(t-1,t,W^{1,p}(\omega_{\delta/2,j}))} \,,
\end{displaymath}
we can use \eqref{eq:93}  to obtain, for
sufficiently large $t$ and $\kappa$, and for any $2<p\leq12/5\,$,
$$ \aligned
& \int_{t-1}^t  e^{-2\gamma\kappa^2(t-\tau)}\|B_{1,\kappa}(\tau, \cdot)\|_{L^\infty(\omega_{\delta/2,j})}\,d\tau \\
\leq&
 \|B_{1,\kappa}(\tau, \cdot)\|_{L^p(t-1,t;L^\infty(\omega_{\delta/2,j}))}\,\Big[\int_{t-1}^t
 e^{-\frac{2p}{p-1}\gamma\kappa^2(t-\tau)}\,d\tau\Big]^{\frac{p-1}{p}} \\
\leq&
 C\kappa^{-2(p-1)/p}\,.
\endaligned
$$
Substituting the above into \eqref{eq:102} yields
$$\aligned
&\int_0^t  e^{-2\gamma\kappa^2(t-\tau)}\big|\langle\kappa B_{1,\kappa}\check{\zeta}\psi_\kappa,\check{\zeta}\psi_\kappa\rangle\big|\,d\tau \\ 
 \leq& C\kappa^{-(p-2)/p}\|\check{\zeta}\psi_\kappa\|_{L^\infty(t-1,t;L^2(\Omega))} + Ce^{- \frac 14 \delta \kappa^2 } \,,
\endaligned
$$
which, when substituted into \eqref{eq:101},  yields \eqref{eq3.33}. 
\\

{\it Step 3: Prove \eqref{eq:99}.}
\\

Let now
\begin{displaymath}
  b_n = \|\check{\zeta}\psi_\kappa(\tau, \cdot )\|_{L^\infty(t^*+n-1,t^*+n,L^2(\Omega))}^2 \,.
\end{displaymath}
From \eqref{eq3.33} we get that if $p=12/5$ then for sufficiently large
$\kappa$  it holds that
\begin{displaymath}
b_n \leq C \left( \kappa^{-1/6}(b_{n-1} + b_n)+  1\right)\,,
\end{displaymath}
where $C$ is independent of $\kappa$.
For another  constant $\widehat C$, we get for sufficiently large $\kappa$,
\begin{displaymath}
b_n \leq \widehat C (\kappa^{-1/6} b_{n-1} +  1)\,.
\end{displaymath}
This immediately implies, for $\kappa$ large enough so that $\widehat C
\kappa^{-\frac 16} \leq \frac 12$, the upperbound
\begin{displaymath}
  \limsup_{n\to\infty}b_n \leq C_0\,,
\end{displaymath}
where $C_0$ is independent of $\kappa$. Consequently,
$$
\limsup_{t\to\infty}\int_{\omega_{\delta,j}} \exp\Big(\delta^\frac 12 \kappa d(x,
\partial\omega_{\delta,j})\Big) |\psi_\kappa(t,x)|^2 \,dx \leq C_0\,,
$$
which readily yields \eqref{eq:99}.
\end{proof}
\newpage

\section{Large domains}
\label{sec:5}
The main goal of this section is to prove Proposition
\ref{largedomain}. To this end it is more convenient to consider \eqref{eq:18}
in a fixed domain. Assuming that $0\in\O$ we set $\epsilon=1/R$ (hence we have $\epsilon \ll 1$)
and apply the transformation,
\begin{equation}
\label{eq:103}
\psi_\epsilon(\epsilon x) = \psi(x)   \, , \qquad A(x) = \epsilon^{-1}A_\epsilon(\epsilon x)\,, \qquad
  \phi(x) = \phi_\epsilon(\epsilon x) \,,
\end{equation}
If we write $y=\epsilon x$, we have:
  \begin{displaymath}
     \curl_x^2A  =  \epsilon\,\curl_y^2A_\epsilon \quad ; \quad
     \nabla_x\phi = \epsilon\,\nabla_y\phi_\epsilon  \quad ;
     \quad  \nabla_A\psi  =  \epsilon\,\nabla_{\epsilon^{-2}A_\epsilon}\psi_\epsilon \,,
  \end{displaymath}
  leading to the following system for $(\psi_\epsilon, A_\epsilon, \phi_\epsilon)$
\begin{subequations}
\label{eq:104}
\begin{alignat}{2}
& \Delta_{\epsilon^{-2}A_\epsilon}\psi_\epsilon + \frac{\psi_\epsilon}{\epsilon^2}   \big( 1 - |\psi_\epsilon|^{2}
\big)-\frac{i}{\epsilon^2}\phi_\epsilon\psi_\epsilon =0 & \quad \text{ in } \Omega\, ,\\
 & \curl^2A_\epsilon +  \nabla\phi_\epsilon  =  \Im\big(\bar\psi_\epsilon
 \nabla_{\epsilon^{-2}A_\epsilon}\psi_\epsilon\big)  & \quad \text{ in }  \Omega\,,\\
  &\psi_\epsilon=0 &\quad \text{ on }  \partial\Omega_c \,, \\
 & \nabla_{\epsilon^{-2}A_\epsilon}\psi_\epsilon\cdot\nu=0 & \quad \text{ on }  \partial\Omega_i \,,\\
 & \frac{\partial\phi_\epsilon}{\partial\nu}= f(\epsilon)J  &\quad \text{ on } \partial\Omega \,,\\
& \frac{\partial\phi_\epsilon}{\partial\nu}= 0 &\quad \text{ on } \partial\Omega_i\,,\\[1.2ex]
&\dashint_{\partial\Omega}\curl A_\epsilon(x)\,ds = f(\epsilon)h_{ex}\,.
\end{alignat}
\end{subequations}
In the above
$$
f(\epsilon)=F(1/\epsilon)= \epsilon^{-\alpha}\,.
$$
It follows from \eqref{eq:6} that
\begin{displaymath}
  h_j = b_j\epsilon^{-\alpha}\,, \quad j=1,2\,,
\end{displaymath}
where $b_j$ is independent of $\epsilon$ for $j=1,2\,$.\\

We assume that $A_\epsilon$ is in the Coulomb gauge space \eqref{eq:5}, and
suppose that a weak solution $(\psi_\epsilon,A_\epsilon,\phi_\epsilon)\in H^1(\Omega,\C)\times
H^1(\Omega,\R^2)\times L^2(\Omega)$ exists.  Proposition \ref{largedomain} can now be
reformulated in the following way:

\begin{proposition}
\label{semiclassical}
 Let $(\psi_\epsilon,A_\epsilon,\phi_\epsilon)$ denote a solution of \eqref{eq:104}, and let
$h$ be given by \eqref{eq:9}.
  Suppose that for some $0<\gamma<1$ and $\epsilon_0>0$  we have
  \begin{displaymath}
    \epsilon^{-\gamma}<h\,, \quad \forall\, 0<\epsilon<\epsilon_0\,.
  \end{displaymath}
Then,  there exists a compact  set $K\subset\Omega$, $C>0$, and $\alpha >0$,
such that for any  $0\,<\,\epsilon\,<\epsilon_0\,$ we have
\begin{equation}
\label{eq:105}
\int_{K} |\psi_\epsilon(x)|^2 \,dx \leq C  e^{-\alpha/\epsilon} \,.
\end{equation}
\end{proposition}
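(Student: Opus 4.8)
The plan is to mimic the steady-state argument of Section~\ref{sec:3}, but with the semiclassical parameter $\epsilon$ (after the rescaling \eqref{eq:103}) playing the role of $1/\kappa$, and to track how the extra current amplification $f(\epsilon)=\epsilon^{-\alpha}$ enters the estimates. The magnetic field of the rescaled problem on the boundary components is $h_j=b_j\epsilon^{-\alpha}$, which still satisfies $|h_j|>1$ (indeed $\to\infty$) for $\epsilon$ small by hypothesis, so the region $\{|B_n|>1+\delta\}$ is nonempty and bounded away from the bad set $B_n^{-1}(0)$. First I would establish the \emph{a priori} bounds. Taking the real part of the inner product of \eqref{eq:104}a with $\psi_\epsilon$ gives $\|\nabla_{\epsilon^{-2}A_\epsilon}\psi_\epsilon\|_2^2=\epsilon^{-2}\|\psi_\epsilon(1-|\psi_\epsilon|^2)\|\cdots$, hence $\|\nabla_{\epsilon^{-2}A_\epsilon}\psi_\epsilon\|_2\le C\epsilon^{-1}$ together with $\|\psi_\epsilon\|_\infty<1$ from the maximum principle. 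Then, writing $A_{1,\epsilon}=A_\epsilon-\epsilon^{-\alpha}\epsilon^2A_n$ (so that $A_\epsilon=A_{1,\epsilon}+\epsilon^{2-\alpha}A_n$, matching the scaling $\nabla_{\epsilon^{-2}A_\epsilon}$), and $\phi_{1,\epsilon}=\phi_\epsilon-\epsilon^{-\alpha}\epsilon^2\phi_n$, I would take the divergence of \eqref{eq:104}b and run the same $L^2$ and then $L^p$ ($2<p\le 3$) elliptic estimates as in the proof of Lemma~\ref{lemma3.1} (invoking \cite[Theorem 7.1]{gima12} and Proposition~B.3 of \cite{alhe14}) to get $\|B_{1,\epsilon}\|_\infty\le C\epsilon^{-\beta}$ for some $\beta=\beta(p)<1+\alpha$ controllable; the point is that $B_{1,\epsilon}$ is negligible compared with $\curl(\epsilon^{2-\alpha}A_n)$ on the relevant scale, so that $|\curl A_\epsilon|>(1+\delta/2)\cdot(\text{correct scaling factor})$ on the dilated region $S_\delta$.

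The heart of the argument is then the semiclassical Agmon estimate, exactly parallel to the Proposition proving \eqref{eq:50}. I would introduce the weight $\chi=\exp(\alpha_\delta\epsilon^{-1}d(x,\mathcal C_{\delta,j}))$ on $S_{\delta,j}=\{(-1)^jB_n>1+\delta\}$ (intersected with the region away from $\partial\Omega_i$, or not, depending on whether the component with $1<|h_j|$ only, or $|h_j|>1/\Theta_0$, is used — here since $|h_j|\to\infty$ we are comfortably in the latter regime for $\epsilon$ small), multiply \eqref{eq:104}a by $\zeta^2\bar\psi_\epsilon$ with $\zeta=\chi\eta$ and a boundary cutoff $\eta_r$, $r=\epsilon^{1/2}$, integrate by parts, and use the lower bound on the magnetic Laplacian from \eqref{eq:22} of the Proposition in Section~\ref{sec:2} (with $\epsilon^{-2}$ there being the relevant semiclassical scale, and the perturbation $a$ absorbing $A_{1,\epsilon}$ after rescaling). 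The IMS/partition-of-unity localization gives, schematically,
\[
\|\nabla_{\epsilon^{-2}A_\epsilon}(\zeta\psi_\epsilon)\|_2^2\ \ge\ \epsilon^{-2}\bigl(1+\tfrac\delta2\bigr)(1-C\epsilon^{1/3})\|\zeta\psi_\epsilon\|_2^2 ,
\]
while the right-hand side of the Agmon identity contributes $\epsilon^{-2}(1+\alpha_\delta^2+C\epsilon^{1/2})\|\zeta\psi_\epsilon\|_2^2+\|\psi_\epsilon\nabla\eta\|_2^2$; choosing $\alpha_\delta^2=\delta/4$ yields $\epsilon^{-2}\tfrac{\delta}{8}\|\zeta\psi_\epsilon\|_2^2\le\|\psi_\epsilon\nabla\eta\|_2^2\le C\delta^{-2}$, hence $\int_{S_{\delta,j}}e^{\alpha_\delta\epsilon^{-1}d(x,\mathcal C_{\delta,j})}|\psi_\epsilon|^2\le C\delta^{-3}\epsilon^2$. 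Picking $K$ to be a fixed compact subset of $\{(-1)^jB_n>1+\delta\}$ (in the original, unrescaled domain this is $K$; under $x\mapsto Rx$ it becomes $K_R$, and the Agmon weight $e^{\alpha_\delta\epsilon^{-1}d}=e^{\alpha_\delta R\,d}$ dominates on $K_R$) gives \eqref{eq:105} with $\alpha=\alpha_\delta\cdot\operatorname{dist}(K,\mathcal C_{\delta,j})>0$.

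The main obstacle I anticipate is bookkeeping the powers of $\epsilon$ so that the error term $\|B_{1,\epsilon}\|_\infty$ (coming from the nonlinear self-field generated by $\psi_\epsilon$) is genuinely subleading against the imposed field $\epsilon^{2-\alpha}\curl A_n$ that produces the $(1+\delta)$-sized magnetic field on the scale $\epsilon^{-2}$; this requires $0<\alpha<1$ (which is given) and a careful choice of the exponent $p$ in the $L^p$ elliptic estimate, exactly as in the inequality $2<p<\min(3,\tfrac{6}{2+\alpha})$ used in Lemma~\ref{lemma3.1}. A secondary technical point is that one must verify $h_j=b_j\epsilon^{-\alpha}$ still satisfies $h_1h_2<0$ and that $S_{\delta,j}$ (in the \emph{original} domain $\Omega$, not the stretched one) is nonempty and has positive distance to $B_n^{-1}(0)$, which follows from the fixed geometry of $\Omega$ and the structure of $B_n$ established in \cite[\S2.3]{alhe14}; all the elliptic regularity must again be taken with care at the Dirichlet--Neumann corners, as in Appendices A and B of \cite{alhe14}. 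Because every ingredient — the corner spectral lower bound \eqref{eq:22}, the Agmon weight method, the $L^p$ bootstrap — is already in place, the proof reduces essentially to re-deriving \eqref{eq:50} with $\kappa$ replaced by $\epsilon^{-1}$ and the field strengths replaced by their $\epsilon$-scaled counterparts.
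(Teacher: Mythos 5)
There is a genuine gap, and it sits exactly where you yourself flag the ``main obstacle'': the claim that the induced field $B_{1,\epsilon}$ is subleading with respect to the applied field. In the $\kappa$-problem of Section~\ref{sec:3} the supercurrent enters (\ref{eq:25}b) with the prefactor $1/\kappa$, so $\|\nabla_{\kappa A_\kappa}\psi_\kappa\|_2\leq\kappa\|\psi_\kappa\|_2$ gives a right-hand side that is $O(1)$ in $L^2$, and the bootstrap of Lemma~\ref{lemma3.1} yields $\|B_{1,\kappa}\|_\infty\leq C\kappa^{3(p-2)/p}=o(\kappa)$, i.e.\ $o$ of the applied field $\kappa B_n$. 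After the rescaling \eqref{eq:103} there is no such prefactor: in (\ref{eq:104}b) (equivalently \eqref{eq:109}) the source is the full supercurrent, and the only a priori bound available from (\ref{eq:104}a) is $\|\nabla_{\epsilon^{-2}A_\epsilon}\psi_\epsilon\|_2\leq C\epsilon^{-1}$, whence $\|\nabla B_{1,\epsilon}\|_2\lesssim\epsilon^{-1}$ and, even with the $L^p$ bootstrap, nothing better than $\|B_{1,\epsilon}\|_\infty\lesssim\epsilon^{-1}$ (up to losses). Since the applied field is only $\epsilon^{-\alpha}B_n$ with $\alpha<1$, the self-field is \emph{not} dominated by the imposed one on compact subsets of $\{|B_n|>1+\delta\}$, and your key lower bound $|\curl A_\epsilon|\geq(1+\delta/2)\cdot(\hbox{scaling factor})$ on (the image of) $S_\delta$ does not follow. (Your bookkeeping also slips here: with $A_\epsilon=A_{1,\epsilon}+\epsilon^{2-\alpha}A_n$ the quantity $\curl(\epsilon^{2-\alpha}A_n)=\epsilon^{2-\alpha}B_n$ is small, so ``$B_{1,\epsilon}$ negligible compared with it'' is not what is needed, and in any case is unreachable.) Had this transcription of \eqref{eq:50} worked, it would prove decay on every fixed compact subset of $\{|B_n|>1+\delta\}$, i.e.\ essentially Conjecture~\ref{conj}, which the paper explicitly leaves open; that is a strong indication the step cannot be repaired by bookkeeping alone.

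The paper's actual proof of Proposition~\ref{semiclassical} avoids any interior comparison of $B_\epsilon$ with $\epsilon^{-\alpha}B_n$. It exploits only the boundary information $B_{1,\epsilon}|_{\partial\Omega}=0$ (see \eqref{eq:110}), and controls $B_\epsilon$ through the auxiliary function $w_\epsilon$ of \eqref{eq:111}: a maximum-principle argument (as in \cite{al02}) gives $\|B_\epsilon-1-w_\epsilon\|_\infty\leq\frac12$ \eqref{eq:112}, elliptic estimates give $\|\nabla w_\epsilon\|_\infty\leq C\epsilon^{-1-\alpha}$ \eqref{eq:117}, and the Agmon estimate \eqref{eq:121} is carried out relative to the $\epsilon$-dependent set $\Dg_\delta(\epsilon)=\{|B_\epsilon|<\delta\epsilon^{-\alpha}\}$, where the effective field $\epsilon^{-2}|B_\epsilon|\geq\delta\epsilon^{-2-\alpha}$ beats the energy scale $\epsilon^{-2}$ outright (only the $\Theta_0 b'$ part of \eqref{eq:22} is needed, no threshold $1$ or $1/\Theta_0$). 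Finally the $L^1$--$L^2$ argument using $\nabla B_\epsilon=-\nabla\phi_\epsilon+O(e^{-1/\epsilon})$ in $\Omega\setminus\Dg_\delta(\epsilon)$ and $\|\nabla\phi_\epsilon\|_2\leq C\epsilon^{-\alpha}$ \eqref{eq:113} upgrades $d(\Dg_\delta(\epsilon),\partial\Omega_{i,j})\geq C_\delta\epsilon$ to the $\epsilon$-independent bound \eqref{eq:123}, which is what produces a \emph{fixed} compact $K$ (near $\partial\Omega_{i,j}$, not localized by $B_n$) on which \eqref{eq:105} holds. If you want to salvage your plan, you would need a genuinely new estimate showing $\|B_{1,\epsilon}\|_{L^\infty(K)}=o(\epsilon^{-\alpha})$ on compacts; nothing in Sections~\ref{sec:2}--\ref{sec:3} provides it.
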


We split the proof of Proposition \ref{semiclassical} into several
steps, to each of them we dedicate a separate lemma. We begin by observing,
as in Section \ref{sec:3}, that
\begin{equation}
  \label{eq:106}
\|\psi_\epsilon\|_\infty\leq 1\,.
\end{equation}
Let
\begin{equation}
\label{eq:107}
  A_{1,\epsilon}=A_\epsilon-\epsilon^{-\alpha}A_n,\q
\phi_{1,\epsilon}=\phi_\epsilon-\epsilon^{-\alpha}\phi_n,
\end{equation}
Set further
\begin{equation}
\label{eq:108}
  B_{1,\epsilon}= \curl A_{1,\epsilon}\,;  \quad B_\epsilon= \curl A_\epsilon  \,.
\end{equation}
By (\ref{eq:104}b) and (\ref{eq:11}a),   we then have
\begin{subequations}
\label{eq:109}
\begin{empheq}[left={\empheqlbrace}]{alignat=2}
  &\nabla_\perp B_{1,\epsilon} +  \nabla\phi_{1,\epsilon}  =  \Im(\bar\psi_\epsilon\, \nabla_{\epsilon^{-2}A_\epsilon}\psi_\epsilon)  &  \text{ in }
  \Omega\,,\\
   &\frac{\partial\phi_{1,\epsilon}}{\partial\nu}= 0 & \text{ on } \partial\Omega \,, \\
&\dashint_{\partial\Omega}B_{1,\epsilon}(x)\,ds = 0 \,. &
\end{empheq}
\end{subequations}
Note that since $\partial B_{1,\epsilon}/\partial\tau=\partial\phi_{1,\epsilon}/\partial\nu=0$ on $\partial\Omega$
we must have by (\ref{eq:109}c) that
\begin{equation}
\label{eq:110}
  B_{1,\epsilon}|_{\partial\Omega}\equiv 0\,.
\end{equation}

 We begin with the following auxiliary estimate.
\begin{lemma}
  \label{lem:magnetic-bound}
Let $w_\epsilon$ denote the solution of
\begin{equation}\label{eq:111}
 \left\{\aligned
&    \Delta w_\epsilon - \frac{1}{\epsilon^2}|\psi_\epsilon|^2w_\epsilon = 0\q & \text{\rm in }\Omega\;, \\
&    w_\epsilon = B_\epsilon - 1 \q& \text{\rm on } \partial\Omega.
\endaligned
\right.
\end{equation}
Under the assumptions on $J$ and $\Omega$ in (\ref{eq:3})-(\ref{hyptopolo}) we have
\begin{equation}
 \label{eq:112}
\|B_\epsilon-1-w_\epsilon\|_\infty \leq  \frac{1}{2} \,.
\end{equation}
Furthermore, we have
\begin{equation}
\label{eq:113}
\|\nabla\phi_\epsilon\|_2+ \|\phi_\epsilon\|_\infty \leq  C\epsilon^{-\alpha}   \,.
\end{equation}
\end{lemma}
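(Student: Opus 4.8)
The plan is to treat the three assertions in turn, writing $\mathcal A_\epsilon:=\epsilon^{-2}A_\epsilon$ for the effective potential felt by $\psi_\epsilon$ (so $\curl\mathcal A_\epsilon=\epsilon^{-2}B_\epsilon$ and the covariant gradient in \eqref{eq:104} is $\nabla_{\mathcal A_\epsilon}\psi_\epsilon$), and taking for granted that bootstrapping elliptic regularity for \eqref{eq:104} as in Section~\ref{sec:3} makes any weak solution $W^{2,p}\cap C^1$ up to the corners, so the pointwise computations below are legitimate. Existence of $w_\epsilon$ is routine: by \eqref{eq:110} the datum is $B_\epsilon-1=\epsilon^{-\alpha}B_n-1$ on $\partial\Omega$, which, since $B_n\in W^{2,p}(\Omega)$, extends to a $W^{2,p}(\Omega)$ function; writing $w_\epsilon$ as this extension plus an $H^1_0(\Omega)$ term and applying Lax--Milgram to $(u,v)\mapsto\int_\Omega\nabla u\cdot\nabla v+\epsilon^{-2}\int_\Omega|\psi_\epsilon|^2uv$ yields a unique $w_\epsilon$, with $w_\epsilon\in C(\overline\Omega)$ by elliptic regularity and $\|w_\epsilon\|_\infty\le\|B_\epsilon-1\|_{L^\infty(\partial\Omega)}$ by the weak maximum principle for $\Delta-\epsilon^{-2}|\psi_\epsilon|^2$. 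Two identities drive the rest. Taking $\curl$ of \eqref{eq:104}b, using $\curl\nabla\phi_\epsilon=0$, $\curl\curl^2A_\epsilon=-\Delta B_\epsilon$, the Ginzburg--Landau identity $\curl\Im(\bar\psi\,\nabla_{\mathcal A}\psi)=2\,\Im\big(\overline{(\nabla_{\mathcal A}\psi)_1}(\nabla_{\mathcal A}\psi)_2\big)-|\psi|^2\curl\mathcal A$, and $\curl\mathcal A_\epsilon=\epsilon^{-2}B_\epsilon$, one gets
\begin{displaymath}
\Delta B_\epsilon-\frac1{\epsilon^2}|\psi_\epsilon|^2B_\epsilon=-2g_\epsilon,\qquad g_\epsilon:=\Im\big(\overline{(\nabla_{\mathcal A_\epsilon}\psi_\epsilon)_1}\,(\nabla_{\mathcal A_\epsilon}\psi_\epsilon)_2\big),
\end{displaymath}
while multiplying \eqref{eq:104}a by $\bar\psi_\epsilon$ and taking the real part gives the Bochner-type identity $|\nabla_{\mathcal A_\epsilon}\psi_\epsilon|^2=\tfrac12\Delta|\psi_\epsilon|^2+\epsilon^{-2}|\psi_\epsilon|^2(1-|\psi_\epsilon|^2)$.

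For the upper half of \eqref{eq:112} set $P:=B_\epsilon-1-w_\epsilon+\tfrac12|\psi_\epsilon|^2$. Combining the two identities with $\Delta w_\epsilon=\epsilon^{-2}|\psi_\epsilon|^2w_\epsilon$, with $(\Delta-\epsilon^{-2}|\psi_\epsilon|^2)(-1)=\epsilon^{-2}|\psi_\epsilon|^2$, and with $|\nabla_{\mathcal A_\epsilon}\psi_\epsilon|^2-2g_\epsilon=\big|(\nabla_{\mathcal A_\epsilon}\psi_\epsilon)_1+i(\nabla_{\mathcal A_\epsilon}\psi_\epsilon)_2\big|^2$, one finds
\begin{displaymath}
\Big(\Delta-\frac1{\epsilon^2}|\psi_\epsilon|^2\Big)P=\big|(\nabla_{\mathcal A_\epsilon}\psi_\epsilon)_1+i(\nabla_{\mathcal A_\epsilon}\psi_\epsilon)_2\big|^2+\frac1{2\epsilon^2}|\psi_\epsilon|^4\ge0 .
\end{displaymath}
Since $w_\epsilon=B_\epsilon-1$ on $\partial\Omega$, there $P=\tfrac12|\psi_\epsilon|^2\le\tfrac12$ (as $\psi_\epsilon=0$ on $\partial\Omega_c$ and $\|\psi_\epsilon\|_\infty\le1$ on $\partial\Omega_i$ by \eqref{eq:106}); the weak maximum principle then gives $P\le\tfrac12$ in $\Omega$, hence $B_\epsilon-1-w_\epsilon=P-\tfrac12|\psi_\epsilon|^2\le\tfrac12$.

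The lower half, $B_\epsilon-1-w_\epsilon\ge-\tfrac12$, is the step I expect to be hardest. The mirror choice $\tfrac12+w_\epsilon-B_\epsilon+\tfrac12|\psi_\epsilon|^2$ fails to be a subsolution: differentiating the constant $1$ in $w_\epsilon+1-B_\epsilon$ produces $-\epsilon^{-2}|\psi_\epsilon|^2$ rather than $+\epsilon^{-2}|\psi_\epsilon|^2$, so a residual negative zeroth-order term $-\epsilon^{-2}|\psi_\epsilon|^2(\tfrac32-\tfrac12|\psi_\epsilon|^2)$ survives and the naive maximum principle delivers only the far weaker bound $B_\epsilon-1-w_\epsilon\ge-2$. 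Getting the sharp constant seems to require a finer comparison argument — for instance one based on the divergence form $\Div(|\psi_\epsilon|^{-2}\nabla B_\epsilon)-\epsilon^{-2}B_\epsilon=-|\psi_\epsilon|^{-4}\,\nabla|\psi_\epsilon|^2\wedge\nabla\phi_\epsilon$ of the $B_\epsilon$-equation (obtained from the second Ginzburg--Landau equation together with $2g_\epsilon=-\nabla\ln|\psi_\epsilon|^2\cdot\nabla B_\epsilon+\nabla\ln|\psi_\epsilon|^2\wedge\nabla\phi_\epsilon$), or on iterating the maximum-principle estimate — and this is where I expect the real difficulty to lie.

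Finally, for \eqref{eq:113}: taking $\Div$ of \eqref{eq:104}b (so $\Div\curl^2A_\epsilon=0$) and inserting the imaginary part of \eqref{eq:104}a, which gives $\Div\Im(\bar\psi_\epsilon\nabla_{\mathcal A_\epsilon}\psi_\epsilon)=\Im(\bar\psi_\epsilon\Delta_{\mathcal A_\epsilon}\psi_\epsilon)=\epsilon^{-2}|\psi_\epsilon|^2\phi_\epsilon$, one obtains $\Delta\phi_\epsilon-\epsilon^{-2}|\psi_\epsilon|^2\phi_\epsilon=0$ in $\Omega$ with $\partial\phi_\epsilon/\partial\nu=f(\epsilon)J=\epsilon^{-\alpha}J$ on $\partial\Omega$. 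With $\phi_n$ the normal potential of \eqref{eq:11b}, the function $g:=\phi_{1,\epsilon}=\phi_\epsilon-\epsilon^{-\alpha}\phi_n$ has vanishing normal derivative and solves $\Delta g-\epsilon^{-2}|\psi_\epsilon|^2g=-\epsilon^{-\alpha-2}|\psi_\epsilon|^2\phi_n$; comparing $g$ with the constants $\pm\epsilon^{-\alpha}\|\phi_n\|_\infty$ (note $g+\epsilon^{-\alpha}\|\phi_n\|_\infty$ is a supersolution, since $\phi_n+\|\phi_n\|_\infty\ge0$) and invoking the Neumann maximum principle — Hopf's lemma on $\partial\Omega$, the interior-extremum argument in $\Omega$ — gives $\|g\|_\infty\le\epsilon^{-\alpha}\|\phi_n\|_\infty$, whence $\|\phi_\epsilon\|_\infty\le2\epsilon^{-\alpha}\|\phi_n\|_\infty\le C(\Omega,J)\epsilon^{-\alpha}$. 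Multiplying $\Delta\phi_\epsilon-\epsilon^{-2}|\psi_\epsilon|^2\phi_\epsilon=0$ by $\phi_\epsilon$ and integrating by parts then gives $\|\nabla\phi_\epsilon\|_2^2\le\|\nabla\phi_\epsilon\|_2^2+\epsilon^{-2}\||\psi_\epsilon|\phi_\epsilon\|_2^2=\epsilon^{-\alpha}\int_{\partial\Omega}\phi_\epsilon J\,ds\le\epsilon^{-\alpha}\|\phi_\epsilon\|_{L^\infty(\partial\Omega)}\|J\|_{L^1(\partial\Omega)}\le C\epsilon^{-2\alpha}$, so $\|\nabla\phi_\epsilon\|_2\le C(\Omega,J)\epsilon^{-\alpha}$, which together with the $L^\infty$ bound is \eqref{eq:113}.
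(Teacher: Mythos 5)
Your treatment of the upper half of \eqref{eq:112} and of \eqref{eq:113} is essentially identical to the paper's proof: the paper uses exactly your comparison function $u_\epsilon=B_\epsilon-1+\tfrac12|\psi_\epsilon|^2-w_\epsilon$ (your $P$), the identity \eqref{eq:114} together with the curl of (\ref{eq:104}b), and the weak maximum principle for $\Delta-\epsilon^{-2}|\psi_\epsilon|^2$; for \eqref{eq:113} it derives the same problem \eqref{eq:115} and performs the same integration by parts, the only difference being that the $L^\infty$ bound on $\phi_\epsilon$ is obtained by repeating the truncation/minimization argument used for \eqref{eq:34} rather than by Hopf's lemma. That variational route is preferable here: at the right-angle corners of $\partial\Omega$ there is no interior tangent ball, so a literal appeal to Hopf's lemma for the Neumann problem needs extra justification (e.g.\ reflection across the flat boundary pieces), whereas the truncation argument works directly for weak solutions.

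The part you did not prove, the lower half of \eqref{eq:112} with the constant $\tfrac12$, is indeed the delicate point, and you have identified the obstruction correctly; in fact the paper's own argument there suffers from exactly the defect you describe. The paper sets $\tilde u_\epsilon=-B_\epsilon-1+\tfrac12|\psi_\epsilon|^2+w_\epsilon$, which is a genuine subsolution, but since $w_\epsilon=B_\epsilon-1$ on $\partial\Omega$ its boundary trace is $\tfrac12|\psi_\epsilon|^2-2$, not $\tfrac12|\psi_\epsilon|^2$ as claimed; the maximum principle then yields only $B_\epsilon-1-w_\epsilon\ge -2$, i.e.\ precisely your ``far weaker bound'', and replacing the $-1$ by $+1$ to repair the boundary value destroys the subsolution property, as you computed. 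So within this family of comparison functions one obtains $\|B_\epsilon-1-w_\epsilon\|_\infty\le 2$ rather than $\le\tfrac12$, by your argument and by the paper's alike. Fortunately the precise constant is immaterial downstream: in the corollary \eqref{eq:116} and in the distance estimates \eqref{eq:119} and \eqref{eq:121}, the bound on $B_\epsilon-1-w_\epsilon$ is always compared against quantities of order $\epsilon^{-\alpha}$, so any $O(1)$ constant serves the same purpose. The practical conclusion is that your proposal, stated with the two-sided constant $2$, proves everything the lemma is used for; the constant $\tfrac12$ as stated is not established by either argument, and the speculative divergence-form manipulations in your third paragraph are not needed if one simply restates \eqref{eq:112} with the harmless larger constant.
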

\begin{proof}
  As can be easily verified from (\ref{eq:104}a) we have (see in
  \cite{al02} (formula (2.4) in the case when $\phi_\epsilon=0$),
  \begin{equation}
    \label{eq:114}
 \frac 12 \Delta|\psi_\epsilon|^2 = -\frac{|\psi_\epsilon|^2}{\epsilon^2}(1-|\psi_\epsilon|^2) +
|\nabla_{\epsilon^{-2}A_\epsilon}\psi_\epsilon|^2 \,.
  \end{equation}
Furthermore, taking the curl of (\ref{eq:104}b) yields
\begin{displaymath}
  \Delta B_\epsilon - \frac{1}{\epsilon^2}|\psi_\epsilon|^2B_\epsilon  =
   -\Im (\nabla_{\epsilon^{-2}A_\epsilon}\bar\psi_\epsilon\times\nabla_{\epsilon^{-2}A_\epsilon} \psi_\epsilon) \,.
\end{displaymath}
Note that
$$\aligned
&\curl \Im(\bar\psi_\epsilon\, \nabla_{\epsilon^{-2}A_\epsilon}\psi_\epsilon) \\
=&\Im(\nabla\bar\psi_\epsilon\times \nabla_{\epsilon^{-2}A_\epsilon}\psi_\epsilon
  -i\epsilon^{-2}\bar\psi_\epsilon A_\epsilon\times\nabla\psi_\epsilon) -  \frac{1}{\epsilon^2}|\psi_\epsilon|^2B_\epsilon\\
=&\Im (\nabla_{\epsilon^{-2}A_\epsilon}\bar \psi_\epsilon\times\nabla_{\epsilon^{-2}A_\epsilon} \psi_\epsilon) -
  \frac{1}{\epsilon^2}|\psi_\epsilon|^2B_\epsilon \,.
\endaligned
$$
Let 
$$u_\epsilon=B_\epsilon-1 + {|\psi_\epsilon|^2\over 2}-w_\epsilon\,.
$$
Combining the above and \eqref{eq:114}
yields that (cf. also \cite{al02})
\begin{displaymath}
  \begin{cases}
    \Delta u_\epsilon - \frac{1}{\epsilon^2}|\psi_\epsilon|^2u_\epsilon = |\nabla_{\epsilon^{-2}A_\epsilon}\psi_\epsilon|^2 -\Im (\nabla_{\epsilon^{-2}A_\epsilon}\bar \psi_\epsilon\times\nabla_{\epsilon^{-2}A_\epsilon} \psi_\epsilon)
 + \frac{1}{2\epsilon^2} |\psi_\epsilon|^4\geq 0 &
\text{in } \Omega, \\
u_\epsilon = \frac{|\psi_\epsilon|^2}{2} & \text{on } \partial\Omega.
  \end{cases}
\end{displaymath}
By the weak maximum principle (cf. for instance \cite[Theorem
8.1]{GT}) and \eqref{eq:106} we obtain that for sufficiently small $\epsilon$
\begin{displaymath}
  u_\epsilon (x)\leq \frac{1}{2}\,.
\end{displaymath}
The lower bound in \eqref{eq:112} follows easily by setting
$$\tilde{u}_\epsilon = -B_\epsilon-1 + {|\psi_\epsilon|^2\over 2}+w_\epsilon
$$ 
to obtain
\begin{displaymath}
  \begin{cases}
    \Delta \tilde{u}_\epsilon - \frac{1}{\epsilon^2}|\psi_\epsilon|^2\tilde{u}_\epsilon = |\nabla_{\epsilon^{-2}A_\epsilon}\psi_\epsilon|^2  +
\Im (\nabla_{\epsilon^{-2}A_\epsilon}\bar \psi_\epsilon\times\nabla_{\epsilon^{-2}A_\epsilon} \psi_\epsilon)  + \frac{1}{2\epsilon^2} |\psi_\epsilon|^4 \geq 0 &
\text{in } \Omega, \\
\tilde{u}_\epsilon = \frac{|\psi_\epsilon|^2}{2} & \text{on } \partial\Omega,
  \end{cases}
\end{displaymath}
upon which we use again the weak maximum principle.

To prove \eqref{eq:113} we first obtain for $\phi_\epsilon$, in the same manner
used to derive \eqref{eq:33}, the following problem:
\begin{equation}\label{eq:115}
\left\{
\aligned
-&\Delta\phi_\epsilon + \frac{\rho^2_\epsilon}{\epsilon^2}\phi_\epsilon =0 \q& \text{\rm in } \Omega\,, \\
&  \frac{\partial\phi_\epsilon}{\partial\nu}= \epsilon^{-\alpha}J \q&  \text{\rm on } \partial\Omega_c\,,\\
& \frac{\partial\phi_\epsilon}{\partial\nu}= 0 \q&  \text{\rm on } \partial\Omega_i\,.
\endaligned\right.
\end{equation}
Then, we follow the same steps as in the proof of \eqref{eq:34} to
obtain the bound $\|\phi_\epsilon\|_\infty\leq C \epsilon^{-\alpha}$. Multiplying \eqref{eq:115} by $\phi_\epsilon$ and
integrating by parts yields, using the preceding $L^\infty$ bound \eqref{eq:112}, we find that
\begin{displaymath}
  \|\nabla\phi_\epsilon\|_2^2 \leq \int_{\partial\Omega} \phi_\epsilon  \frac{\partial\phi_\epsilon}{\partial\nu}\,ds \leq C\epsilon^{-2\alpha}\,.
\end{displaymath}
\end{proof}
As a corollary we get:
\begin{corollary}
  \begin{equation}
\label{eq:116}
\|B_\epsilon\|_\infty \leq  \max(|b_1|,b_2)\epsilon^{-\alpha} + \frac{1}{2} \,.
\end{equation}
\end{corollary}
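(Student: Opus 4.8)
The plan is to feed the pointwise estimate \eqref{eq:112} of Lemma~\ref{lem:magnetic-bound} into an explicit maximum‑principle bound on the comparison function $w_\epsilon$ of \eqref{eq:111}. Rewriting \eqref{eq:112} in the form
\begin{displaymath}
  w_\epsilon(x) + \tfrac12 \;\leq\; B_\epsilon(x) \;\leq\; w_\epsilon(x) + \tfrac32 \,, \qquad x\in\Omega \,,
\end{displaymath}
reduces the corollary to two–sided control of $w_\epsilon$ throughout $\Omega$, for which I will first fix the boundary trace and then apply the weak maximum principle.

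For the boundary trace, by \eqref{eq:107}, \eqref{eq:108} and \eqref{eq:110} one has $B_\epsilon=\epsilon^{-\alpha}B_n$ on $\partial\Omega$, where $B_n=\curl A_n$ solves \eqref{eq:12}. Since $\partial_\tau B_n=J$ vanishes on $\partial\Omega_i$, the trace of $B_n$ is a constant $b_j$ on each component $\partial\Omega_{i,j}$, with $b_1b_2<0$ by \eqref{eq:8} and $b_2>0$ by our normalization; moreover $\partial_\tau B_n=J$ has constant sign on each component of $\partial\Omega_c$ by assumption (J3) (cf. \eqref{eq:75aa}), and each such component runs between the two components of $\partial\Omega_i$ by (R2) (cf. \eqref{hyptopolo} and Figure~\ref{fig:1}), so $B_n$ is monotone there and $b_1\leq B_n\leq b_2$ on all of $\partial\Omega$ (this is also part of the analysis of $B_n$ carried out in \cite[\S2]{alhe14}). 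Hence $w_\epsilon=B_\epsilon-1=\epsilon^{-\alpha}B_n-1$ satisfies $b_1\epsilon^{-\alpha}-1\leq w_\epsilon\leq b_2\epsilon^{-\alpha}-1$ on $\partial\Omega$, the left endpoint being negative and the right one positive once $\epsilon$ is small.

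Next I would invoke the weak maximum principle for \eqref{eq:111}: the zeroth–order coefficient of $\Delta-\epsilon^{-2}|\psi_\epsilon|^2$ is non‑positive, so by \cite[Theorem 8.1]{GT} the solution $w_\epsilon$ of the homogeneous problem obeys $\min_{\overline\Omega}w_\epsilon\geq\min(0,\min_{\partial\Omega}w_\epsilon)$ and $\max_{\overline\Omega}w_\epsilon\leq\max(0,\max_{\partial\Omega}w_\epsilon)$; combined with the boundary bounds just obtained, this gives, for $\epsilon$ small,
\begin{displaymath}
  b_1\epsilon^{-\alpha}-1 \;\leq\; w_\epsilon(x) \;\leq\; b_2\epsilon^{-\alpha}-1 \,, \qquad x\in\Omega \,.
\end{displaymath}
Inserting this into the two–sided inequality coming from \eqref{eq:112} yields $b_1\epsilon^{-\alpha}-\tfrac12\leq B_\epsilon(x)\leq b_2\epsilon^{-\alpha}+\tfrac12$ in $\Omega$, and since $b_1<0<b_2$ this is exactly $\|B_\epsilon\|_\infty\leq\max(|b_1|,b_2)\epsilon^{-\alpha}+\tfrac12$, i.e.\ \eqref{eq:116}.

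I do not expect a genuine obstacle here, as the argument is short; the only points requiring care are (i) pinning down the range $[b_1,b_2]$ of the boundary trace of $B_n$, which rests on the topological hypothesis (R2) together with the sign condition (J3), and (ii) applying the maximum principle in the version valid for an operator whose zeroth–order term has the favorable sign — which, exactly as in Lemma~\ref{lem:magnetic-bound}, is harmless but confines the statement to $\epsilon$ sufficiently small.
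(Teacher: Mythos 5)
Your argument is correct and is essentially the paper's own proof: combine the comparison estimate \eqref{eq:112} with the weak maximum principle for $w_\epsilon$ in \eqref{eq:111}, whose boundary data equals $\epsilon^{-\alpha}B_n-1$ by \eqref{eq:110}, with the trace of $B_n$ ranging over $[b_1,b_2]$. Your two-sided formulation (keeping the upper and lower bounds separate rather than taking absolute values) merely spells out the boundary-trace discussion that the paper leaves implicit, so nothing further is needed.
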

\begin{proof}
  By \eqref{eq:112} and the maximum principle we have that
  \begin{displaymath}
    \|B_\epsilon-1\|_\infty\leq \|w_\epsilon\|_\infty +\frac{1}{2} \leq  \max(|b_1|,b_2)\epsilon^{-\alpha} - \frac{1}{2}\,,
  \end{displaymath}
which readily yields \eqref{eq:116}.
\end{proof}

We continue with the following auxiliary estimate:
\begin{lemma}\label{Lem3}
  Let $\Omega$ and $J$ satisfy (\ref{eq:3})-(\ref{hyptopolo}), and $w_\epsilon$ be a solution of \eqref{eq:111}.
There exist positive constants $C$ and $\epsilon_0$, such that, for all $0<\epsilon<\epsilon_0$,
  \begin{equation}
\label{eq:117}
\|\nabla w_\epsilon \|_\infty \leq
\frac{C}{\epsilon^{1+\alpha}} \,.
  \end{equation}
\end{lemma}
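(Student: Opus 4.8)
The plan is to remove the stiff zeroth-order coefficient $\epsilon^{-2}|\psi_\epsilon|^2$ by a semiclassical rescaling and then invoke uniform local elliptic estimates for the resulting bounded-coefficient Dirichlet problem on the blown-up domain. First I would collect the two pieces of information available at the original scale. Since the zeroth-order coefficient in \eqref{eq:111} is nonnegative, the weak maximum principle (\cite[Theorem 8.1]{GT}) gives $\|w_\epsilon\|_{L^\infty(\Omega)}\le\|w_\epsilon\|_{L^\infty(\partial\Omega)}$. By \eqref{eq:110} the field $B_{1,\epsilon}=\curl A_{1,\epsilon}=B_\epsilon-\epsilon^{-\alpha}B_n$ vanishes on $\partial\Omega$, so the Dirichlet datum of $w_\epsilon$ is the explicit function $\epsilon^{-\alpha}B_n-1$; since $B_n\in W^{2,p}(\Omega)$ for all $p$ by \eqref{reguphi} (hence $B_n\in C^{1,\gamma}(\overline\Omega)$) and $\partial_\tau B_n=J\in C^2$ on $\partial\Omega$ by \eqref{eq:12}, we obtain
\[
\|w_\epsilon\|_{L^\infty(\Omega)}\le C\epsilon^{-\alpha},\qquad
\|\epsilon^{-\alpha}B_n-1\|_{C^{1,\gamma}(\partial\Omega)}\le C\epsilon^{-\alpha}.
\]
Regularity theory for the flat right-angle Dirichlet problem also shows $w_\epsilon\in C^{1,\gamma}(\overline\Omega)$, so $\nabla w_\epsilon$ is a priori bounded; only its size in $\epsilon$ is at stake.

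Next I would rescale. With $y=x/\epsilon$ (recall $0\in\Omega$), $\widetilde\Omega_\epsilon:=\epsilon^{-1}\Omega$, and $v(y):=w_\epsilon(\epsilon y)$, equation \eqref{eq:111} becomes
\[
\Delta_y v-|\psi_\epsilon(\epsilon y)|^2\,v=0\ \text{ in }\widetilde\Omega_\epsilon,\qquad
v=\epsilon^{-\alpha}B_n(\epsilon y)-1\ \text{ on }\partial\widetilde\Omega_\epsilon,
\]
a uniformly elliptic equation whose zeroth-order coefficient is bounded by $1$ thanks to \eqref{eq:106}. On balls of radius $O(1)$ the rescaled datum has $C^{1,\gamma}$ norm $\le C\epsilon^{-\alpha}$ (its $y$-gradient is $O(\epsilon^{1-\alpha})$, so the sup-norm part dominates), and $\|v\|_{L^\infty(\widetilde\Omega_\epsilon)}=\|w_\epsilon\|_{L^\infty(\Omega)}\le C\epsilon^{-\alpha}$. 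Now apply local $W^{2,p}$ elliptic estimates at unit scale: for any $y_0$, the set $B_2(y_0)\cap\widetilde\Omega_\epsilon$ is, uniformly in $\epsilon$, either a full ball, a region bounded by a $C^3$ hypersurface of curvature $O(\epsilon)$, or, near a corner of $\Omega$, an exact quarter-plane intersected with $B_2(y_0)$, because by (R1) $\partial\Omega$ is flat with opening $\pi/2$ in a fixed neighborhood of each corner and dilation preserves this. In each case the Dirichlet problem has $W^{2,p}$ bounds with $\epsilon$-independent constant — for the quarter-plane corner the angle $\pi/2$ imposes no restriction on $p$ (see Appendices A--B of \cite{alhe14}) — so, taking $p=3$ and using $W^{2,3}\hookrightarrow C^1$ in two dimensions,
\[
\|\nabla v\|_{L^\infty(B_1(y_0)\cap\widetilde\Omega_\epsilon)}
\le C\big(\|v\|_{L^\infty(B_2(y_0)\cap\widetilde\Omega_\epsilon)}+\|\epsilon^{-\alpha}B_n(\epsilon\cdot)-1\|_{C^{1,\gamma}}\big)\le C\epsilon^{-\alpha},
\]
uniformly in $y_0$. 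Hence $\|\nabla v\|_{L^\infty(\widetilde\Omega_\epsilon)}\le C\epsilon^{-\alpha}$, and since $\nabla_x w_\epsilon(x)=\epsilon^{-1}(\nabla_y v)(x/\epsilon)$ we conclude $\|\nabla w_\epsilon\|_{L^\infty(\Omega)}=\epsilon^{-1}\|\nabla v\|_{L^\infty(\widetilde\Omega_\epsilon)}\le C\epsilon^{-1-\alpha}$, which is \eqref{eq:117}.

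The main obstacle is the uniformity of the elliptic estimate at the corners of $\Omega$: one must exploit the flat, right-angle structure from (R1) so that the blown-up corner is literally a quarter-plane, with no residual $\epsilon$-dependence in the geometry, and so that the pure Dirichlet problem there enjoys $W^{2,p}$ regularity for all $p$ — unlike the mixed Dirichlet--Neumann corners treated earlier in the paper, for which the admissible exponents are limited. Once the scaling has absorbed the $\epsilon^{-2}$ into an $O(1)$ potential, the remaining ingredients are standard.
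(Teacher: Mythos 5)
Your proposal is correct and follows the same overall strategy as the paper's proof: undo the scaling \eqref{eq:103} to work at unit scale on the blown-up domain, where \eqref{eq:111} becomes a uniformly elliptic problem with potential $|\psi|^2\le 1$ (by \eqref{eq:106}) and explicit Dirichlet datum $\epsilon^{-\alpha}B_n(\epsilon\cdot)-1$ (by \eqref{eq:110}), prove a local gradient bound of size $O(\epsilon^{-\alpha})$ uniformly in the base point and in $\epsilon$, and scale back to collect the extra factor $\epsilon^{-1}$. Where you differ is the local elliptic machinery: the paper works with $H^2$ and then $H^3$ estimates on unit (half-)balls, which forces it to control $\||\psi|^2\|_{H^1}$ via Kato's inequality together with a Caccioppoli-type bound extracted from (\ref{eq:18}a), and then concludes through $H^3\hookrightarrow C^1$; you stay at second order, treating the potential term as an $L^\infty$ right-hand side of size $O(\epsilon^{-\alpha})$ and using $W^{2,3}\hookrightarrow C^1$ in two dimensions, so Kato's inequality and any third-order information on $B_n$ are never needed --- only the $W^{2,p}$ regularity already recorded in \eqref{reguphi}, which is in fact what is available near the corners. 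One imprecision to fix: a local $W^{2,p}$ Dirichlet estimate is not governed by the $C^{1,\gamma}$ norm of the boundary datum (a $C^{1,\gamma}$ extension has no Laplacian to subtract); you should measure the datum through its natural extension $\epsilon^{-\alpha}B_n(\epsilon\cdot)-1$, whose $W^{2,p}$ norm at unit scale is still $O(\epsilon^{-\alpha})$ since its second derivatives gain a factor $\epsilon^{2-2/p}$, so the stated bound and the conclusion are unaffected. Your observation about the corners is the right one: the problem here is purely Dirichlet, and at a flat right angle the singular exponents are the even integers, so $W^{2,p}$ holds for every finite $p$ with constants independent of $\epsilon$, in contrast with the mixed Dirichlet--Neumann corners analyzed in Section \ref{sec:2}.
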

\begin{proof}
  For convenience of notation we drop the subscript $\epsilon$ in the proof
  and bring only its main steps, as it rather standard. We first apply
  the inverse transformation of \eqref{eq:103} to \eqref{eq:111} to
  obtain
\begin{displaymath}
   \begin{cases}
    \Delta w - |\psi|^2w = 0 & \text{in }\Omega_R\,, \\
    w = B - 1 & \text{on } \partial\Omega_R \,,
  \end{cases}
\end{displaymath}
where $B=\curl A$.\\
We distinguish in the following between interior estimates and
boundary estimates.  Let $x_0\in \partial \Omega_R$ and
$D_r=D_r(x_0)=B(x_0,r)\cap\Omega_R\,.
$\\
By the standard elliptic estimates we then have, in view of
\eqref{eq:106},
\begin{equation}
\label{eq:124}
  \|w\|_{H^2(D_r)} \leq C(\|w\|_{L^2(D_{2r})}+ \| \epsilon^{-\alpha}B_n-1\|_{H^2(D_{2r})}) \,.
\end{equation}
To obtain the above we first observe that $B = \epsilon^{-\alpha} B_n$ on the
boundary, and then use the fact that the trace of $B_n$ in $H^\frac
32 (\partial\Omega)$ and is therefore bounded from above by a proper $H^2$ norm.

Similarly,
\begin{equation}
\label{eq:125}
  \|w\|_{H^3(D_r)} \leq C\Big(\|w\|_{H^1(D_{2r})} + \|\epsilon^{-\alpha}B_n-1\|_{H^3(D_{2r})}
 +\|w\|_{L^\infty(D_{2r})} \||\psi|^2\|_{H^1(D_{2r})}\Big) \,.
\end{equation}
Using Kato's inequality and  (\ref{eq:18}a) yields
\begin{displaymath}
  \|\nabla|\psi|\,\|_{L^2(D_r)} \leq \|\nabla_A\psi\|_{L^2(D_r)} \leq C\|\psi\|_{L^2(D_{2r})}\,,
\end{displaymath}
then we obtain by \eqref{eq:112} that
\begin{equation}\label{112aa}
  \|w\|_{H^3(D_r)} \leq C\epsilon^{-\alpha}\,.
\end{equation}

An interior estimate is even easier.  Consider $x_0\in \Omega_R$ such that
$D(x_0,2r) \subset \Omega_R$. There is no need in this case to include a
boundary term in \eqref{eq:124} and \eqref{eq:125}. We then obtain
\eqref{112aa} in this case in a similar manner.

The proof of \eqref{eq:117} now follows from Sobolev embeddings and
\eqref{eq:103}.
\end{proof}

We next define the following subdomain of $\Omega$:
 \begin{displaymath}
\Dg_\delta(\epsilon) = \{ x\in\Omega \,:\,  |B_\epsilon(x)|<\delta\epsilon^{-\alpha} \} \,.
 \end{displaymath}
 Let further
 \begin{displaymath}
   d_{\delta,j}(\epsilon) = d( \Dg_\delta(\epsilon) ,\partial\Omega_{i,j}),\quad j=1,2\,,
 \end{displaymath}
where, as in the introduction,  $\{\partial\Omega_{i,j}\}_{j=1}^{2}$ denotes the set of connected components of
$\partial\Omega_i$. We now obtain a lower bound of
\eq\label{d-delta}
d_\delta(\epsilon)=\max_{j\in\{1,2\}}d_{\delta,j}(\epsilon) .
\eeq
\begin{lemma}
  Let
  \begin{equation}
    \label{eq:118}
\delta_0=\min(|b_1|,|b_2|)\,.
  \end{equation}
 Under the conditions of Lemma \ref{Lem3}, there exists, for any   $0<\delta<\delta_0\,$, a
  positive  $C_\delta$ such that for sufficiently small $\epsilon$ we have
  \begin{equation}
\label{eq:119}
d_\delta(\epsilon) \geq  C_\delta\, \epsilon  \,.
  \end{equation}
\end{lemma}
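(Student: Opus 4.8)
The plan is to show that $|B_\epsilon|\ge\delta\,\epsilon^{-\alpha}$ throughout a tube of width of order $\epsilon$ around each component $\partial\Omega_{i,j}$; since $\mathcal D_\delta(\epsilon)$ is by definition the set where $|B_\epsilon|<\delta\epsilon^{-\alpha}$, this immediately forces $d_{\delta,j}(\epsilon)\ge C_\delta\epsilon$, hence also $d_\delta(\epsilon)=\max_j d_{\delta,j}(\epsilon)\ge C_\delta\epsilon$. (If $\mathcal D_\delta(\epsilon)=\emptyset$ there is nothing to prove.)

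First I would pin down the boundary values of $B_\epsilon$. By \eqref{eq:107}--\eqref{eq:108}, $B_\epsilon=\epsilon^{-\alpha}\curl A_n+B_{1,\epsilon}=\epsilon^{-\alpha}B_n+B_{1,\epsilon}$, and by \eqref{eq:110} we have $B_{1,\epsilon}\equiv0$ on $\partial\Omega$; hence $B_\epsilon=\epsilon^{-\alpha}B_n$ on $\partial\Omega$. Since $J\equiv0$ on $\partial\Omega_i$, \eqref{eq:12} shows that $B_n$ is constant on each $\partial\Omega_{i,j}$, with value $b_j$ (so $B_\epsilon\equiv b_j\epsilon^{-\alpha}$ on $\partial\Omega_{i,j}$, consistently with $h_j=b_j\epsilon^{-\alpha}$), and $|b_j|\ge\delta_0$ by \eqref{eq:118}. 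In particular the solution $w_\epsilon$ of \eqref{eq:111} satisfies $w_\epsilon=B_\epsilon-1\equiv b_j\epsilon^{-\alpha}-1$ on $\partial\Omega_{i,j}$.

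Next I would propagate this into $\Omega$ using the gradient bound \eqref{eq:117} of Lemma~\ref{Lem3}. Fix $j$, let $x\in\Omega$ with $d:=d(x,\partial\Omega_{i,j})$ smaller than a geometric constant $\rho_0$, and let $x'\in\partial\Omega_{i,j}$ realize this distance. By hypothesis (R1), near $\partial\Omega_{i,j}$ the domain $\Omega$ is either the region on one side of a smooth arc or (near an edge) a convex quarter–plane, so for $d<\rho_0$ the segment $[x,x']$ lies in $\overline\Omega$ and has length $d$; integrating $\nabla w_\epsilon$ along it and using \eqref{eq:117},
\[
\bigl|w_\epsilon(x)-\bigl(b_j\epsilon^{-\alpha}-1\bigr)\bigr|=|w_\epsilon(x)-w_\epsilon(x')|\le \|\nabla w_\epsilon\|_\infty\, d\le \frac{C}{\epsilon^{1+\alpha}}\,d .
\]
Combining with \eqref{eq:112},
\[
\bigl|B_\epsilon(x)-b_j\epsilon^{-\alpha}\bigr|\le\bigl|B_\epsilon(x)-1-w_\epsilon(x)\bigr|+\bigl|w_\epsilon(x)+1-b_j\epsilon^{-\alpha}\bigr|\le \tfrac12+\frac{C}{\epsilon^{1+\alpha}}\,d .
\]

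Finally, if such an $x$ lies in $\mathcal D_\delta(\epsilon)$, i.e. $|B_\epsilon(x)|<\delta\epsilon^{-\alpha}$, then $(|b_j|-\delta)\epsilon^{-\alpha}\le |b_j\epsilon^{-\alpha}|-|B_\epsilon(x)|\le |B_\epsilon(x)-b_j\epsilon^{-\alpha}|\le \tfrac12+C\epsilon^{-1-\alpha}d$; since $\delta<\delta_0\le|b_j|$, for $\epsilon$ small enough that $(|b_j|-\delta)\epsilon^{-\alpha}\ge1$ (in particular $C_\delta\epsilon<\rho_0$) this gives $d\ge\frac{|b_j|-\delta}{2C}\,\epsilon$. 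Points at distance $\ge\rho_0$ are trivially fine, so for $\epsilon$ small every $x\in\mathcal D_\delta(\epsilon)$ satisfies $d(x,\partial\Omega_{i,j})\ge C_\delta\epsilon$ with $C_\delta=\frac{\delta_0-\delta}{2C}$, whence $d_\delta(\epsilon)\ge d_{\delta,j}(\epsilon)\ge C_\delta\epsilon$. The genuine analytic inputs are just \eqref{eq:112} and Lemma~\ref{Lem3}; everything else is elementary, and the only mildly delicate point is the comparison between $d(x,\partial\Omega_{i,j})$ and the length of a path in $\overline\Omega$ from $x$ to $\partial\Omega_{i,j}$, which is handled by the flatness/right–angle condition (R1) near the Dirichlet–Neumann corners.
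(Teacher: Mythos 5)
Your argument is correct and is essentially the paper's own proof: both rest on the boundary values $B_\epsilon=b_j\epsilon^{-\alpha}$ on $\partial\Omega_{i,j}$ (via \eqref{eq:110}), the comparison \eqref{eq:112}, and the gradient bound \eqref{eq:117} of Lemma \ref{Lem3}, turned into the estimate $(\delta_0-\delta)\epsilon^{-\alpha}-\tfrac12\le C\epsilon^{-1-\alpha}|x-y|$ for $x\in\partial\Omega_i$, $y\in\mathcal D_\delta(\epsilon)$. The only cosmetic difference is that you integrate $\nabla w_\epsilon$ along the segment to the nearest boundary point and justify its inclusion in $\overline\Omega$ via (R1), a point the paper passes over silently.
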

\begin{proof}
  Let $x\in\partial\Omega_i$ and $y\in\Dg_\delta(\epsilon)$.  By \eqref{eq:117} we have
\begin{equation}
\label{eq:120}
 |w_\epsilon(x)-w_\epsilon(y)| \leq   \frac{C}{\epsilon^{1+\alpha}} |x-y|\,.
\end{equation}
By \eqref{eq:112} we have
\begin{displaymath}
  |w_\epsilon(x)-w_\epsilon(y)| \geq |B_\epsilon(x)-B_\epsilon(y)| -\frac{1}{2}\geq  (\delta_0-\delta)\epsilon^{-\alpha}-\frac{1}{2}\,.
\end{displaymath}
Combining the above with \eqref{eq:120} yields
\begin{displaymath}
   |x-y| \geq (\delta_0-\delta)\epsilon-\frac{1}{2}\epsilon^{1+\alpha} \,,
\end{displaymath}
which readily yields \eqref{eq:119}.
\end{proof}

Next we show
\begin{lemma} Under the conditions of Lemma \ref{Lem3}, there exist $C>0$, $\epsilon_0>0$ and $\delta_0>0$ such that,  for
   $0<\epsilon\leq \epsilon_0 $ and $0< \delta \leq \delta_0\,$,
  \begin{equation}
\label{eq:121}
\int_{\Omega\setminus\Dg_\delta(\epsilon)} \exp \left( [2\delta \Theta_0\epsilon^{-\alpha}]^{1/2}(4\epsilon)^{-1}d(x,\Dg_\delta(\epsilon))\right)  |\psi_\epsilon|^2 \,dx \leq \frac{C}{\delta^{3/2}} \,.
  \end{equation}
\end{lemma}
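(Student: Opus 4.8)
The bound \eqref{eq:121} is a semiclassical Agmon (exponential decay) estimate, proved along the lines of the steady-state propositions of Section~\ref{sec:3}, with $\epsilon$ playing the role of $\kappa^{-1}$ and the magnetic field now of size $\epsilon^{-\alpha}$. First I would fix a cutoff $\eta_\epsilon\in C^\infty(\Omega,[0,1])$ with $\eta_\epsilon\equiv1$ on $\Omega\setminus\Dg_\delta(\epsilon)$ and $\eta_\epsilon\equiv0$ on $\Dg_{\delta/2}(\epsilon)$: by \eqref{eq:112} and the gradient bound \eqref{eq:117} on $w_\epsilon$, the level sets $\{|B_\epsilon|=\delta\epsilon^{-\alpha}\}$ and $\{|B_\epsilon|=\tfrac\delta2\epsilon^{-\alpha}\}$ are separated by a distance $\gtrsim\delta\epsilon$, so one may take $|\nabla\eta_\epsilon|\le C/(\delta\epsilon)$. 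Next set $\check\chi=\exp\!\big(\tfrac12[2\delta\Theta_0\epsilon^{-\alpha}]^{1/2}(4\epsilon)^{-1}d(x,\Dg_\delta(\epsilon))\big)$ on $\Omega\setminus\Dg_\delta(\epsilon)$ and $\check\chi\equiv1$ on $\Dg_\delta(\epsilon)$, so that $|\nabla\check\chi|^2\le\tfrac{\delta\Theta_0}{32}\epsilon^{-2-\alpha}\check\chi^{\,2}$, and put $\zeta_\epsilon=\eta_\epsilon\check\chi$; then $\zeta_\epsilon^{\,2}$ coincides on $\Omega\setminus\Dg_\delta(\epsilon)$ with the weight in \eqref{eq:121}, and $\nabla\eta_\epsilon$, $\nabla\check\chi$ have disjoint supports (with $\check\chi\equiv1$ on $\mathrm{supp}\,\nabla\eta_\epsilon\subset\Dg_\delta(\epsilon)$).

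The core is a two-sided estimate for $\|\nabla_{\epsilon^{-2}A_\epsilon}(\zeta_\epsilon\psi_\epsilon)\|_2^2$. Multiplying (\ref{eq:104}a) by $\zeta_\epsilon^{\,2}\bar\psi_\epsilon$, integrating over $\Omega$, taking the real part (the term $-i\epsilon^{-2}\phi_\epsilon|\psi_\epsilon|^2$ drops out), and integrating by parts, where (\ref{eq:104}c) and (\ref{eq:104}d) kill the boundary contributions, gives with \eqref{eq:106} the IMS-type inequality
$$\|\nabla_{\epsilon^{-2}A_\epsilon}(\zeta_\epsilon\psi_\epsilon)\|_2^2\le\Big(\tfrac{1}{\epsilon^2}+\tfrac{\delta\Theta_0}{32}\epsilon^{-2-\alpha}\Big)\|\zeta_\epsilon\psi_\epsilon\|_2^2+\|\psi_\epsilon\nabla\eta_\epsilon\|_2^2 .$$
For the lower bound I would write $\epsilon^{-2}A_\epsilon=\epsilon^{-2-\alpha}(A_n+\epsilon^\alpha A_{1,\epsilon})$, so $\|\nabla_{\epsilon^{-2}A_\epsilon}(\zeta_\epsilon\psi_\epsilon)\|_2^2=\epsilon^{-2(2+\alpha)}\big\|\big(\epsilon^{2+\alpha}\nabla-i(A_n+\epsilon^\alpha A_{1,\epsilon})\big)(\zeta_\epsilon\psi_\epsilon)\big\|_2^2$, and then estimate this quantity near $\partial\Omega_i$ and away from it. Localizing near $\partial\Omega_i$ in a strip of width $r$ with $\epsilon^{1+\alpha/2}\ll r\ll\epsilon$ (using \eqref{eq:119}), where by \eqref{eq:112} and \eqref{eq:117} the field $|B_\epsilon|$ oscillates by $o(\epsilon^{-\alpha})$ and, by (R1), $\partial\Omega_i$ is flat and meets $\partial\Omega_c$ at a right angle, the rescaled form of the Proposition proving \eqref{eq:22}/\eqref{eq:74a} (cf.\ also \cite{boda06}) together with Lemma~\ref{lem:Dirichlet-Neumann}, and the fact that $|\curl(A_n+\epsilon^\alpha A_{1,\epsilon})|=\epsilon^\alpha|B_\epsilon|\ge\tfrac\delta2$ on $\mathrm{supp}\,\eta_\epsilon$, on the Neumann part, and at the corners (the last since $\delta<\delta_0\le\min|b_j|$ and $B_\epsilon=b_j\epsilon^{-\alpha}$ on $\partial\Omega_{i,j}$), gives there a lower bound $\Theta_0\tfrac\delta2\epsilon^{-2-\alpha}[1-o(1)]\|\cdot\|_2^2$; away from $\partial\Omega_i$ the relevant piece of $\zeta_\epsilon\psi_\epsilon$ lies in $H^1_0(\Omega)$ (it vanishes on $\partial\Omega_c$ and off $\mathrm{supp}\,\eta_\epsilon$), $B_\epsilon$ is of one sign on each component of $\Omega\setminus\Dg_{\delta/2}(\epsilon)$, so $\cite[Theorem~2.9]{AHS}$ yields $\ge\tfrac\delta2\epsilon^{-2-\alpha}\|\cdot\|_2^2$. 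Since the commutator terms produced by the interior/boundary splitting are $O(r^{-2})\|\zeta_\epsilon\psi_\epsilon\|_2^2=o(\epsilon^{-2-\alpha}\|\zeta_\epsilon\psi_\epsilon\|_2^2)$ and $\Theta_0\le1$, one obtains $\|\nabla_{\epsilon^{-2}A_\epsilon}(\zeta_\epsilon\psi_\epsilon)\|_2^2\ge\Theta_0\tfrac\delta2\epsilon^{-2-\alpha}[1-o(1)]\|\zeta_\epsilon\psi_\epsilon\|_2^2$. Combined with the display above (the $\epsilon^{-2}$ term and the $o(1)$ being negligible against $\epsilon^{-2-\alpha}$), this gives $\tfrac{\delta\Theta_0}{4}\epsilon^{-2-\alpha}\|\zeta_\epsilon\psi_\epsilon\|_2^2\le\|\psi_\epsilon\nabla\eta_\epsilon\|_2^2\le C\delta^{-2}\epsilon^{-2}|\Omega|$, hence $\|\zeta_\epsilon\psi_\epsilon\|_2^2\le C\delta^{-3}\epsilon^\alpha$, which is $\le C\delta^{-3/2}$ for $\epsilon$ small; as $\eta_\epsilon\equiv1$ on $\Omega\setminus\Dg_\delta(\epsilon)$, this is \eqref{eq:121}.

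The genuinely delicate step is the Dirichlet--Neumann lower bound near $\partial\Omega_i$: the Proposition establishing \eqref{eq:22} requires the self-induced potential $A_{1,\epsilon}=A_\epsilon-\epsilon^{-\alpha}A_n$, written as $\epsilon^\alpha A_{1,\epsilon}=(\epsilon^{2+\alpha})^{1/2}a_\epsilon$, to satisfy $(1+\|\nabla a_\epsilon\|_\infty^2)(\epsilon^{2+\alpha})^{1/3}\to0$, which translates into a $W^{1,p}$-bound on $B_{1,\epsilon}$ (equivalently a $W^{2,p}$-bound on $A_{1,\epsilon}$) of the form analogous to \eqref{eq:49}; this must be proved first, by repeating the elliptic bootstrap of Section~\ref{sec:3} starting from \eqref{eq:113} and \eqref{eq:116} (bounding $\|\nabla_{\epsilon^{-2}A_\epsilon}\psi_\epsilon\|_p$, then $\|\nabla\phi_{1,\epsilon}\|_p$ and $\|B_{1,\epsilon}\|_p$, then $\|A_{1,\epsilon}\|_{2,p}$), since the crude bound $\|B_{1,\epsilon}\|_\infty\le C\epsilon^{-\alpha}$ implied by \eqref{eq:116} alone is too weak for this purpose. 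The remaining ingredients — the separation of level sets, the disjointness of the cutoff supports, the choice of $r(\epsilon)$, and the final arithmetic — are routine.
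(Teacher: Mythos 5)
Your Agmon skeleton is exactly the paper's: the cutoff $\eta_\epsilon$ with $|\nabla\eta_\epsilon|\leq C/(\delta\epsilon)$ (justified from \eqref{eq:112} and \eqref{eq:117}), the exponential weight supported off $\Dg_\delta(\epsilon)$, the IMS-type upper bound from multiplying (\ref{eq:104}a) by $\zeta_\epsilon^2\bar\psi_\epsilon$, a spectral lower bound of size $\Theta_0\tfrac\delta2\epsilon^{-2-\alpha}$, and the final arithmetic $\|\zeta_\epsilon\psi_\epsilon\|_2^2\leq C\delta^{-3}\epsilon^{\alpha}$. The gap is precisely at the step you single out as ``genuinely delicate'' and then defer: the claim that one must, and can, first establish a $W^{2,p}$ bound on $A_{1,\epsilon}$ analogous to \eqref{eq:49} by ``repeating the elliptic bootstrap of Section \ref{sec:3}'', so that \eqref{eq:22} applies with $\A=A_n$ and perturbation $a_\epsilon=\epsilon^{(\alpha-2)/2}A_{1,\epsilon}$. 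This is not proved, and it is not a routine transplant: in the large-domain scaling the energy identity only gives $\|\nabla_{\epsilon^{-2}A_\epsilon}\psi_\epsilon\|_2=O(\epsilon^{-1})$, so the analogue of \eqref{eq:41} already carries a factor $\epsilon^{-1}$ and the bootstrap cannot produce bounds on $B_{1,\epsilon}$ that are $o(\epsilon^{-\alpha})$, let alone make $(1+\|\nabla a_\epsilon\|_\infty^2)\,\epsilon^{(2+\alpha)/3}$ tend to zero ($\alpha<1$). Indeed nothing in this regime prevents the induced field $B_{1,\epsilon}$ from being of the same order $\epsilon^{-\alpha}$ as the applied field in the bulk (Meissner screening); this is exactly why the paper defines $\Dg_\delta(\epsilon)$ through $B_\epsilon$ rather than $B_n$ and only obtains Proposition \ref{largedomain} instead of Conjecture \ref{conj}. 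Note also the internal inconsistency in your setup: if $\epsilon^\alpha A_{1,\epsilon}$ enters \eqref{eq:22} as the perturbation $\epsilon^{1/2}a$, its curl is discarded and the main term is $\min\bigl(\inf|B_n|,\Theta_0\inf_{\partial\Omega_i}|B_n|\bigr)$, yet you invoke the lower bound ``$\epsilon^\alpha|B_\epsilon|\geq\tfrac\delta2$'' of the \emph{full} field, which that decomposition does not see.

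The paper's proof needs none of this: it applies the spectral bound with the full rescaled potential, whose curl $\epsilon^\alpha B_\epsilon$ satisfies $\geq\tfrac\delta2$ on $\mathrm{supp}\,\zeta_\epsilon$, equals $b_j$ (with $|b_j|\geq\delta_0>\delta$) on $\partial\Omega_{i,j}$, is bounded above by \eqref{eq:116}, and has the Lipschitz-type control coming from \eqref{eq:112} and \eqref{eq:117} — precisely the content of Lemmas \ref{lem:magnetic-bound} and \ref{Lem3}, which replace any estimate on $A_{1,\epsilon}$. Your own observation that $|B_\epsilon|$ oscillates by $o(\epsilon^{-\alpha})$ over scales $\ll\epsilon$ is the information that makes this work (constant-field comparison near $\partial\Omega_i$, plus Lemma \ref{lem:Dirichlet-Neumann} and \eqref{eq:74a} for the corner and Neumann models); with it, your extra strip decomposition and the interior \cite{AHS} step are also dispensable, since \eqref{eq:22} already encodes the interior/Neumann/corner trichotomy. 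As written, however, your argument leaves its key hypothesis unestablished, and the route you propose for it would not deliver the required smallness.
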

\begin{proof}
  Let $\eta\in C^\infty(\Omega,[0,1])$ satisfy
  \begin{displaymath}
    \eta(x)=
    \begin{cases}
      1 & x\in \Omega\setminus \Dg_\delta(\epsilon)\,, \\
      0 & x\in \Dg_{\delta/2}(\epsilon) \,.
    \end{cases}
  \end{displaymath}
  With the aid of \eqref{eq:117} and \eqref{eq:112}, it can be easily
  verified, for some $C=C(p,J,\Omega)>0$ which is independent of both $\delta$
  and $\epsilon$, that for all $0<\delta<\delta_0$ we can construct $\eta$ with the
  additional property
  \begin{displaymath}
   |\nabla\eta| \leq \frac{C}{\delta\epsilon} \,.
  \end{displaymath}
Let further
$$\zeta=\chi\eta\,,
$$
where
\begin{displaymath}
  \chi=
  \begin{cases}
    \exp\big(\alpha_\delta\epsilon^{-1-\alpha/2}d(x,\Dg_\delta(\epsilon))\big) &\text{if } x\in\Omega\setminus \Dg_\delta(\epsilon)\,, \\
    1 &\text{if } x\in \Dg_\delta(\epsilon)\,.
  \end{cases}
\end{displaymath}
We leave the determination of
$\alpha_\delta$ to a later stage.

Multiplying (\ref{eq:25}a) by $\zeta^2\bar{\psi_\epsilon}$ and integrating by parts yields
\begin{displaymath}
  \|\nabla_{\epsilon^{-2}A_\epsilon}(\zeta\psi_\epsilon)\|_2^2 =\frac{1}{\epsilon^2}\big[ \|\zeta\psi_\epsilon\|_2^2 -
  \|\zeta^{1/2}\psi_\epsilon\|_4^4\big] + \|\psi_\epsilon\nabla\zeta\|_2^2 \,.
\end{displaymath}
By \eqref{eq:22} we have that for sufficiently small $\epsilon$
\begin{displaymath}
 \|\nabla_{\epsilon^{-2}A_\epsilon}(\zeta\psi_\epsilon)\|_2^2  \geq\frac{\Theta_0\delta}{2\epsilon^2}\epsilon^{-\alpha}
 \|\zeta\psi_\epsilon\|_2^2 \,,
\end{displaymath}
where $\Theta_0$ is defined in \eqref{eq:20}.  Consequently,
\begin{equation}
\label{eq:122}
   \frac{\Theta_0\delta \epsilon^{-\alpha}-2}{2\epsilon^2}\|\zeta\psi_\epsilon\|_2^2 \leq \|\psi_\epsilon\nabla\zeta\|_2^2 \,.
\end{equation}
From \eqref{eq:122} we learn that
\begin{displaymath}
  \|\zeta\psi_\epsilon\|_2 \leq \Big[\frac{2\epsilon^2}{\Theta_0\delta \epsilon^{-\alpha}-2}\Big]^{1/2} \Big(\alpha_\delta
  \epsilon^{-(1+\alpha/2)}\|\zeta\psi_\epsilon\|_2 +
  \|\psi_\epsilon\nabla\eta\|_2\Big) \,,
\end{displaymath}
where we have used the fact that $|\nabla d(x,\Dg_\delta(\epsilon))|\leq1$  a.e.

Choosing
\begin{displaymath}
  \alpha_\delta=\frac{1}{2}\Big[\frac{\Theta_0\delta}{2}\Big]^{1/2}
\end{displaymath}
we obtain that for sufficiently small $\epsilon$
\begin{displaymath}
  \|\zeta\psi_\epsilon\|_2 \leq 4\Big[\frac{2}{\delta}\Big]^{1/2} \epsilon^{1+\alpha/2}\|\psi_\epsilon\nabla\eta\|_2 \leq
  \frac{C\epsilon^{\alpha/2}}{\delta^{3/2}}\|\psi_\epsilon\|_{L^2(\Dg_\delta(\epsilon)\setminus \Dg_{\delta/2}(\epsilon))}\,.
\end{displaymath}
The proof of \eqref{eq:121} now follows from \eqref{eq:106}.
\end{proof}

We now obtain an improved lower bound.
\begin{proposition}
  Let $\delta_0$ be given by \eqref{eq:118}.  Under the conditions of Lemma
  \ref{Lem3}, for any $0<\delta<\delta_0$,  there exist positive constants  $C_\delta$ and $\epsilon_\delta$ such that for
   $0< \epsilon \leq \epsilon_\delta$ we have
  \begin{equation}
\label{eq:123}
d_\delta(\epsilon) \geq  C_\delta \,.
  \end{equation}
\end{proposition}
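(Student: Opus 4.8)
The plan is to reduce the assertion to a uniform lower bound, near $\partial\Omega_i$, on the rescaled auxiliary field $\tilde w_\epsilon:=\epsilon^\alpha w_\epsilon$, and then to \emph{bootstrap} the exponential decay of $\psi_\epsilon$ that is already available from the scale $\epsilon$ (Lemma with \eqref{eq:119}) up to the scale $1$, using the weighted estimate \eqref{eq:121}.

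\emph{Reductions.} By \eqref{eq:116} and \eqref{eq:112} one has $\|\tilde w_\epsilon\|_\infty\le C$ uniformly in $\epsilon$, and $|B_\epsilon|\ge \epsilon^{-\alpha}|\tilde w_\epsilon|-\tfrac32$ on $\Omega$; moreover on $\partial\Omega_{i,j}$ (where $B_n=b_j$) one has $\tilde w_\epsilon=b_j-\epsilon^\alpha$ with $|b_j|\ge\delta_0>\delta$ (recall $\delta<\delta_0$ and \eqref{eq:118}). Hence it suffices to exhibit a fixed $\rho_0=\rho_0(\delta,\Omega)>0$ and a fixed $c_\delta>0$ with $|\tilde w_\epsilon(x)|\ge\delta+c_\delta$ whenever $d(x,\partial\Omega_{i,2})\le\rho_0$ and $\epsilon$ is small: then $|B_\epsilon(x)|>\delta\epsilon^{-\alpha}$ there, so $x\notin\Dg_\delta(\epsilon)$, giving $d_{\delta,2}(\epsilon)\ge\rho_0$ and a fortiori $d_\delta(\epsilon)=\max_j d_{\delta,j}(\epsilon)\ge\rho_0=:C_\delta$ (the component $\partial\Omega_{i,1}$, with $|b_1|\ge\delta_0$, is handled identically, or by the symmetry $(\psi_\epsilon,A_\epsilon,\phi_\epsilon)\mapsto(\bar\psi_\epsilon,-A_\epsilon,-\phi_\epsilon)$). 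Since $B_n$ is harmonic, the function $v_\epsilon:=\tilde w_\epsilon-B_n+\epsilon^\alpha$ solves $(\Delta-\epsilon^{-2}|\psi_\epsilon|^2)v_\epsilon=\epsilon^{-2}|\psi_\epsilon|^2(B_n-\epsilon^\alpha)$ in $\Omega$ with $v_\epsilon=0$ on $\partial\Omega$; as $B_n\ge b_2>\delta$ on a fixed neighbourhood of $\partial\Omega_{i,2}$ (continuity of $B_n$), the claim follows once $v_\epsilon$ is shown to be uniformly small on such a neighbourhood.

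\emph{Propagation of smallness.} I would first note that the proof of the preceding lemma in fact yields $d_{\delta,j}(\epsilon)\ge C_\delta\epsilon$ for \emph{each} $j$, since it only uses $|B_n|=|b_j|\ge\delta_0$ on $\partial\Omega_{i,j}$ together with \eqref{eq:117} and \eqref{eq:112}. The point is now that the rate in \eqref{eq:121} is $\epsilon^{-1-\alpha/2}$, strictly faster than the rate $\epsilon^{-1-\alpha}$ implicit in the gradient bound of Lemma \ref{Lem3}: at distance $\gtrsim d_{\delta,2}(\epsilon)\gtrsim\epsilon$ from $\Dg_\delta(\epsilon)$ one gets $\|\psi_\epsilon\|_{L^2}^2\lesssim e^{-c_\delta\epsilon^{-\alpha/2}}$, i.e.\ super‑exponential smallness. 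Since $|\psi_\epsilon|\le1$, this gives $\|\,|\psi_\epsilon|^2\,\|_{L^q}$ super‑exponentially small for every $q$, hence $\|\Delta\tilde w_\epsilon\|_{L^q}=\epsilon^{-2}\|\,|\psi_\epsilon|^2\tilde w_\epsilon\,\|_{L^q}=o(1)$ on the strip $\{d(\cdot,\partial\Omega_{i,2})<d_{\delta,2}(\epsilon)\}$ (the polynomial-in-$\epsilon^{-1}$ factor being harmless). Thus $v_\epsilon$ is essentially harmonic on that strip; combined with boundary elliptic estimates (the Dirichlet datum $B_n-\epsilon^\alpha$ of $v_\epsilon$ on $\partial\Omega_{i,2}$ being smooth and $\epsilon$-uniformly bounded) I obtain the interior gradient bound $|\nabla\tilde w_\epsilon(x)|\le C\big(d_{\delta,2}(\epsilon)-d(x,\partial\Omega_{i,2})\big)^{-1}+o(1)$ on that strip.

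\emph{Closing the bootstrap — the main obstacle.} The last step is to turn these $O(d_{\delta,2}(\epsilon))$-scale estimates into the desired $O(1)$-scale bound $d_{\delta,2}(\epsilon)\ge\rho_0$, and this is where the difficulty lies: a naive propagation only controls $\tilde w_\epsilon$ on a sub-strip of width proportional to $d_{\delta,2}(\epsilon)$ itself, hence merely reproduces the input. I would argue by contradiction/continuity: assuming $d_{\delta,2}(\epsilon)<\rho_0$, pick $z\in\overline{\Dg_\delta(\epsilon)}$ realizing this distance — so $|\tilde w_\epsilon(z)|\le\delta+\tfrac32\epsilon^\alpha$ while $\tilde w_\epsilon=b_2-\epsilon^\alpha\ge\delta_0-\epsilon^\alpha$ at the nearest point of $\partial\Omega_{i,2}$ — and compare $v_\epsilon$ on the strip with the harmonic measure of its inner boundary, exploiting that $\tilde w_\epsilon>0$ is subharmonic where positive. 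The strict gap $\delta_0-\delta>0$, the harmonicity of $B_n$, and the super‑exponential smallness of the error must be combined so that the comparison genuinely \emph{gains a scale} (this is precisely the role of the improved exponent $\epsilon^{-1-\alpha/2}$ over $\epsilon^{-1-\alpha}$), forcing $d_{\delta,2}(\epsilon)\ge\rho_0$; running the same argument at $\partial\Omega_{i,1}$ then finishes the proof via $d_\delta(\epsilon)=\max_j d_{\delta,j}(\epsilon)$.
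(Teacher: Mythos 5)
Your proposal does not close the proof: the decisive step is missing. You reduce the statement to propagating the bound $|\tilde w_\epsilon|\gtrsim\delta+c_\delta$ from $\partial\Omega_{i,2}$ to a strip of fixed width, and you yourself identify that the harmonic-comparison/bootstrap argument on the strip of width $d_{\delta,2}(\epsilon)$ ``merely reproduces the input''; the final paragraph only asserts that the gap $\delta_0-\delta$, the harmonicity of $B_n$ and the improved exponent ``must be combined so that the comparison genuinely gains a scale,'' without exhibiting any mechanism that actually does so. A harmonic-measure estimate on a region of diameter $O(d_{\delta,2}(\epsilon))$ cannot, by itself, produce information at scale $O(1)$, so as written the argument is circular at exactly the point where the proposition is nontrivial. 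There are also concrete problems with the intermediate step: \eqref{eq:121} is an $L^2$ bound weighted by $d(x,\Dg_\delta(\epsilon))$, so on the part of the strip within distance $O\bigl(\epsilon^{1+\alpha/2}\log(1/\epsilon)\bigr)$ of $\Dg_\delta(\epsilon)$ the quantity $\epsilon^{-2}|\psi_\epsilon|^2$ is not small at all (and elsewhere you would still need elliptic estimates to convert $L^2$ smallness of $\psi_\epsilon$ into smallness of $\|\epsilon^{-2}|\psi_\epsilon|^2\tilde w_\epsilon\|_{L^q}$), so the claim that $v_\epsilon$ is ``essentially harmonic'' on the whole strip is not justified near its inner edge, which is precisely where the comparison argument would have to be run.

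The paper's proof avoids all of this and is much shorter. From (\ref{eq:104}b), the supercurrent term $\Im(\bar\psi_\epsilon\nabla_{\epsilon^{-2}A_\epsilon}\psi_\epsilon)$ is exponentially small in $L^2(\Omega\setminus\Dg_\delta(\epsilon))$ by \eqref{eq:121} together with $|\psi_\epsilon|\le1$, so with \eqref{eq:113} one gets
\begin{equation*}
\|\nabla B_\epsilon\|_{L^2(\Omega\setminus\Dg_\delta(\epsilon))}\;\le\;\|\nabla\phi_\epsilon\|_{L^2(\Omega\setminus\Dg_\delta(\epsilon))}+Ce^{-1/\epsilon}\;\le\;\widehat C\,\epsilon^{-\alpha}\,.
\end{equation*}
On the other hand, by \eqref{eq:112} and the boundary values, $|B_\epsilon|$ is of size $\delta_0\epsilon^{-\alpha}$ (up to $O(1)$) on $\partial\Omega_i$ while $|B_\epsilon|<\delta\epsilon^{-\alpha}$ on $\Dg_\delta(\epsilon)$, so $B_\epsilon$ must vary by at least $(\delta_0-\delta)\epsilon^{-\alpha}$ across the region separating them; estimating this variation through $\nabla B_\epsilon$ and applying Cauchy--Schwarz over that region, whose thickness is $d_\delta$, gives $(\delta_0-\delta)\epsilon^{-\alpha}\le C\,d_\delta^{1/2}\,\epsilon^{-\alpha}$, i.e.\ $d_\delta\ge C_\delta$ with $C_\delta\sim(\delta_0-\delta)^2$, uniformly in $\epsilon$. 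Note that this uses only the already established estimates \eqref{eq:112}, \eqref{eq:113}, \eqref{eq:121} (the scale-$\epsilon$ bound \eqref{eq:119} is not even needed here), and no bootstrap or propagation-of-smallness step; if you want to salvage your approach you would need to supply the missing ``gain of scale'' argument, which is exactly what this integral estimate accomplishes.
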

\begin{proof}
  We begin by noticing that by (\ref{eq:104}b), \eqref{eq:113}, and
  \eqref{eq:121} we have
\begin{displaymath}
  \|\nabla B_\epsilon\|_{L^2(\Omega\setminus\Dg_\delta(\epsilon))} \leq
  \|\nabla\phi_\epsilon\|_{L^2(\Omega\setminus\Dg_\delta(\epsilon))}+Ce^{-1/\epsilon} \leq \widehat C \epsilon^{-\alpha} \,.
\end{displaymath}
Then, we write
\begin{displaymath}
  (\delta_0-\delta)\epsilon^{-\alpha} \leq \|\nabla B_\epsilon\|_{L^1(\Omega\setminus\Dg_\delta(\epsilon))} \leq Cd_\delta^{1/2}
  \|\nabla B_\epsilon\|_{L^2(\Omega\setminus\Dg_\delta(\epsilon))} \leq Cd_\delta^{1/2}\epsilon^{-\alpha}\,,
\end{displaymath}
from which \eqref{eq:123} readily follows.
\end{proof}
\begin{remark}
\label{rem:size}
  Note that by \eqref{eq:121} and  the above arguments, \eqref{eq:123}
  holds even if we use $d_\delta$ to measure the distance of $\Dg_\delta(\epsilon)$ from
  any point on $\partial\Omega$ where $B_\epsilon>\delta_1\epsilon^{-\alpha}$ with $\delta_1>\delta$.
\end{remark}

Proposition \ref{semiclassical} now follows from \eqref{eq:121} and \eqref{eq:123}.

\paragraph{\bf Acknowledgements}
The authors gratefully acknowledge Phuc Nguyen for introducing to them
Byun's result \cite{by07} and Monique Dauge for the idea behind the
proof of Lemma \ref{lem:Dirichlet-Neumann}. Y. Almog was partially
supported by NSF grant DMS-1109030 and by US-Israel BSF grant
no.~2010194.  B. Helffer was partially supported by the ANR programme
NOSEVOL and Simons foundation visiting fellow at the Newton Institute
in Cambridge during the completion of this article. X.B. Pan was
partially supported by the National Natural Science Foundation of
China grant no. 11171111 and no. 11431005. 
\vspace{2ex}

The authors declare that
they have no conflict of interest.


\begin{thebibliography}{1}
\bibitem{ag65}
{\sc S.~Agmon}, {\em Lectures on Elliptic Boundary Value Problems}, Prepared
  for publication by B. Frank Jones, Jr. with the assistance of George W.
  Batten, Jr. Van Nostrand Mathematical Studies, No. 2, D. Van Nostrand Co.,
  Inc., Princeton, N.J.-Toronto-London, 1965.

\bibitem{al02}
{\sc Y.~Almog}, {\em Non-linear surface
  superconductivity for type {II} superconductors in the large domain limit},
  Arch. Rat. Mech. Anal., 165 (2002), pp.~271--293.

\bibitem{alhe14}
{\sc Y.~Almog and B.~Helffer}, {\em Global {S}tability of the {N}ormal {S}tate
  of {S}uperconductors in the {P}resence of a {S}trong {E}lectric {C}urrent},
  Comm. Math. Phys., 330 (2014), pp.~1021--1094.

\bibitem{attar2014ground}
{\sc K.~Attar}, {\em  The ground state energy of the two dimensional
  Ginzburg--Landau functional with variable magnetic field},  Annales de
  l'Institut Henri Poincar\'e (C) Non Linear Analysis, Elsevier, (2014).

\bibitem{AHS} {\sc Y.~Avron, I. Herbst, and B. Simon,}
{\em  Schr\"odinger operators with magnetic fields I,}
Duke Math. J. 45   (1978),  pp.~847-883.


  \bibitem{Bon} {\sc V. Bonnaillie},
  \newblock {\em  On the fundamental state energy for a Schr\"odinger
    operator with magnetic field in domains with corners,} 
Asymptot. Anal., 41, 3-4  (2005),  pp. 215-258.

   \bibitem{boda06}
  {\sc V.~Bonnaillie-No{\"e}l and M.~Dauge}, {\em Asymptotics for the low-lying
    eigenstates of the {S}chr\"odinger operator with magnetic field near
    corners}, Ann. Henri Poincar\'e, 7 (2006), pp.~899--931.

\bibitem{by07}
{\sc S.-S. Byun}, {\em Optimal {$W^{1,p}$} regularity theory for parabolic
  equations in divergence form}, J. Evol. Equ., 7 (2007), pp.~415--428.

\bibitem{da14}
{\sc M.~Dauge}.
\newblock Personal communication (2014).

\bibitem{feta01}
{\sc E.~Feireisl and P.~Tak\'{a}\v{c}}, {\em Long-time stabilization of
  solutions to the {G}inzburg-{L}andau equations of superconductivity},
  Monatsh. Math., 133 (2001), pp.~197--221.

\bibitem{fohe09}
    {\sc S.~Fournais and B.~Helffer}, {\em Spectral Methods in Surface
      Superconductivity}, Birkh\"auser, 2009.


\bibitem{GT}{\sc  D. Gilbarg and N. Trudinger}, {\it Elliptic Partial
Differential Equations of Second Order}, 2nd edition, Springer,
Berlin, 1983.

\bibitem{gima12}
{\sc M.~Giaquinta and L.~Martinazzi}, {\em An introduction to the regularity
  theory for elliptic systems, harmonic maps and minimal graphs}, vol.~11 of
  Appunti. Scuola Normale Superiore di Pisa (Nuova Serie) [Lecture Notes.
  Scuola Normale Superiore di Pisa (New Series)], Edizioni della Normale, Pisa,
  second~ed., 2012.

\bibitem{heka15}
{\sc B.~Helffer and A.~Kachmar}, {\em The Ginzburg-Landau functional with
  vanishing magnetic field}.
\newblock to appear {\it Arch. Rat. Mech. Anal.}, 2014.

\bibitem{hemo96}
{\sc B.~Helffer and A.~Mohamed}, {\em Semiclassical analysis for the ground
  state energy of a {S}chr\"odinger operator with magnetic wells}, J. Funct.
  Anal., 138 (1996), pp.~40--81.

\bibitem{li96}
{\sc G.~M. Lieberman}, {\em Second order parabolic differential equations},
  World Scientific Publishing Co., Inc., River Edge, NJ, 1996.

\bibitem{lid97}
{\sc F.~H.~Lin and Q.~Du}, {\em Ginzburg--Landau vortices: dynamics,
  pinning, and hysteresis}, SIAM J. Math. Anal., 28, (1997),
pp. 1265-1293. 


\bibitem{lupa99}
{\sc K.~Lu and X.~Pan}, {\em Eigenvalue problems of {G}inzburg-{L}andau
  operator in bounded domains}, J. Math. Phys., 50 (1999), pp.~2647--2670.







\bibitem{ni59}
{\sc L.~Nirenberg}, {\em On elliptic partial differential equations}, Ann.
  Scuola Norm. Sup. Pisa (3), 13 (1959), pp.~115--162.


   \bibitem{Pan} {\sc X. Pan},
\newblock {\em Upper critical for superconductors with edges and corners,}
\newblock Calc. Var. PDE, 14 (4) (2002),
   pp.~447-482.

\bibitem{poetal99}
{\sc C.~P.~J. Poole, H.~A. Farach, and R.~J. Creswick}, {\em Handbook of
  Superconductivity}, Academic Press, 1999.

\bibitem{sase03}
{\sc E.~Sandier and S.~Serfaty}, {\em The decrease of bulk-superconductivity
  near the second critical field in the {G}inzburg-{L}andau model}, SIAM J.
  Math. Anal., 34 (2003), pp.~939--956.
\end{thebibliography}

\end{document}